\documentclass[margin=5in, 12pt]{article}

\usepackage[utf8]{inputenc}
\usepackage[affil-it]{authblk}
\usepackage[margin=1in]{geometry}
\usepackage{multirow}
\usepackage{natbib}
\usepackage{graphicx}
\usepackage{float}
\graphicspath{{figures/}{./}}
\usepackage{amsmath}
\usepackage{amssymb}
\usepackage{amsfonts}
\usepackage{amstext}
\usepackage{bbm}
\usepackage{subcaption}
\usepackage[dvipsnames]{xcolor}
\usepackage{newpxtext,newpxmath}
\usepackage{url}
\usepackage{bm}
\usepackage{amsthm}

\newtheorem{remark}{Remark}

\newtheorem{lemma}{Lemma}
\newtheorem{proposition}{Proposition}
\usepackage{booktabs}

\DeclareMathOperator{\E}{\mathbb{E}}
\DeclareMathOperator{\Var}{Var}

\newcommand{\Dtd}{\tilde{D}}

\title{\textbf{Partial Homogeneity in Staggered Difference-in-Differences}%
       }
\author{Parush Arora$^*$ and Rohan Wagle\thanks{Department of Economics, Ashoka University.}}
\date{}

\begin{document}
\maketitle

\begin{abstract}
  \noindent
  In staggered difference-in-differences (DiD) designs, units enter treatment at
  different calendar times, so the treatment effect is not a single number but a
  set of Cohort-Average Treatment effects on the Treated (CATTs), one per
  cohort-time cell. Estimating every CATT as its own parameter, as the standard
  fully flexible estimator does, is unbiased but inefficient when some of these
  effects are in fact equal, whereas pooling them all into a single two-way fixed effects
  (TWFE) coefficient is efficient but, whenever the heterogeneity is genuine, biased for the individual effects.
  We frame the choice between these extremes as a partition-selection
  problem on the cohort-time cells and address it with a Dirichlet
  Process (DP) mixture prior on the CATTs. The model favors parsimonious groupings
  without fixing their number, and a collapsed Gibbs sampler delivers point
  estimates, credible intervals that marginalize the unknown partition, and
  co-clustering probabilities for every pair of CATTs. With the error variance
  held fixed and a pairwise penalty placed on the partition, a maximum a
  posteriori (MAP) partition reduces to an
  $\ell_0$-penalized regression, connecting the Bayesian formulation to the
  homogeneity-pursuit literature.
  In a calibrated simulation, the model cuts the sampling variance of the
  cohort-time effects by 26--52\% relative to the fully flexible estimator, without
  the pooled estimator's bias, provided the distinct effects are separated enough
  to be recovered, and the posterior delivers near-nominal confidence-interval
  coverage by averaging over the unknown partition.
  In two applications the method recovers a precision-improving
  partial-homogeneity structure in one, where the cohort-time effects are
  genuinely heterogeneous, and reports that full pooling is adequate in the
  other, where they are not.
\end{abstract}

\bigskip
\noindent\textbf{Keywords:} difference-in-differences, staggered treatment, TWFE, model specification, Bayesian estimation, Gibbs sampler,
treatment effect heterogeneity, causal inference.

\bigskip
\noindent\textbf{JEL Codes:} C11, C14, C23, C52.

\newpage

\section{Introduction}
\label{sec:intro}

Difference-in-differences (DiD) has become one of the workhorses of empirical economics. In its canonical form, a treatment cohort and a control cohort are observed before and after a single policy change, and the researcher estimates a single average treatment effect (ATT or average treatment effect on treated). In a simple setting with two time periods, ATT is recovered by taking the difference between the change in the average outcome over time for the treatment and control cohorts, commonly referred to as a simple $2\times2$ DiD estimate. In a general setting with $N$ observations and $T$ time periods, the most popular way to estimate the ATT with a binary treatment indicator is the two-way fixed effects regression (TWFE):
\begin{equation}
\label{eq:rest_twfe}
  Y_{it} = \alpha_i + \lambda_t + \tau D_{it} + \varepsilon_{it},
\end{equation}
where $Y_{it}$ is the observed outcome for $i^{th}$ unit in time $t$, $\alpha_i$ is the unit fixed effect, $\lambda_t$ is the time fixed effect, $D_{it}$ is the treatment dummy and $\tau$ is the ATT. We will refer to Eq.~\ref{eq:rest_twfe} as pooled TWFE. The pooled TWFE estimator ($\hat\tau_{\text{pool}}$) is directly related to the $2\times 2$ DiD estimator, as it can be represented as a weighted average of all possible 2$\times$2 DiD estimates \citep{goodman2021}.

Modern policy evaluations, however, rarely fit this clean template. Minimum wage increases, Medicaid expansions, trade liberalizations, and school reforms typically roll out across states, cities, or firms in a staggered fashion, with different units adopting treatment at different calendar times. We update the notation so that $Y_{igt}$ and $D_{igt}$ denote the outcome and treatment indicator for unit $i$ in cohort $g$ at time $t$. The estimator $\hat\tau_{\text{pool}}$ is unbiased in staggered timing if the treatment effect for each $g\times t$ case (cohort-time specific ATT or CATT) is homogeneous.

The econometrics of staggered DiD has seen remarkable progress in recent years. \citet{dechaisemartin2020}, \citet{callaway2021}, \citet{sun2021}, \citet{goodman2021}, \citet{gardner2022two}, and
\citet{borusyak2024}, among others, have shown that the TWFE regression can produce a biased estimate of the ATT when there is heterogeneity among CATTs. \citet{dechaisemartin2023} and \citet{roth2023} survey this literature. The bias arises due to the inclusion of an already-treated cohort as a control, which violates the parallel trends assumption. \citet{goodman2021} showed that the decomposition of TWFE can contain negative weights when the treatment effects are heterogeneous. The recommended solution is to either drop the already-treated cohorts from the control or estimate the CATTs one coefficient per (cohort-time) cell and then aggregate them as desired. For the latter, \citet{wooldridge2025} discussed the flexible TWFE that allows interaction between the treatment dummy and the cohort-time dummy and recovers unbiased CATTs. Their estimate is identical to \citet{gardner2022two} and \citet{borusyak2024} in certain settings and more efficient than \citet{callaway2021} and \citet{sun2021}, which drop the already-treated cohorts from the control. These heterogeneity-robust estimators are efficient only under spherical errors. \citet{arora2026} show that efficiency under arbitrary within-unit serial correlation is attainable through an optimally weighted GMM in the space of clean comparisons.

This fully flexible approach solves the bias problem but introduces a new challenge, namely model specification. With $G$ treated cohorts and $T$ time
periods, there can be $K = \sum_{g=2}^{T} (T - g + 1)$ distinct CATT cells, where $g=2,3,\dotsc,T$ indexes the treated cohorts by the period in which each is first treated and $g = 0$ denotes the never-treated cohort, which contributes no post-treatment cells. Estimating all $K$ separately is unbiased, but may be inefficient if some CATTs share the same true effect.
Conversely, pooling all CATTs into a single parameter (classical TWFE) is
efficient but biased when effects genuinely differ across cohorts or periods.

The specification question is which CATTs can be pooled, and which
must be kept separate? This is a model selection problem on the space of
partitions of the $K$ CATT cells. Getting the partition right matters, because under
the true partition, the estimator is both unbiased and more efficient than the
fully flexible approach, exploiting the additional identifying restriction that
some effects are equal without imposing false restrictions on others.

Although the recent staggered-DiD literature has focused overwhelmingly on removing the bias of pooled TWFE, the complementary question of how to efficiently pool the resulting cohort-time effects has received comparatively little attention. The problem of grouping otherwise distinct coefficients that share a common value is, however, well studied in the broader panel-data and statistics literatures, and our approach draws on established machinery from both traditions. On the penalized-regression side, penalizing pairwise differences to recover latent groups underlies the classifier-Lasso of \citet{su2016} and the homogeneity-pursuit estimators of \citet{ke2015} and \citet{wang2018}, as well as the grouping-pursuit surface of \citet{shen2010}. Related ideas appear in the grouped fixed-effects approach of \citet{bonhomme2015} and in work on determining the number of latent groups \citep{lu2017}. Relative to these $\ell_1$-type fusion penalties, we use an $\ell_0$ penalty because it produces exact equality of coefficients rather than continuous shrinkage. On the Bayesian side, clustering coefficients through a Dirichlet Process mixture or an explicit partition prior is standard in Bayesian nonparametrics \citep{ferguson1973, antoniak1974, hartigan1990, barry1992}, and Bayesian nonparametric methods have also been applied directly to causal-effect estimation \citep{hill2011}. Our contribution is to bring these tools to bear on the specification problem in staggered DiD, to characterize the resulting bias-variance and identification trade-offs in that setting, and to connect the two traditions, since the $\ell_0$ fusion penalty is the fixed-variance maximum a posteriori estimator of a partition model in the same family as our Dirichlet Process model, differing from it only in the partition prior.

Our restriction is also related to, but distinct from, recent work that bounds the magnitude of treatment-effect heterogeneity rather than its structure. \citet{kwon2026} restrict the variance of conditional average treatment effects and conduct bias-aware sensitivity analysis over that bound, shrinking heterogeneous effects smoothly toward a common value; \citet{armstrong2025} adapt to misspecification of a restricted model, and \citet{armstrong2018} give the underlying optimal-inference theory; in a similar honest-inference spirit but for a different DiD assumption, \citet{rambachan2023} allow bounded violations of parallel trends. Where these papers impose a continuous bound and never assert that any two effects are exactly equal, we impose a discrete partial-homogeneity restriction and attempt to recover the grouping from the data. The two views are complementary, in that the continuous bound is agnostic about which effects coincide, whereas the discrete view is informative about the grouping structure itself, at the cost of a stronger assumption whose selection uncertainty must be confronted (a point we return to below and in the simulations).

This paper makes three contributions. First, we formalize the specification problem in staggered DiD as a partition-selection problem and characterize the bias and variance of the fully flexible, partially homogeneous, and fully pooled estimators under a partial-homogeneity data-generating process. Second, we propose a Bayesian solution, a Dirichlet Process (DP) mixture prior over the CATTs that assigns positive probability to every partition while favoring parsimonious groupings. A collapsed Gibbs sampler estimates the model, and the posterior delivers point estimates, credible intervals that marginalize the unknown partition, and co-clustering probabilities for every pair of CATTs. Third, we record a connection to the penalized-regression literature. With the error variance fixed and a flat prior on the group effects, the maximum a posteriori (MAP) partition under a pairwise partition prior (a proper prior in the same family as, but distinct from, the Dirichlet Process) is exactly an $\ell_0$-penalized least-squares estimator whose penalty equals that log prior. Under the Dirichlet Process prior itself the collapsed posterior instead carries determinant and shrinkage terms, and integrating the variance out gives a BIC-type criterion via a Schwarz approximation. This links the Bayesian formulation to $\ell_0$ homogeneity pursuit.

The DP prior is the standard nonparametric choice for models in which the number of components is unknown \citep{ferguson1973, antoniak1974}. \citet{escobar1995} show how DP mixture models can be used for density estimation with an unknown number of mixture components, and \citet{muller1996} extend the approach to regression. \citet{neal2000} develops efficient MCMC algorithms for posterior inference,
including the collapsed Gibbs sampler that allows for efficient drawing from the posterior (which we employ). \citet{muller2004} provide
a survey of nonparametric Bayesian methods for applied researchers. \citet{miller2018} study the finite mixture model with a prior on the number of components, which shares the spirit of our approach.

The remainder of the paper is organized as follows. Section~\ref{sec:problem} formalizes the specification problem, establishes the bias-variance trade-off under partial homogeneity.
Section~\ref{sec:bayes} develops the Dirichlet Process model, covering its specification, estimation, and inference, and records the $\ell_0$-penalized estimator that arises as a fixed-variance MAP within the same partition-model family.
Section~\ref{sec:sim} reports the simulation results, Section~\ref{sec:empirical_application} presents two empirical applications, and Section~\ref{sec:conc} concludes.
\section{The Framework}
\label{sec:problem}

\subsection{Setup and Assumptions}
Consider a balanced panel with $N$ units observed over $T$ time periods. We are interested in estimating the effect of a binary treatment, $D_{it} \in \{0,1\}$, on the outcome variable, $Y_{igt} \in \mathbb{R}$. Let $Y_{igt}(1)$ and $Y_{igt}(0)$ denote the potential outcomes with and without treatment, related to the observed outcome by $Y_{igt} = D_{it}\,Y_{igt}(1) + (1-D_{it})\,Y_{igt}(0)$. Assume we observe the sample $\{Y_{ig1},$ $\dotsc,Y_{igT}$ $,D_{ig1},$ $\dotsc,D_{igT}\}$ where $i = 1,\dotsc, N$ indexes units and $t = 1,\dotsc, T$ indexes time. Units belong to one of $g$ treated cohorts (indexed by treatment entry date $g \in \{2,3,\dotsc,T\} = \mathcal{G}$) or a never-treated group ($g=0$). Period 1 is assumed to be pre-treatment for all cohorts, which means $g\neq 1$. Let $N_g$ be the number of units in the cohort $g$ and thus $\sum_{g}N_g = N$. We make the following assumptions about the treatment process:

\begin{itemize}
    \item[] \textbf{Assumption 1} (Random Sample): $\{Y_{ig1},\dotsc,Y_{igT}$ $,D_{i1},\dotsc,D_{iT}\}_{i=1}^{N}$ are independent and identically distributed. Each sampled unit contributes its complete length-$T$ trajectory, consistent with the balanced panel maintained throughout.
    \item[] \textbf{Assumption 2} (Irreversibility of Treatment): It implies that once a unit is treated, that unit remains treated in the next period. Units do not ``forget'' about the treatment experience. If $D_{i,t-1} = 1$ then $D_{i,t} = 1$.
    \item[] \textbf{Assumption 3} (Parallel Trends): In the absence of treatment, the outcome follows the same trend across cohorts. Formally, for every pair of adjacent periods $(t-1,t)$, the expected change in the untreated potential outcome, $E\bigl[Y_{igt}(0) - Y_{ig,t-1}(0)\bigr]$, does not depend on the cohort $g$ (with $g=0$ the never-treated group). Because it is stated on the untreated potential outcome, the restriction is well defined in every period, including post-adoption periods $t \geq g$ where $Y_{igt}(0)$ is counterfactual, and it is unaffected by which cohorts have already adopted. The assumption may hold either unconditionally or only after conditioning on covariates; we allow the conditional version, that $E\bigl[Y_{igt}(0) - Y_{ig,t-1}(0)\mid X_{igt}\bigr]$ does not depend on $g$, which accommodates (1) covariate-specific trends in $Y_{igt}(0)$ and (2) different distributions of covariates across cohorts, provided the compared cohorts share common support in $X_{igt}$, without requiring it. We impose it at the unit level, though our estimator remains valid under the cohort-level version.
    \item[] \textbf{Assumption 4} (No Treatment Anticipation): There is no treatment effect in pre-treatment periods. The outcome $Y_{igt}$ in any period before $g$ is, on average, equal to the untreated outcome.
\end{itemize}
\subsection{The Specification Problem}
Under homogeneous effects the CATTs all equal the ATT and $\hat\tau_{\text{pool}}$ is unbiased. Under heterogeneity they differ, and the target ATT is a weighted average of the CATTs,
\begin{equation}
\label{eq:theta}
	\tau = \sum_{g}\sum_{t} w_{gt}\,\tau_{gt},
\end{equation}
where $\tau_{gt}$ is the CATT for cohort $g$ at calendar time $t$ and \citet{callaway2021} discusses the choice of weights. As reviewed in Section~\ref{sec:intro}, $\hat\tau_{\text{pool}}$ is then biased for $\tau$, because already-treated units enter as controls (the forbidden comparison) and some weights in the \citet{goodman2021} decomposition turn negative. The literature offers several unbiased alternatives, restricting the controls to never- or not-yet-treated units \citep{callaway2021, sun2021}, allowing treatment to switch on and off \citep{dechaisemartin2020}, or estimating one coefficient per cohort-time cell through the flexible and imputation estimators of \citet{wooldridge2025}, \citet{gardner2022two}, \citet{borusyak2024}, and \citet{liu2024practical}, which are unbiased and efficient under spherical errors, with \citet{arora2026} extending efficiency to arbitrary within-unit serial correlation. We build on the flexible TWFE of \citet{wooldridge2025}, which interacts the treatment dummy with the cohort-time dummies,
\begin{equation}
  Y_{igt} = \alpha_i + \lambda_t
    + \sum_{g \in \mathcal{G}} \sum_{t \geq g} \tau_{gt}\, D_{igt}
    + \varepsilon_{it},
  \label{eq:twfe_flex}
\end{equation}
and recovers unbiased CATTs $\hat\tau_{gt}$, from which the ATT is estimated as $\hat\tau_{\text{flex}} = \sum_{g}\sum_{t} w_{gt}\,\hat\tau_{gt}$.

Let $Y$ be the $NT \times 1$ vector of outcomes and $D = [D_1, \ldots, D_K]$ be the $NT \times K$ matrix of cohort-time dummies. Let $\tilde{Y} = Y - \bar{Y}_i - \bar{Y}_t + \bar{Y}$ be the vector of within-transformed outcomes and $\tilde{D} = [\tilde{D}_1, \ldots, \tilde{D}_K]$ be the matrix of within-transformed cohort-time dummies where $\tilde{D}_k = D_k - \bar{D}_{ki} - \bar{D}_{kt} + \bar{D}_k$. Here, $\bar Y_i$ and $\bar D_{ki}$ are the unit-specific averages over time, $\bar Y_t$ and $\bar D_{kt}$ are the time-specific averages over units and $\bar Y$ and $\bar D_k$ are the overall averages. Double-demeaning is the orthogonal projection onto the residual space of the
unit and time fixed effects, a subspace of dimension $r = NT - N - T + 1$.
Representing that space by an orthonormal basis $Q$ with $Q'Q = I_r$, and reading
$\tilde{Y}$ and $\tilde{D}$ in these $r$ coordinates, the within-transformed model
is
\begin{equation}
  \tilde{Y} = \Dtd\,\tau + \tilde{\varepsilon},
  \qquad
  \tilde{\varepsilon} \sim \bigl(\mathbf{0},\,\sigma^2 I_r\bigr),
  \label{eq:flex_dm}
\end{equation}
where $\tau = (\tau_1,\ldots,\tau_K)'$ collects all $K$ CATTs. The transformation
enters every estimator below only through the cross-products $\Dtd'\Dtd$ and
$\Dtd'\tilde{Y}$, which coincide in the double-demeaned and orthonormal
representations, so we use the two interchangeably.
Ordinary least squares on \eqref{eq:flex_dm} yields the fully
flexible estimator
\begin{equation}
  \hat{\tau}_{\text{flex}}
  = \bigl(\Dtd'\Dtd\bigr)^{-1}\Dtd'\tilde{Y} = [\hat{\tau}_{1,\text{flex}},\dotsc,\hat{\tau}_{K,\text{flex}}]' \qquad
  \Var\bigl(\hat{\bm{\tau}}_{\text{flex}}\bigr)
  = \sigma^2\bigl(\Dtd'\Dtd\bigr)^{-1}.
  \label{eq:flex_var}
\end{equation}
The $k^{th}$ diagonal element of $\sigma^2[(\Dtd'\Dtd)^{-1}]$ is the sampling variance of the $k^{th}$ CATT estimate. Because each CATT is estimated from the variation in its own
within-transformed indicator, this variance is governed
by how much identifying variation CATT $k$ possesses individually.
In large panels with many cohorts and long post-treatment windows, $K$
is large and the variation attributable to any single CATT cell is
correspondingly modest, leading to imprecise estimates even when the
overall sample size $NT$ is large.

The opposite extreme imposes a single common coefficient $\tau$ for every
cohort-time cell.
Writing $\tilde{D}_{\text{pool}} = \sum_{k=1}^K \tilde{D}_k$ for the
aggregate within-transformed treatment indicator, the fully pooled
estimator is
\begin{equation}
  \hat\tau_{\text{pool}}
  = \bigl(\tilde{D}_{\text{pool}}'\tilde{D}_{\text{pool}}\bigr)^{-1}
    \tilde{D}_{\text{pool}}'\tilde Y,
  \qquad
  \Var\!\bigl(\hat\tau_{\text{pool}}\bigr)
  = \frac{\sigma^2}{\bigl\|\tilde{D}_{\text{pool}}\bigr\|^2}.
  \label{eq:pool_var}
\end{equation}
Since $\tilde{D}_{\text{pool}}$ aggregates the variation of all $K$ cells,
$\|\tilde{D}_{\text{pool}}\|^2 > \|\tilde{D}_k\|^2$ for any individual
$k$, and the pooled estimator is correspondingly precise.
The cost is bias. The pooled estimator is consistent for a weighted
average of the true CATTs, $\sum_{k=1}^K w_k \tau_k^*$, not for any individual $\tau_k^*$.
Formally, the bias of $\hat\tau_{\text{pool}}$ as an estimator of $\tau_k^*$ is
\begin{equation}
  \E\bigl[\hat\tau_{\text{pool}}\bigr] - \tau_k^*
  = \sum_{j=1}^K w_j\bigl(\tau_j^* - \tau_k^*\bigr),
  \qquad
  w_j = \frac{\tilde{D}_{\text{pool}}'\tilde{D}_j}{\tilde{D}_{\text{pool}}'\tilde{D}_{\text{pool}}},
  \label{eq:pool_bias}
\end{equation}
the implicit weights pooled TWFE places on the cohort-time effects, which sum to
one. When the within-transformed dummies are mutually orthogonal these reduce to
the positive norm weights $w_j = \|\tilde{D}_j\|^2/\sum_{\ell}\|\tilde{D}_\ell\|^2$;
in general the cross-products $\tilde{D}_{\text{pool}}'\tilde{D}_j$ left by the
shared fixed effects need not be positive, and $w_j$ then coincides with the
Goodman--Bacon weighting, which can turn negative where already-treated units
serve as controls.
The bias in \eqref{eq:pool_bias} for cell $k$ equals $\sum_{j} w_j\tau_j^* - \tau_k^*$, the gap between the weighted average and cell $k$; it is zero for every cell simultaneously only when all true CATTs are equal, the case in which the pooled model is correctly specified, though for a particular cell it can vanish when that cell's effect happens to coincide with the weighted average. Under genuine heterogeneity a single pooled coefficient cannot reproduce the full CATT vector, whatever weighted average it targets, so it is uninformative for heterogeneity-aware policy analysis even where it happens to be unbiased for an individual cell. The estimators we propose pool clean cohort-time effects rather than raw observations, so any grouping short of full pooling combines only clean comparisons; away from that corner the specification problem we study is one of aggregation among clean effects rather than the negative-weight problem of raw TWFE, with which the fully pooled benchmark coincides.

The flexible specification \eqref{eq:twfe_flex} is an ordinary linear regression, so covariate adjustment needs no new machinery. To accommodate the conditional version of Assumption~3, we augment it with covariates $X_{igt}$, time-varying controls or covariate-cohort interactions that permit covariate-specific trends,
\begin{equation}
  Y_{igt} = \alpha_i + \lambda_t + X_{igt}'\beta + \sum_{g \in \mathcal{G}} \sum_{t \geq g} \tau_{gt}\, D_{igt} + \varepsilon_{it}.
  \label{eq:twfe_cov}
\end{equation}
By the Frisch--Waugh--Lovell theorem the CATT estimates are identical whether $X$ enters the regression directly or is partialled out beforehand. Writing $M$ for the residual-maker that projects off the unit and time fixed effects together with $X$, and redefining the within-transformed quantities as
\begin{equation}
  \tilde{Y} = M Y, \qquad \tilde{D}_k = M D_k,
  \label{eq:resid_cov}
\end{equation}
which generalizes the double-demeaning of \eqref{eq:flex_dm} to residualize on the covariates as well, the flexible estimator \eqref{eq:flex_var}, the Bayesian model of Section~\ref{sec:bayes}, and the $\ell_0$-PH estimator all apply verbatim with $\tilde{Y}$ and $\tilde{D}$ read as these covariate-residualized quantities. Covariate adjustment thus enters entirely through the within transformation and leaves the partition-selection problem unchanged.

\bigskip

\subsection{Partial Homogeneity}
Let the true parameter vector $\tau = (\tau_1,\ldots,\tau_K)'$ exhibit
\emph{partial homogeneity}: there is a partition
$\mathcal{P} = \{C_1,\ldots,C_m\}$ of $\{1,\ldots,K\}$ into $m \leq K$ groups with
\begin{equation}
  \tau_k = \phi_p \quad \text{for all } k \in C_p, \qquad p = 1,\ldots,m,
  \label{eq:ph_partition}
\end{equation}
where the group effects $\phi_1,\ldots,\phi_m$ are distinct. The groups may have
arbitrary sizes $|C_1|,\ldots,|C_m|$, and we call $\mathcal{P}$ the true
partition. Partial homogeneity is the regime $m < K$, where the true model has
only $m$ distinct parameters and the fully flexible model is over-parameterized
by $K - m$. Nothing below restricts the shape of $\mathcal{P}$; the efficiency
gain accrues to every group $C_p$ with $|C_p| \geq 2$, and a singleton group
contributes none. (The special case of one large group together with singletons
is the sharpest illustration but is not assumed.)

To estimate the correctly specified model under $\mathcal{P}$, define the group
regressors $\Dtd_p = \sum_{k \in C_p} \tilde{D}_{k}$ and collect them into the
$NT \times m$ matrix $\Dtd^* = [\tilde{D}_1, \ldots, \tilde{D}_m]$, where for a
singleton $C_p = \{k_p\}$ we have $\Dtd_p = \tilde{D}_{k_p}$. We assume a
homoskedastic error and proceed with the restricted model
\begin{equation}
  \tilde Y = \Dtd^*\phi + \tilde{\varepsilon},
  \qquad
  \tilde{\varepsilon} \sim \bigl(0,\,\sigma^2 I_r\bigr),
  \label{eq:within}
\end{equation}
whose OLS estimator is
\begin{equation}
  \hat{\phi}_{\text{PH}} = \bigl(\Dtd^{*\prime}\Dtd^*\bigr)^{-1}\Dtd^{*\prime}\tilde Y ,
  \qquad
  \Var\!\bigl(\hat{\phi}\bigr)
  = \sigma^2\bigl(\Dtd^{*\prime}\Dtd^*\bigr)^{-1}.
  \label{eq:oracle_var}
\end{equation}
The $p$-th diagonal element of $\sigma^2(\Dtd^{*\prime}\Dtd^*)^{-1}$ governs the
precision of $\hat\phi_p$, the estimate of the common effect of group $C_p$. For
any $k \in C_p$ the estimate of $\tau_k = \phi_p$ is $\hat\phi_p$, which pools the
identifying variation of every cell in $C_p$, whereas the flexible estimator uses
only the variation in $\tilde{D}_k$. This difference in the effective information
set is the source of the variance gap. We call $\hat{\phi}_{\text{PH}}$ the
partial homogeneity (PH) estimator for the rest of the paper.

We now establish that, group by group, the PH estimator dominates the fully
flexible estimator. Because both are unbiased for $\phi_p$ under the true model,
the Gauss--Markov theorem determines which is more efficient.

\begin{proposition}[PH efficiency dominance]
  \label{prop:oracle_dom}
  Under the true partition $\mathcal{P}$, for every group $C_p$ and every
  $k \in C_p$,
  \begin{equation}
    \Var\!\bigl(\hat\phi_{p}\bigr)
    \;\leq\;
    \Var\!\bigl(\hat\tau_{k,\text{flex}}\bigr),
    \label{eq:var_ineq}
  \end{equation}
  with strict inequality whenever $|C_p| \geq 2$ and the pooled cells contribute
  identifying variation about $\phi_p$ beyond that in $\tilde D_k$ (in particular,
  always under orthogonality with $\|\tilde D_k\|^2 > 0$); under orthogonality a
  singleton group attains equality.
\end{proposition}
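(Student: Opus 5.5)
The plan is to cast the comparison as a Gauss--Markov statement inside the flexible model. Under the true partition, $\tau = A\phi$, where $A$ is the $K\times m$ incidence matrix with $A_{kp}=1$ if $k\in C_p$, so that $\Dtd^* = \Dtd A$. Fix $k\in C_p$ and let $e_k$ and $e_p$ be unit vectors. The flexible estimate $\hat\tau_{k,\text{flex}} = e_k'(\Dtd'\Dtd)^{-1}\Dtd'\tilde Y$ is then a linear estimator of $\phi_p$. It is unbiased under the restricted model \eqref{eq:within}, because its mean is $e_k'A\phi=\phi_p$. Since $\hat\phi_p$ is the OLS estimator of $\phi_p$ in that model with spherical errors, Gauss--Markov gives \eqref{eq:var_ineq} at once. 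I would then make the inequality explicit. With $P=(\Dtd'\Dtd)^{-1}$ and $P_A = A(A'\Dtd'\Dtd A)^{-1}A'$, the difference $\sigma^2(P - P_A)$ is positive semidefinite. This follows because $(\Dtd'\Dtd)^{1/2}(P-P_A)(\Dtd'\Dtd)^{1/2}$ equals $I$ minus the orthogonal projection onto the column space of $(\Dtd'\Dtd)^{1/2}A$. Taking the $(k,k)$ entry, and noting $e_k'P_Ae_k = e_p'(\Dtd^{*\prime}\Dtd^*)^{-1}e_p$, yields the inequality.

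For strictness, I would write the gap as $\sigma^2\|(I-\Pi_A)(\Dtd'\Dtd)^{-1/2}e_k\|^2$, where $\Pi_A$ is the projection just described. Equality holds iff $(\Dtd'\Dtd)^{-1/2}e_k$ lies in the span of $(\Dtd'\Dtd)^{1/2}A$, that is, iff $e_k = \Dtd'\Dtd A c$ for some $c$. In words, the flexible estimator's weight vector already lies in the restricted design's column space. This is the precise meaning I would assign to "the pooled cells contribute no identifying variation beyond $\tilde D_k$". The general strictness claim is then its negation. I expect the main obstacle to be phrasing this condition faithfully to the statement's informal wording, so I would state it in this algebraic form and then specialize.

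Under orthogonality, $\Dtd'\Dtd = \operatorname{diag}(\|\tilde D_j\|^2)$. Then $\Var(\hat\tau_{k,\text{flex}}) = \sigma^2/\|\tilde D_k\|^2$. The matrix $\Dtd^{*\prime}\Dtd^*$ is also diagonal, so $\Var(\hat\phi_p) = \sigma^2/\sum_{j\in C_p}\|\tilde D_j\|^2$. If $|C_p|\ge 2$ the denominator gains at least one further term. I would check that this strictness needs a partner cell with $\|\tilde D_j\|^2>0$, which holds whenever that cell is estimable. I would read the statement's positivity condition as covering this, and remark on it. A singleton $C_p=\{k\}$ gives identical expressions, hence equality. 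This completes the special cases.
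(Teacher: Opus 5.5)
Your proposal is correct and follows the paper's argument: under $\mathcal{P}$, $\hat\tau_{k,\text{flex}}$ is a linear unbiased estimator of $\phi_p$, so Gauss--Markov gives the weak inequality, and the orthogonal formulas settle the singleton and $|C_p|\geq 2$ cases. Your explicit gap $\sigma^2\|(I-\Pi_A)(\Dtd'\Dtd)^{-1/2}e_k\|^2$ sharpens the paper's strictness criterion (that $\hat\phi_p\neq\hat\tau_{k,\text{flex}}$ with positive probability, i.e.\ the restriction binds) into the exact design condition $e_k\notin\operatorname{col}(\Dtd'\Dtd A)$, and your remark that strictness under orthogonality needs a partner cell with positive norm matches the paper's own caveat about unidentified cells.
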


\begin{proof}
Under $\mathcal{P}$ we have $\tau_k = \phi_p$ for $k \in C_p$, so
$\hat\tau_{k,\text{flex}}$ is a linear, unbiased estimator of $\phi_p$. By the
Gauss--Markov theorem, $\hat\phi_p$ is the best linear unbiased estimator (BLUE)
of $\phi_p$ among all linear functions of $\tilde Y$, hence
$\Var(\hat\tau_{k,\text{flex}}) \geq \Var(\hat\phi_p)$; this holds for any design.
The inequality is strict whenever $\hat\phi_p \neq \hat\tau_{k,\text{flex}}$ with
positive probability, that is whenever the partition imposes a binding restriction
on the estimation of cell $k$. This holds whenever the other cells in $C_p$ carry
identifying variation about $\phi_p$ beyond that in $\tilde D_k$---generically, and
always under orthogonality with $\|\tilde D_k\|^2 > 0$---but it can fail if a group
member contributes no independent variation (a collinear or unidentified cell, as
in Appendix~\ref{app:identification}), in which case pooling that cell leaves
$\hat\tau_{k,\text{flex}}$ unchanged and the dominance is only weak. Under
orthogonality a singleton group imposes no such restriction and the two
estimators coincide; in a non-orthogonal design, however, pooling the remaining
groups can lower $\Var(\hat\phi_p)$ even for a singleton, so the equality is
special to the orthogonal case.
\end{proof}

\noindent The result holds for every group at once, so the total efficiency gain
is the sum of the within-group gains. Under orthogonality only groups of size at
least two contribute; in non-orthogonal designs even a singleton can gain, since
the restricted estimator borrows strength across the correlated cells
(Section~\ref{sec:cs_application}).

To make the variance reduction explicit, we work under the additional assumption
that the within-transformed cohort-time dummies are mutually orthogonal,
$\tilde{D}_j'\tilde{D}_k = 0$ for $j \neq k$, which holds approximately for
balanced panels in which each unit belongs to exactly one cohort and which we
impose exactly to obtain closed-form expressions. Under orthogonality,
$\Dtd'\Dtd = \mathrm{diag}(n_1,\ldots,n_K)$ with $n_k = \|\tilde{D}_k\|^2$ the
effective sample size for CATT $k$. The flexible variance is then
\begin{equation}
  \Var\!\bigl(\hat\tau_{k,\text{flex}}\bigr)
  = \frac{\sigma^2}{n_k},
  \qquad k = 1,\ldots,K,
  \label{eq:flex_var_orth}
\end{equation}
while $\|\tilde{D}_p\|^2 = \sum_{k\in C_p} n_k$, so the PH variance for group
$C_p$ is
\begin{equation}
  \Var\!\bigl(\hat\phi_{p}\bigr)
  = \frac{\sigma^2}{\displaystyle\sum_{k\in C_p} n_k}.
  \label{eq:oracle_var_orth}
\end{equation}
Taking the ratio for any $k \in C_p$,
\begin{equation}
  \frac{\Var(\hat\phi_{p})}{\Var(\hat\tau_{k,\text{flex}})}
  = \frac{n_k}{\displaystyle\sum_{j\in C_p} n_j}
  \;\leq\; 1,
  \label{eq:var_ratio}
\end{equation}
strictly less than one whenever $|C_p| \geq 2$ and decreasing in the effective
sample sizes of the other cells in the group. In the balanced case $n_k = n$ the
ratio is $1/|C_p|$, so the PH estimator is exactly $|C_p|$ times more efficient
than the flexible estimator for every cell in group $C_p$. The gain grows with
group size, as pooling $|C_p|$ cells concentrates $|C_p|$ times as much effective
sample size onto a single coefficient.

The fully pooled estimator lowers the variance further by imposing a single
coefficient across all $K$ cells, at the cost of bias in every group. Under
Eq.~\ref{eq:within} with orthogonal dummies its probability limit is
\begin{equation}
  \E\bigl[\hat\tau_{\text{pool}}\bigr]
  = \sum_{k=1}^K w_k\,\tau_k
  = \sum_{p=1}^m W_p\,\phi_p,
  \qquad
  w_k = \frac{n_k}{\sum_j n_j}, \quad W_p = \sum_{k\in C_p} w_k,
  \label{eq:pool_plim}
\end{equation}
where $W_p$ is the total weight of group $C_p$. For any cell $k \in C_p$ the bias
is therefore
\begin{equation}
  \E\bigl[\hat\tau_{\text{pool}}\bigr] - \phi_{p}
  = \sum_{q \neq p} W_q\,\bigl(\phi_q - \phi_p\bigr),
  \label{eq:pool_bias_hetero}
\end{equation}
a weight-averaged sum of the gaps between group $p$ and every other group, equal
to $\sum_q W_q\phi_q - \phi_p$. It is non-zero for generic $\phi$ and does not
vanish as $n \to \infty$, but it is zero for a particular group whenever $\phi_p$
equals the weight-averaged group effect $\sum_q W_q\phi_q$, which can occur through
cancellation of groups above and below $\phi_p$. The signals of the other groups bleed into the pooled
estimate, so $\phi_p$ is estimated as a distorted mixture of all $m$ true effects.

The practical challenge is that $\mathcal{P}$ is unknown. A researcher who uses
the fully flexible estimator by default leaves $K - m$ degrees of freedom on the
table and reports unnecessarily wide intervals, while one who collapses to a
single pooled coefficient introduces bias that cannot be detected without further
structure. What is needed is a data-driven procedure that recovers $\mathcal{P}$,
or a close approximation, from the data. The $\ell_0$-penalized estimator and the
Bayesian Dirichlet Process approach of the following sections provide two
principled answers.\footnote{Partial homogeneity also aids identification under limited overlap. A cohort-time cell with no clean comparison, and hence no independent identifying variation, can inherit the common effect of its group whenever that group contains an identified cell, though this is identification by assumption, untestable for the very cell that lacks overlap. Appendix~\ref{app:identification} gives the formal statement.}

\bigskip
\noindent Table~\ref{tab:notation} collects the notation used throughout the paper for reference.

\begin{table}[H]
  \centering
  \caption{Notation used throughout the paper.}
  \label{tab:notation}
  \small
  \begin{tabular}{ll}
    \toprule
    Symbol & Meaning \\
    \midrule
    $N,\ T$ & number of units; number of time periods \\
    $g$ & cohort, i.e.\ period of first treatment ($g=0$: never-treated) \\
    $\mathcal{G}=\{2,\dots,T\}$ & set of treated cohorts \\
    $K$ & number of post-treatment cohort-time (CATT) cells \\
    $\tau_{gt},\ \tau_k$ & CATT of cohort $g$ at time $t$; generic cell effect \\
    $\tau=(\tau_1,\dots,\tau_K)'$ & vector of all $K$ CATTs \\
    $\tilde{Y},\ \tilde{D}$ & within-transformed outcome; full $r\times K$ design \\
    $\tilde{D}_k,\ n_k=\|\tilde{D}_k\|^2$ & column and effective sample size of cell $k$ \\
    $r=NT-N-T+1$ & dimension of the within (residual) space \\
    $\mathcal{P}=\{C_1,\dots,C_m\}$ & partition of the $K$ cells into $m$ groups \\
    $\phi=(\phi_1,\dots,\phi_m)'$ & group effects ($\tau_k=\phi_p$ for $k\in C_p$) \\
    $\tilde{D}^{*}=\tilde{D}_{\mathcal{P}}$ & grouped $r\times m$ design under $\mathcal{P}$ \\
    $\hat\tau_{\text{flex}},\ \hat\tau_{\text{pool}},\ \hat\phi_{\text{PH}}$ & flexible, pooled, and partial-homogeneity estimators \\
    $\alpha$;\ \ $G_0=\mathcal{N}(\mu_0,\sigma_0^2)$ & DP concentration; DP base measure \\
    $\kappa=\sigma_0^2/\sigma^2$ & prior-to-error variance ratio \\
    $c(\mathcal{P})$;\ \ $\lambda$ & number of cross-group CATT pairs; $\ell_0$ penalty \\
    $\hat\Sigma$ & first-stage covariance of $\hat\tau$ (applications) \\
    $\hat\pi_{jk}$ & posterior co-clustering probability of cells $j,k$ \\
    \bottomrule
  \end{tabular}
\end{table}

  \section{A Bayesian Model for Partition Selection}\label{sec:bayes}
  We develop the Dirichlet Process (DP) mixture model for the CATTs, specifying it in Section~\ref{sec:bayes:model}, treating its error variance and the fixed-variance $\ell_0$ maximum-a-posteriori estimator in Section~\ref{sec:bayes:sigmafixed}, and estimating it with a collapsed Gibbs sampler in Section~\ref{sec:bayes:gibbs}.

  \subsection{Model Specification}\label{sec:bayes:model}

Recall the within-transformed model from \eqref{eq:within},
  \begin{equation}
    \tilde{Y} \;=\; \tilde{D}\,\tau \;+\; \tilde{\varepsilon}, \qquad \tilde{\varepsilon} \mid \tilde{D} \sim
  N\!\left(0,\,\sigma^{2} I_r\right).
    \label{eq:bayes:lik}
  \end{equation}
We treat the $\tau_{k}$ as exchangeable draws from an unknown mixing distribution $G$ and place a Dirichlet Process prior on $G$:
  \begin{align}
    G &\;\sim\; \operatorname{DP}(\alpha,\, G_{0}), \label{eq:bayes:dp}\\
    \tau_{k} \mid G &\;\overset{\text{iid}}{\sim}\; G, \qquad k = 1,\dots,K, \label{eq:bayes:tau}\\
    G_{0} &\;=\; \mathcal{N}\!\left(\mu_{0},\,\sigma_{0}^{2}\right). \label{eq:bayes:base}
  \end{align}
  Following \citet{ferguson1973}, draws from a Dirichlet Process are almost surely discrete, so realisations of $G$ assign
  positive probability to ties among the $\tau_{k}$. These ties induce a random clustering of the $K$ CATTs into $m \le K$
  distinct values, which is precisely the partition structure of interest. The base measure $G_{0}$ governs the location of
  cluster effects, and its variance $\sigma_{0}^{2}$ encodes the analyst's prior dispersion of treatment effects across distinct
  groups. In the absence of prior information we recommend a diffuse choice ($\sigma_{0}^{2}$ large relative to $\sigma^{2}$). The concentration parameter $\alpha > 0$ controls the prior expected number of clusters,
  \begin{equation}
    \mathbb{E}[m \mid \alpha, K] \;\approx\; \alpha \log\!\left(1 + K/\alpha\right),
    \label{eq:bayes:mexp}
  \end{equation}
  with $\alpha \to 0$ favoring the fully pooled specification and $\alpha \to \infty$ favoring the fully flexible
  specification. Marginalizing $G$ in \eqref{eq:bayes:dp}--\eqref{eq:bayes:tau} yields a P\'olya-urn predictive distribution for the cluster
  labels $z = (z_{1},\dots,z_{K})$, commonly referred to as the Chinese Restaurant Process (CRP). Conditional on the previous
  $k-1$ assignments,
  \begin{equation}
    \Pr(z_{k} = c \mid z_{1},\dots,z_{k-1}) \;=\;
    \begin{cases}
      \dfrac{n_{c}}{k - 1 + \alpha}, & \text{cluster $c$ has $n_{c}$ members,}\\[6pt]
      \dfrac{\alpha}{k - 1 + \alpha}, & \text{$c$ is a new cluster.}
    \end{cases}
\label{eq:bayes:crp}
\end{equation}
The CRP exhibits the rich-get-richer property, whereby clusters with more members are more likely to attract additional CATTs, but a fresh cluster can always be opened with weight proportional to $\alpha$. Equivalently, the marginal prior probability of any partition $\mathcal{P} = \{C_{1},\dots,C_{m}\}$ of $\{1,\dots,K\}$ takes the closed form
  \begin{equation}
    \Pr(\mathcal{P}) \;=\; \frac{\alpha^{m}\,\prod_{l=1}^{m}\bigl(|C_{l}| - 1\bigr)!}{\alpha^{(K)}}, \qquad \alpha^{(K)} =
  \alpha(\alpha+1)\cdots(\alpha+K-1),
    \label{eq:bayes:partprior}
  \end{equation}
  which depends on $\mathcal{P}$ only through the cluster sizes and the number of clusters. Equation~\eqref{eq:bayes:partprior}
  delivers the partition prior we combine with the data likelihood below.

The complete hierarchical model places a conjugate prior on the error variance as well. Under a partition $\mathcal{P}$, write $\tilde{D}^{*}=\tilde{D}_{\mathcal{P}}$ for the $r\times m$ grouped design of Section~\ref{sec:problem}, whose $m$ columns sum the within-transformed cell columns of each group, so that $\tilde{D}$ denotes the full $K$-column design and $\tilde{D}^{*}$ its partition-collapsed form:
  \begin{align}
    \tilde{y} \mid \mathcal{P}, \phi, \sigma^{2} &\;\sim\; \mathcal{N}\!\left(\tilde{D}^{*}\,\phi,\,\sigma^{2} I_r\right),
  \label{eq:bayes:cond}\\
    \phi \mid \mathcal{P} &\;\sim\; \mathcal{N}\!\left(\mu_{0}\mathbf{1}_{m},\,\sigma_{0}^{2} I_{m}\right), \label{eq:bayes:phi}\\
    \sigma^{2} &\;\sim\; \mathrm{Inv\text{-}Gamma}(a_{0}, b_{0}), \label{eq:bayes:sigma}
  \end{align}
  so that no quantity is fixed by plug-in and the specification is a complete Bayesian model. We take this full model, with $\sigma^{2}$ random, as our primary specification throughout, and it is what the simulations and applications report. The inverse-gamma prior is conditionally conjugate, which keeps the collapsed sampler tractable. In an application the sampler needs only the first-stage cohort-time estimates and their joint covariance, entering through $\tilde{D}^{*\prime}\tilde{D}^{*}$ and $\tilde{D}^{*\prime}\tilde{Y}$, rather than the micro panel; Lemma~\ref{lem:sufficiency} makes this two-stage reading precise and is how the empirical applications of Section~\ref{sec:empirical_application} are run.

  At any value of $\sigma^{2}$, the value the sampler conditions on at a given sweep, the prior on $\phi$ is conjugate to the Gaussian likelihood, so the group effects integrate out analytically.
  Standard calculations yield
  \begin{equation}
    \tilde{y} \mid \mathcal{P} \;\sim\; \mathcal{N}\!\left(\mu_{0}\,\tilde{D}^{*}\,\mathbf{1}_{m},\; \sigma^{2} I_r +
  \sigma_{0}^{2}\,\tilde{D}^{*}\,\tilde{D}^{*\prime}\right).
    \label{eq:bayes:marg}
  \end{equation}
  Direct evaluation of the density implied by \eqref{eq:bayes:marg} would require inverting the $r \times r$ covariance matrix,
   which is computationally prohibitive in panels of empirical scale. The matrix determinant lemma and the Woodbury identity
  reduce the calculation to operations on $m \times m$ matrices. Letting $\kappa = \sigma_{0}^{2}/\sigma^{2}$, writing $\tilde{Y}_{0} = \tilde{Y} - \mu_{0}\,\tilde{D}^{*}\mathbf{1}_{m}$ for the outcome
  centered at the prior mean (its $\tilde{D}^{*}\mathbf{1}_{m}$ is the total treatment indicator, common to every $\mathcal{P}$),
  and dropping additive constants in $\mathcal{P}$, the log marginal likelihood admits the form
  \begin{equation}
    \log p(\tilde{y} \mid \mathcal{P}) \;\propto\; -\tfrac{1}{2}\,\log\bigl|\,I_{m} + \kappa\,\tilde{D}^{*\prime}\tilde{D}^{*}\,\bigr| \;+\;
  \frac{\kappa}{2\sigma^{2}}\, \bigl(\tilde{D}^{*\prime}\tilde{Y}_{0}\bigr)'\!\bigl(I_{m} +
  \kappa\,\tilde{D}^{*\prime}\tilde{D}^{*}\bigr)^{-1}\!\bigl(\tilde{D}^{*\prime}\tilde{Y}_{0}\bigr).
    \label{eq:bayes:logmarg}
  \end{equation}
  The posterior of $\phi$ given $\mathcal{P}$ and $\sigma^{2}$ is $N(\mu^{\ast},\Sigma^{\ast})$ with
  \begin{equation}
    \Sigma^{\ast} \;=\; \left(\tilde{D}^{*\prime}\tilde{D}^{*}/\sigma^{2} + I_{m}/\sigma_{0}^{2}\right)^{\!-1}, \qquad
  \mu^{\ast} \;=\; \Sigma^{\ast}\!\left(\tilde{D}^{*\prime}\tilde{Y}/\sigma^{2} +
  \mu_{0}\mathbf{1}_{m}/\sigma_{0}^{2}\right).
    \label{eq:bayes:phipost}
  \end{equation}
  The collapsed Gibbs sampler (Section~\ref{sec:bayes:gibbs}) draws $\sigma^{2}$ from its inverse-gamma full conditional each sweep and evaluates \eqref{eq:bayes:logmarg}--\eqref{eq:bayes:phipost} at that draw.

  \begin{remark}[The design covariance]
  \label{rem:sigma}
  The grouped posterior \eqref{eq:bayes:phipost} must be built from the exact within-design
  cross-product $\tilde D^{\prime}\tilde D$ of the full flexible model (its grouped version being $\tilde D^{*\prime}\tilde D^{*}$). The cohort-time estimates
  $\hat\tau_{k}$ are jointly $N(\tau, \sigma^2(\tilde D^{\prime}\tilde D)^{-1})$
  and are correlated through the shared fixed effects. Treating them as
  independent understates the posterior variance of linear aggregates such as the
  overall ATT, in our design by a factor of about $3.5$, and produces
  overconfident, undercovering intervals. The model above already uses this exact
  cross-product; we flag the point because the diagonal shortcut is
  tempting and its effect on coverage is large.
  \end{remark}

  \subsection{Fixed and Random $\sigma^{2}$}\label{sec:bayes:sigmafixed}

  Under the full random-$\sigma^{2}$ specification of Section~\ref{sec:bayes:model}, integrating $(\phi,\sigma^{2})$ out of a partition's marginal likelihood gives a scale mixture of Gaussians. Because the base-measure variance $\sigma_{0}^{2}$ is not scaled by $\sigma^{2}$, the marginal covariance is $\sigma^{2} I_r + \sigma_{0}^{2}\tilde D^{*}\tilde D^{*\prime}$ and the mixture is not a standard multivariate-$t$; nonetheless a Schwarz (Laplace) approximation to its log marginal density shows that the MAP partition approximately minimizes a BIC-type criterion on the log residual sum of squares,
  \begin{equation}
    \mathcal{P}^{\mathrm{MAP}} \;\approx\; \arg\min_{\mathcal{P}}\; NT\log\!\bigl(\mathrm{RSS}(\mathcal{P})/NT\bigr) \;+\; m\log(NT) \;-\; 2\log\Pr(\mathcal{P}),
    \label{eq:bic}
  \end{equation}
  so the fit enters on the log scale with $\sigma^{2}$ profiled out, the Occam term $m\log(NT)$ comes from integrating the $m$ group effects and grows at the Schwarz rate, and $\Pr(\mathcal{P})$ is the CRP prior \eqref{eq:bayes:partprior}; the expansion drops terms of order $O(1)$ in $NT$ (Remark~\ref{rem:map}). Holding $\sigma^{2}$ fixed at the plug-in $\hat\sigma^{2} = \mathrm{RSS}_{\text{flex}}/(NT - N - T + 1 - K)$, the residual variance from the fully flexible regression \eqref{eq:twfe_flex}, is a useful special case. With $\sigma^{2}$ constant the model is Gaussian, the group-effect posterior \eqref{eq:bayes:phipost} is Normal, and the sampler draws partitions from the Gaussian marginal likelihood \eqref{eq:bayes:logmarg}. Because the residual degrees of freedom $NT - N - T + 1 - K$ are large, the fixed- and random-$\sigma^{2}$ versions deliver essentially the same coverage and interval length in our experiments (Appendix~\ref{app:sigma}), so the choice is immaterial for the reported results. The homoskedastic assumption itself can be relaxed, and Appendix~\ref{app:hetero} extends the model to a heterogeneous error covariance and augments the sampler with a Gibbs update for group-specific error variances.

  At fixed $\sigma^{2}$ the connection to the $\ell_0$ homogeneity-pursuit estimators of the panel-data and statistics literatures becomes exact rather than asymptotic. It is cleanest under a \emph{pairwise} partition prior that charges each pair of CATTs assigned to different groups,
  \begin{equation}
    \Pr(\mathcal{P}) \;\propto\; \exp\!\bigl(-\lambda_0\, c(\mathcal{P})\bigr),
    \qquad
    c(\mathcal{P}) \;=\; \binom{K}{2} - \sum_{p=1}^{m}\binom{|C_p|}{2},
    \label{eq:pairprior}
  \end{equation}
  where $c(\mathcal{P})$ counts the cross-group CATT pairs. This is a proper prior over partitions, in the same modeling family as Section~\ref{sec:bayes:model}, but it is not the CRP prior \eqref{eq:bayes:partprior}: the CRP weights a partition through $\alpha^{m}\prod_{l}(|C_{l}|-1)!$, which favors a few large clusters, whereas \eqref{eq:pairprior} imposes a flat cost per separated pair. We use \eqref{eq:pairprior} for the point estimator and retain the CRP for the sampler of Section~\ref{sec:bayes:gibbs}.

  \begin{proposition}[$\ell_0$-PH as a fixed-variance MAP]
  \label{prop:l0map}
  Fix $\sigma^{2}$, place an improper flat prior $p(\phi\mid\mathcal{P})\propto 1$ on the group effects, and adopt the pairwise prior \eqref{eq:pairprior}. Then the joint maximum a posteriori estimator of $(\mathcal{P},\phi)$ has group-effect component equal to the restricted least-squares estimate $\hat\phi=(\Dtd^{*\prime}\Dtd^{*})^{-1}\Dtd^{*\prime}\tilde Y$ under $\mathcal{P}$, and partition component
  \begin{equation}
    \hat{\mathcal{P}} \;=\; \arg\min_{\mathcal{P}}\; Q(\mathcal{P}), \qquad
    Q(\mathcal{P}) \;=\; \mathrm{RSS}(\mathcal{P}) \;+\; \lambda\, c(\mathcal{P}), \qquad \lambda = 2\sigma^{2}\lambda_0 .
    \label{eq:l0map}
  \end{equation}
  \end{proposition}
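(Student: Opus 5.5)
The plan is to write down the joint log posterior of $(\mathcal{P},\phi)$ at fixed $\sigma^2$ and maximize it in two stages: first over $\phi$ for a fixed partition, then over $\mathcal{P}$ after profiling $\phi$ out. By Bayes' rule,
\[
\log p(\mathcal{P},\phi\mid\tilde Y)=\log p(\tilde Y\mid \mathcal{P},\phi)+\log p(\phi\mid\mathcal{P})+\log\Pr(\mathcal{P})+\text{const}.
\]
Under the Gaussian likelihood \eqref{eq:bayes:cond} with $\sigma^2$ fixed, the first term is
\[
-\frac{r}{2}\log(2\pi\sigma^2)-\frac{1}{2\sigma^2}\bigl\|\tilde Y-\Dtd^{*}\phi\bigr\|^2 .
\]
The flat prior $p(\phi\mid\mathcal{P})\propto 1$ contributes nothing that depends on $\phi$. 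The pairwise prior \eqref{eq:pairprior} contributes $-\lambda_0 c(\mathcal{P})$ up to its normalizing constant. That constant does not depend on $\mathcal{P}$, because it is a single sum over all partitions of $\{1,\dots,K\}$, so it can be dropped.

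For a fixed $\mathcal{P}$, maximizing over $\phi$ is an ordinary least-squares problem in the grouped design. The maximizer is $\hat\phi=(\Dtd^{*\prime}\Dtd^{*})^{-1}\Dtd^{*\prime}\tilde Y$, provided $\Dtd^{*}$ has full column rank, which holds whenever each group contains an identified cell. This identifies the group-effect component of the joint MAP. Substituting $\hat\phi$ back in gives the profiled objective
\[
-\frac{1}{2\sigma^2}\mathrm{RSS}(\mathcal{P})-\lambda_0 c(\mathcal{P})+\text{const}.
\]
Multiplying by $-2\sigma^2>0$ turns maximization into minimization of $\mathrm{RSS}(\mathcal{P})+2\sigma^2\lambda_0 c(\mathcal{P})$. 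This is exactly $Q(\mathcal{P})$ with $\lambda=2\sigma^2\lambda_0$.

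The step needing most care is justifying the two-stage maximization and the dropping of constants. The joint maximum over $(\mathcal{P},\phi)$ equals $\max_{\mathcal{P}}\max_{\phi}$, and the set of partitions is finite, so the outer argmax exists. Two terms must be checked as genuinely common to all $\mathcal{P}$:
\begin{itemize}
\item the Gaussian normalizer $-\tfrac{r}{2}\log(2\pi\sigma^2)$, which is free of $\mathcal{P}$ because $r$ is fixed and $\sigma^2$ is held fixed;
\item the flat prior on $\phi$, which is improper and has dimension $m$ that varies with $\mathcal{P}$.
\end{itemize}
The second item is the delicate one. I would handle it by taking the flat density to be the constant $1$ on each $\mathbb{R}^m$, so it adds no $m$-dependent term to the joint density being maximized. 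This is exactly why the joint MAP yields a pure $\ell_0$ criterion with no Occam term, in contrast to the marginal-likelihood criterion \eqref{eq:bic}. I would also note explicitly that the result concerns the joint mode, not the marginal posterior mode of $\mathcal{P}$: integrating $\phi$ out under an improper prior would be ill-defined across dimensions.
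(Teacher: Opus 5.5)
Your proposal is correct and follows essentially the same route as the paper: profile the Gaussian likelihood over $\phi$ at the restricted least-squares estimate, multiply by the pairwise prior, and take $-2\sigma^{2}\log$ to obtain $\mathrm{RSS}(\mathcal{P})+2\sigma^{2}\lambda_0\,c(\mathcal{P})$. Your treatment of the normalizing constants, the full-rank condition, and the convention that the flat prior equals $1$ on each $\mathbb{R}^{m}$ spells out what the paper's one-line proof leaves implicit.
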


  \begin{proof}
  For fixed $\mathcal{P}$ the flat prior leaves the Gaussian likelihood, whose profile over $\phi$ is maximized at the restricted least-squares estimate and equals $\exp\!\bigl(-\mathrm{RSS}(\mathcal{P})/2\sigma^{2}\bigr)$ up to a factor constant in $\mathcal{P}$. Multiplying by \eqref{eq:pairprior} and taking $-2\sigma^{2}\log$ gives $\mathrm{RSS}(\mathcal{P})+2\sigma^{2}\lambda_0\,c(\mathcal{P})$ up to an additive constant, which is $Q(\mathcal{P})$ with $\lambda=2\sigma^{2}\lambda_0$.
  \end{proof}

  Written out in the outcome equation, $Q(\mathcal{P})$ trades fit against an $\ell_0$ penalty on the cross-group CATT pairs, pairs of CATTs placed in different groups and hence estimated with different coefficients,
  \begin{equation}
  Q(\mathcal{P}) \;=\; \underbrace{\sum_i \sum_t \Bigl(Y_{igt} - \alpha_i - \gamma_t - \sum_{g \in \mathcal{G}}\sum_{t \geq g}
  \tau_{gt}\, D_{igt}\Bigr)^2}_{\mathrm{RSS}(\mathcal{P})} \;+\; \underbrace{\lambda\sum_{g \neq g'} \mathbf{1}[\tau_{gt} \neq
  \tau_{g't}]}_{\ell_0 \text{ penalty}},
    \label{eq:l0obj}
  \end{equation}
  so the $\ell_0$-PH estimator is the MAP of the model \eqref{eq:pairprior} rather than a separate procedure. Here $\lambda > 0$ is the penalty parameter and $\mathbf{1}$ the indicator function. The penalty discourages splitting
  groups, which increases $c(\mathcal{P})$, unless the resulting reduction in RSS is large enough to compensate. When $\lambda =
   0$, $Q(\mathcal{P}) = \mathrm{RSS}(\mathcal{P})$, which is minimized by the fully flexible partition with $K$ groups. As
  $\lambda \to \infty$, the penalty dominates and $Q(\mathcal{P})$ is minimized by the fully pooled partition with $1$ group. For
   intermediate $\lambda$, the optimal partition reflects the data's evidence on which CATTs are equal. Merge groups $A$ and $B$
  if and only if the increase in RSS from constraining them equal is smaller than $\lambda \cdot |A| \cdot |B|$, the factor $|A|
  \cdot |B|$ being the number of cross-group pairs a merge eliminates.

  \begin{remark}[Which MAP is which]
  \label{rem:map}
  Four related objects should be kept distinct. (i)~The \emph{joint} MAP of Proposition~\ref{prop:l0map}, under a flat base measure and the pairwise prior \eqref{eq:pairprior}, is the exact $\ell_0$ objective \eqref{eq:l0obj}. (ii)~The \emph{marginal} MAP that integrates the Gaussian group effects out of \eqref{eq:bayes:logmarg} retains a log-determinant and a shrinkage term and does not reduce to a residual sum of squares; in the diffuse-base limit $\sigma_{0}^{2}\to\infty$ its complexity charge is $\tfrac{m}{2}\log(\sigma_{0}^{2}/\sigma^{2})$, a penalty on the number of groups rather than on cross-group pairs. (iii)~Integrating the variance out as well yields a scale mixture of Gaussians (not a standard multivariate-$t$, since the base-measure variance $\sigma_{0}^{2}$ is not scaled by $\sigma^{2}$), whose Schwarz (Laplace) approximation is the BIC criterion \eqref{eq:bic}, exact only up to $O(1)$ terms. (iv)~The agglomerative search of Appendix~\ref{app:algorithm} minimizes \eqref{eq:l0obj} up to the orthogonal approximation of its merge cost. The point estimator we report is~(i), computed by the two-step procedure below; the sampler of Section~\ref{sec:bayes:gibbs} targets the full posterior under the CRP prior and relies on none of these approximations.
  \end{remark}

  \begin{proposition}[Threshold interpretation]
  \label{prop:threshold}
  For two singleton groups $\{j\}$ and $\{k\}$, the merge criterion $\Delta\mathrm{Obj}(j,k) < 0$ is equivalent to
  \begin{equation}
    \bigl(\hat{\tau}_j - \hat{\tau}_k\bigr)^2 \cdot \frac{n_j\, n_k}{n_j + n_k} \;<\; \lambda,
    \label{eq:threshold}
  \end{equation}
  where $n_j = \|\tilde{D}_j\|^2$ is the effective sample size for CATT $j$. Two CATTs are pooled if and only if their squared
  difference, scaled by their harmonic-mean sample size, falls below the threshold $\lambda$.
  \end{proposition}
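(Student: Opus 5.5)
The plan is to compute $\Delta\mathrm{Obj}(j,k) = Q(\mathcal{P}_{\text{merged}}) - Q(\mathcal{P})$ directly from the objective \eqref{eq:l0map}, splitting it into a fit term and a penalty term. Throughout we work under the orthogonality assumption of Section~\ref{sec:problem}, $\tilde D_j'\tilde D_k = 0$ for $j\neq k$ and $\Dtd'\Dtd = \mathrm{diag}(n_1,\dots,n_K)$. The expression $n_jn_k/(n_j+n_k)$ is exactly what orthogonality produces, and Remark~\ref{rem:map}(iv) already flags that the merge cost is exact only under the orthogonal approximation.

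\textbf{Penalty term.} Merging two singletons $\{j\}$ and $\{k\}$ removes exactly $|A|\cdot|B| = 1$ cross-group pair, so $c(\mathcal{P})$ falls by one. The penalty therefore changes by $-\lambda$.

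\textbf{Fit term.} Let $\mathcal{P}$ contain $\{j\}$ and $\{k\}$ as singletons. Under orthogonality the restricted regression decouples across groups, so $\mathrm{RSS}(\mathcal{P})$ equals the part of $\|\tilde Y\|^2$ not explained by the group columns, and each group contributes its own projection. The merge affects only the contribution of columns $j$ and $k$.

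Before the merge, these two columns explain $\hat\tau_j^2 n_j + \hat\tau_k^2 n_k$, where $\hat\tau_j = \tilde D_j'\tilde Y/n_j$ coincides with the flexible estimate by orthogonality.

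After the merge, the single column $\tilde D_j+\tilde D_k$ has squared norm $n_j+n_k$. Its coefficient is the precision-weighted mean
\[
\bar\tau = \frac{n_j\hat\tau_j + n_k\hat\tau_k}{n_j+n_k},
\]
and it explains $(n_j\hat\tau_j+n_k\hat\tau_k)^2/(n_j+n_k)$.

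The increase in RSS is the difference of these two explained sums of squares. By the standard identity
\[
n_j a^2 + n_k b^2 - \frac{(n_ja+n_kb)^2}{n_j+n_k} = \frac{n_jn_k}{n_j+n_k}(a-b)^2,
\]
this increase equals $\frac{n_jn_k}{n_j+n_k}(\hat\tau_j-\hat\tau_k)^2$.

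\textbf{Conclusion and main obstacle.} Adding the two pieces gives
\[
\Delta\mathrm{Obj}(j,k) = \frac{n_jn_k}{n_j+n_k}(\hat\tau_j-\hat\tau_k)^2 - \lambda,
\]
and $\Delta\mathrm{Obj}(j,k)<0$ is exactly \eqref{eq:threshold}. The main obstacle is justifying that the other groups' fitted contributions are unchanged by the merge. This is immediate under orthogonality, since the grouped design has orthogonal columns and RSS is additive across groups. Without orthogonality the merge cost would instead be the Wald-type statistic $(\hat\tau_j-\hat\tau_k)^2/[(\Dtd'\Dtd)^{-1}_{jj}+(\Dtd'\Dtd)^{-1}_{kk}-2(\Dtd'\Dtd)^{-1}_{jk}]$ for the restriction $\tau_j=\tau_k$, applied to the current partition's estimates. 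That statistic reduces to the stated formula under orthogonality, so I would state the orthogonality assumption explicitly and note this generalization in passing.
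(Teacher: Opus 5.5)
Your derivation is correct, but it takes a different route from the paper. In the paper, $\Delta\mathrm{Obj}(A,B)$ is \emph{defined} by \eqref{eq:deltaobj} in Appendix~\ref{app:algorithm}, so the proposition is a direct substitution. With $|A|=|B|=1$ the criterion is $\frac{n_jn_k}{n_j+n_k}(\hat\tau_j-\hat\tau_k)^2-\lambda$, and its negativity is exactly \eqref{eq:threshold}. That equivalence is pure algebra and holds in any design. The paper then only remarks, without derivation, that this criterion equals the exact RSS increment when the within-transformed dummies are orthogonal.

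You instead read $\Delta\mathrm{Obj}$ as the true change $Q(\mathcal{P}_{\text{merged}})-Q(\mathcal{P})$ in \eqref{eq:l0map} and derive the formula from scratch. The binomial count gives the penalty change $-\lambda|A||B|=-\lambda$, and the explained-sum-of-squares identity gives the harmonic-weighted RSS increment.

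Each route buys something different. Yours proves the claim the paper merely asserts, namely that \eqref{eq:deltaobj} is the exact objective change under orthogonality; your computation also goes through verbatim for general groups $A,B$ with the pooled quantities of \eqref{eq:pooled}. The paper's buys generality: as a statement about the algorithm's criterion it needs no orthogonality. Your reading of $\Delta\mathrm{Obj}$ as the exact change in $Q$ makes the equivalence hold only under that hypothesis.

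One correction to your closing aside concerns the non-orthogonal case after earlier merges have occurred. There, the exact cost of merging $\{j\}$ and $\{k\}$ is the Wald statistic for $\phi_{\{j\}}=\phi_{\{k\}}$ built from the current grouped design. It uses $(\Dtd^{*\prime}\Dtd^{*})^{-1}$ and the current restricted estimates $\hat\phi$, not $(\Dtd'\Dtd)^{-1}$. In general the two agree only at the first merge, when every group is a singleton.
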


  \noindent The merge cost in \eqref{eq:threshold} is the orthogonal (diagonal-approximation) RSS increment of
  Appendix~\ref{app:algorithm}: it is exact when the within-transformed dummies are orthogonal, whereas in a
  non-orthogonal design the exact increment depends on the full cross-product $\Dtd'\Dtd$, and the Step-2 OLS
  re-estimation restores the exact fit in either case. Read as a rate calculation, the result clarifies why a
  wide range of $\lambda$ separates correct from incorrect merges: the cost of a correct merge (two CATTs with
  the same true effect) is $O_p(\sigma^2)$, the harmonic-weighted squared difference of two estimates separated
  only by noise, distributed as $\sigma^2$ times a $\chi^2_1$, whereas the cost of a wrong merge (different true
  effects) is $O(N \cdot \Delta\tau^2)$, where $\Delta\tau$ is the effect gap. Because the correct-merge cost is
  $O_p(\sigma^2)$ rather than vanishing, a fixed $\lambda = c\,\sigma^2$ leaves a constant probability of
  over-splitting (the upper $\chi^2_1$ tail); recovering $\mathcal{P}$ with probability approaching one for fixed
  separated effects therefore requires a threshold that grows, with $\sigma^2 = o(\lambda_N)$ to suppress spurious
  splits and $\lambda_N = o(N\,\Delta\tau^2)$ to prevent wrong merges. The BIC choice, whose effective per-merge
  threshold is $\lambda \asymp \sigma^2\log(NT)$, sits in this window, and the window widens with the separation;
  a formal consistency statement is left to Section~\ref{sec:conc}.

  The penalty in \eqref{eq:l0obj} counts the number of distinct pairs across groups, which is the $\ell_0$ norm applied to the
  vector of all $\binom{K}{2}$ pairwise differences $\{\tau_j - \tau_k\}_{j < k}$. The fused LASSO \citep{Tibshirani2005} instead
   uses the $\ell_1$ norm of pairwise differences (for ordered indices), producing a convex but inexact version of equality. The
  advantage of the $\ell_0$ formulation is that it produces exact equality constraints rather than approximate shrinkage.

  Computing the estimator takes two steps, an agglomerative search that returns $\hat{\mathcal{P}}$ in $O(K^3)$ time and an ordinary least squares re-estimation of the grouped model under $\hat{\mathcal{P}}$.\footnote{Appendix~\ref{app:algorithm} gives the agglomerative merge criterion and the re-estimation, and explains why the second step is needed. The agglomerative search minimizes an orthogonal approximation to the objective, and its pooled coefficients equal the restricted OLS only when the within-transformed cohort-time dummies are orthogonal, so the OLS re-estimation recovers the BLUE in general panels.}

  \subsection{Estimation and Inference}\label{sec:bayes:gibbs}

  We sample from the posterior $\Pr(\mathcal{P} \mid \tilde{y}) \propto p(\tilde{y} \mid \mathcal{P})\Pr(\mathcal{P})$ using the
  collapsed Gibbs sampler of \citet{neal2000} (Algorithm~3), which integrates $\phi$ out analytically at each cluster-assignment
  step and, at the end of every sweep, reinstates it and updates the error variance $\sigma^{2}$ by conjugate draws. At each iteration, we cycle over all $K$ CATTs. For CATT
   $k$:
  \begin{enumerate}
  \item Remove $k$ from its current cluster; delete the cluster if empty.
  \item For each existing cluster $c$ and for a new cluster, compute the conditional probability
  \begin{equation}
    \Pr(z_{k} = c \mid z_{-k},\tilde{y}) \;\propto\;
    \begin{cases}
      n_{-k,c}\,\times\, p(\tilde{y} \mid z_{k} = c,\, z_{-k}), & \text{existing,}\\[4pt]
      \alpha\,\times\, p(\tilde{y} \mid z_{k} = c_{\text{new}},\, z_{-k}), & \text{new,}
    \end{cases},
    \label{eq:bayes:condprob}
  \end{equation}
  where the marginal likelihoods are computed via \eqref{eq:bayes:logmarg}.
  \item Draw $z_{k}$ from the normalized probabilities \eqref{eq:bayes:condprob}.
  \end{enumerate}
  After updating all cluster assignments, draw $\phi$ from the conjugate posterior
  $N(\mu^{\ast},\Sigma^{\ast})$ and then, in the primary random-variance model, draw $\sigma^{2}$ from its inverse-gamma full
  conditional given the current partition and group effects (the conjugate update of the prior \eqref{eq:bayes:sigma}); the
  cluster-assignment marginals \eqref{eq:bayes:logmarg} of the next sweep are evaluated at this draw. The sequence of draws
  $\{z^{(s)},\phi^{(s)},\sigma^{2(s)}\}_{s=1}^{S}$ approximates the posterior. Fixing $\sigma^{2}$ at the plug-in $\hat\sigma^{2}$
  omits this final draw and yields the fixed-variance sampler of Section~\ref{sec:bayes:sigmafixed}.

The posterior mean of $\tau_{k}$ is
  \begin{equation}
    \hat{\tau}_{k}^{\,\text{Bayes}} \;=\; (S-S_B)^{-1}\sum_{s = S_B+1}^{S}\,\phi^{(s)}_{z_{k}^{(s)}},
    \label{eq:bayes:tauhat}
  \end{equation}
  where $S_B$ is the burn-in sample. The $95\%$ credible interval is obtained from the empirical quantiles of the draws. The co-clustering probability
  \begin{equation}
    \hat{\pi}_{jk} \;=\; (S-S_B)^{-1}\sum_{s = S_B+1}^{S}\, \mathbf{1}\!\bigl\{z_{j}^{(s)} = z_{k}^{(s)}\bigr\}
    \label{eq:bayes:cocluster}
  \end{equation}
  gives the posterior probability that CATTs $j$ and $k$ belong to the same group. The $K \times K$ matrix $\hat{\Pi} = [\hat{\pi}_{jk}]$ is the posterior similarity matrix, summarizing the full uncertainty over the grouping structure; when a single representative partition is desired, it can be obtained from the posterior together with a credible region using the loss-based summaries of \citet{lau2007} and \citet{wade2018}. Aggregated estimands (e.g.\ the overall ATT or event-study coefficients) are computed by applying the linear aggregator of interest to each posterior draw and summarizing across the chain, which delivers uncertainty quantification that incorporates the uncertainty in the partition itself.

  The MAP partition of Section~\ref{sec:bayes:sigmafixed} and the posterior mean differ in what they report. The MAP is a single grouping of the cohort-time effects into a few distinct levels, the minimizer of the penalized residual sum of squares. The posterior mean, by contrast, averages over partitions and returns values that match no single grouping, so extracting a representative partition from it requires a separate loss-based summary of the co-clustering matrix.

  We do not claim that the mode is a less biased point estimate than the posterior mean. Holding the regularization fixed, the two nearly coincide, so choosing between the MAP and the posterior mean is about whether a single partition or the full partition-averaged summary is wanted, not about small-sample bias, and inference should come from the posterior in either case.

  \begin{remark}[Inference after partition selection]
  \label{rem:postselection}
  The variance formula for $\hat\phi$ in \eqref{eq:phgeneral} treats the
  partition $\hat{\mathcal{P}}$ as fixed. In finite samples $\hat{\mathcal{P}}$
  is itself estimated, so confidence intervals built from
  $\sigma^2[(\tilde D^{*\prime}\tilde D^*)^{-1}]_{pp}$ are valid only
  conditional on the selected partition and can under-cover
  unconditionally, as with inference after any model-selection step \citep{leeb2005}.
  When the recovery interval of Proposition~\ref{prop:threshold} is wide, so
  that the true partition is selected with probability approaching one, this
  gap is small, and the simulations of Section~\ref{sec:sim} confirm that the
  plug-in intervals are then only mildly anti-conservative. The Bayesian
  estimator of Section~\ref{sec:bayes} avoids the issue entirely by averaging
  over the partition rather than conditioning on a single one, and attains
  near-nominal coverage in our experiments; it is our preferred route to honest
  uncertainty quantification.\footnote{An obvious alternative, sample
  splitting, selecting the partition on one subsample and estimating the
  coefficients on another, does not help, because halving the sample degrades
  recovery and a wrongly merged partition induces a specification bias that no
  standard-error correction removes. A selective-inference correction that
  conditions on the selection event \citep{fithian2014, rinaldo2019} is a further
  route we do not pursue.}
  \end{remark}
\section{Simulation Study}
\label{sec:sim}

This section reports a Monte Carlo study with three aims: (i) to document, on the
variance scale that matters for causal estimands, the efficiency of the
Dirichlet-Process (DP) estimator relative to the fully flexible and fully
pooled benchmarks under partial homogeneity; (ii) to show how those gains depend
on how well separated the distinct effects are; and (iii) to evaluate honest
inference by measuring confidence- and credible-interval coverage. We also carry
the $\ell_0$ MAP estimator through the same experiments to see how the model's
mode compares with its posterior, and it behaves much like the DP throughout.

\subsection{Design}
\label{sec:sim:design}

We simulate a balanced panel with $N = 2{,}000$ units and $T = 10$ periods.
Units are split equally across four cohorts, a never-treated group ($g=0$) and
three treated cohorts entering at $t = 3,5,7$. This yields $K = 18$
post-treatment cohort-time cells. Outcomes follow
\begin{equation}
  Y_{it} = \alpha_i + \lambda_t
    + \tau^{\ast}_{g(i),t}\,\mathbf{1}\{t \ge g(i)\} + \varepsilon_{it},
  \qquad
  \alpha_i \sim \mathcal N(0,1),\ \lambda_t = 0.1(t-1),\ \varepsilon_{it}\sim\mathcal N(0,1).
  \label{eq:sim:dgp}
\end{equation}
The unit and time fixed effects are removed by the within transformation, so the
estimators' sampling behavior is driven entirely by $\varepsilon$.

The partial-homogeneity truth is parameterized by the number of distinct effects
and their separation. For a given number of distinct effects $m^{\ast}$, we partition the $K=18$ cells
into $m^{\ast}$ near-equal groups and assign each group a mean on an evenly
spaced grid with adjacent gap $\Delta$. We sweep
$m^{\ast}\in\{1,3,6,9,18\}$, from fully homogeneous ($m^{\ast}=1$, where pooling
is correct) to fully heterogeneous ($m^{\ast}=18$, where the flexible estimator
is correct). The gap is calibrated to a unit-free separation
$\delta \equiv \Delta / \overline{\mathrm{sd}}(\hat\tau_{\text{flex}})$, the
distance between adjacent group means in standard errors of the flexible
cohort-time estimates; $\delta$ governs how easily the groups can be told apart.
We report $\delta \in \{3,6,12\}$, with $\delta=6$ as the headline.

We compare five estimators, pooled TWFE (the biased-but-precise benchmark),
flexible TWFE (unbiased-but-imprecise, in this balanced design it coincides with
the \citet{gardner2022two} two-stage estimator), the infeasible oracle
partial-homogeneity estimator that is handed the true partition (the efficiency
floor), the $\ell_0$-PH estimator, and the Bayes-PH estimator. The $\ell_0$ penalty is
chosen by BIC \citep{schwarz1978} along the agglomeration path,
$\mathrm{BIC}(m) = NT\log(\mathrm{RSS}(\hat{\mathcal P}_m)/NT) + m\log(NT)$,
the flat-partition-prior case of \eqref{eq:bic}, requiring no tuning grid. The DP sampler uses a diffuse base measure
($\mu_0=0$, $\sigma_0^2=100$) and concentration $\alpha=7$ (so the prior expected
number of groups is about $K/2$), run for $150$ Gibbs sweeps with a $50$-sweep
burn-in, drawing $\sigma^{2}$ from its inverse-gamma full conditional each sweep
(the full Bayesian model), and using the exact within-design covariance of the
cohort-time estimates throughout, in both the cluster-assignment moves and the
recorded draws (Remark~\ref{rem:sigma}). The overall ATT is the
equal-weight average of the $K$ CATTs; aggregator choice is orthogonal to the
specification question and does not affect the conclusions. All results use
$500$ Monte Carlo replications.

\subsection{Variance and recovery under partial homogeneity}
\label{sec:sim:var}

Table~\ref{tab:sim:var} reports, for the headline separation $\delta=6$, the
sampling variance of the cohort-time estimates relative to flexible TWFE (lower
is more precise), the average absolute CATT bias (a unbiasedness check), and the
adjusted Rand index (ARI) measuring how well the true partition is recovered.

\begin{table}[H]
  \centering
  \caption{Sampling variance (ratio to flexible), bias, and partition recovery
  under partial homogeneity, $\delta=6$. ``Var ratio'' is the average per-cell
  Monte Carlo variance divided by that of flexible TWFE; ``$|$Bias$|$'' is the
  average absolute CATT bias; ARI is the adjusted Rand index against the true
  partition. Oracle PH is infeasible (true partition known). $500$ replications.}
  \label{tab:sim:var}
  \small
  \begin{tabular}{llccc}
    \toprule
    Scenario & Method & Var ratio & $|$Bias$|$ & ARI \\
    \midrule
    \multirow{5}{*}{$m^{\ast}=1$ (homogeneous)}
      & Pooled     & 0.14 & 0.001 & 1.00 \\
      & Flexible   & 1.00 & 0.004 & 0.00 \\
      & Oracle PH  & 0.14 & 0.001 & 1.00 \\
      & $\ell_0$-PH   & 0.50 & 0.002 & 0.40 \\
      & Bayes-PH   & 0.16 & 0.001 & 0.77 \\
    \midrule
    \multirow{5}{*}{$m^{\ast}=3$ (partial)}
      & Pooled     & 0.13 & 0.236 & 0.00 \\
      & Flexible   & 1.00 & 0.002 & 0.00 \\
      & Oracle PH  & 0.26 & 0.001 & 1.00 \\
      & $\ell_0$-PH   & 0.56 & 0.003 & 0.94 \\
      & Bayes-PH   & 0.48 & 0.005 & 0.91 \\
    \midrule
    \multirow{5}{*}{$m^{\ast}=6$ (partial)}
      & Pooled     & 0.14 & 0.490 & 0.00 \\
      & Flexible   & 1.00 & 0.002 & 0.00 \\
      & Oracle PH  & 0.42 & 0.002 & 1.00 \\
      & $\ell_0$-PH   & 0.74 & 0.004 & 0.95 \\
      & Bayes-PH   & 0.70 & 0.005 & 0.89 \\
    \midrule
    \multirow{5}{*}{$m^{\ast}=9$ (partial)}
      & Pooled     & 0.14 & 0.741 & 0.00 \\
      & Flexible   & 1.00 & 0.003 & 0.00 \\
      & Oracle PH  & 0.64 & 0.002 & 1.00 \\
      & $\ell_0$-PH   & 1.26 & 0.005 & 0.91 \\
      & Bayes-PH   & 1.63 & 0.009 & 0.77 \\
    \midrule
    \multirow{5}{*}{$m^{\ast}=18$ (heterogeneous)}
      & Pooled     & 0.15 & 1.483 & 0.00 \\
      & Flexible   & 1.00 & 0.002 & 1.00 \\
      & Oracle PH  & 1.00 & 0.002 & 1.00 \\
      & $\ell_0$-PH   & 3.38 & 0.008 & 0.01 \\
      & Bayes-PH   & 2.36 & 0.009 & 0.05 \\
    \bottomrule
  \end{tabular}
\end{table}

Three patterns stand out. First, pooled TWFE is always the most precise (variance
ratio $\approx 0.14$) but is severely biased under any heterogeneity, with an
average CATT bias rising from $0.24$ at $m^{\ast}=3$ to $1.48$ at $m^{\ast}=18$;
it is unusable for cohort-time effects. Second, at the partial-homogeneity
scenarios that motivate the paper ($m^{\ast}=3,6$) the $\ell_0$-PH and Bayes-PH estimators
cut the sampling variance of the CATTs by $26$--$52\%$ relative to flexible
TWFE while remaining essentially unbiased (average $|$bias$|\le 0.005$), and
they recover the partition well (ARI $\approx 0.9$). Third, the methods correctly
degrade at the extremes, in that at $m^{\ast}=1$ they shrink toward pooling (though not
all the way), and at $m^{\ast}=18$, where every cell is genuinely distinct, any
pooling hurts and the flexible estimator is preferred. At this moderate separation
the feasible estimators do not fully revert to flexible at $m^{\ast}=18$ but carry a
selection overhead (variance ratios $3.4$ and $2.4$), the price of choosing the
partition when there is no poolable structure; the overhead disappears once the
effects are well separated, where at $\delta=12$, $m^{\ast}=18$ both recover all $K$
cells and match flexible (variance ratios $1.00$ and $1.00$). Figure~\ref{fig:sim:var}
plots the variance ordering across $m^{\ast}$.

\begin{figure}[H]
  \centering
  \includegraphics[width=0.7\linewidth]{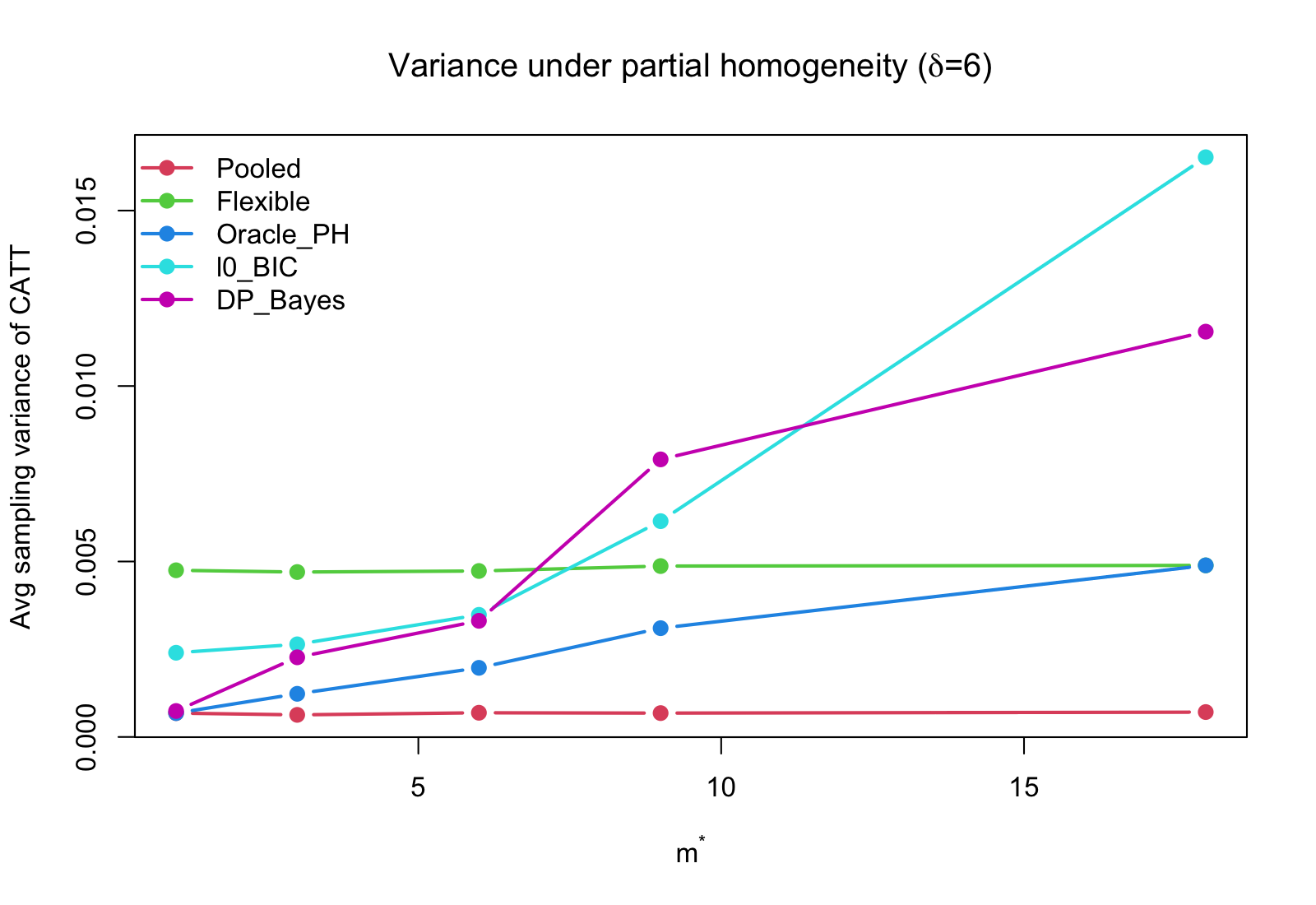}
  \caption{Average sampling variance of the cohort-time estimates against the true
  number of distinct effects $m^{\ast}$ ($\delta=6$). The $\ell_0$ and DP
  estimators track the infeasible oracle in the partial-homogeneity region
  ($m^{\ast}=3,6$); when every cell is genuinely distinct ($m^{\ast}\to K$) at this
  moderate separation they carry a selection overhead above flexible TWFE, which
  vanishes once the effects are well separated ($\delta=12$; see
  Table~\ref{tab:sim:sep} and the text).}
  \label{fig:sim:var}
\end{figure}

The realized gain depends on how well separated the distinct effects are, exactly
as the recovery interval of Proposition~\ref{prop:threshold} suggests.
Table~\ref{tab:sim:sep} fixes $m^{\ast}=6$ and varies $\delta$.

\begin{table}[H]
  \centering
  \caption{Effect of separation $\delta$ on precision and recovery,
  $m^{\ast}=6$. Variance ratio to flexible and ARI. $500$ replications.}
  \label{tab:sim:sep}
  \small
  \begin{tabular}{lcccccc}
    \toprule
    & \multicolumn{2}{c}{$\delta=3$} & \multicolumn{2}{c}{$\delta=6$}
    & \multicolumn{2}{c}{$\delta=12$} \\
    \cmidrule(lr){2-3}\cmidrule(lr){4-5}\cmidrule(lr){6-7}
    Method & Var ratio & ARI & Var ratio & ARI & Var ratio & ARI \\
    \midrule
    Oracle PH & 0.40 & 1.00 & 0.42 & 1.00 & 0.38 & 1.00 \\
    $\ell_0$-PH  & 1.46 & 0.53 & 0.74 & 0.95 & 0.41 & 1.00 \\
    Bayes-PH  & 1.34 & 0.40 & 0.70 & 0.89 & 0.42 & 0.94 \\
    \bottomrule
  \end{tabular}
\end{table}

When the effects are close ($\delta=3$) the partition cannot be recovered
reliably (ARI $\approx 0.5$) and the selection noise erases, indeed reverses, the
variance advantage, and both feasible estimators are then less precise than
flexible TWFE. As separation grows the gap to the oracle closes, and at
$\delta=12$ the feasible $\ell_0$-PH estimator essentially attains the oracle
variance reduction (0.41 versus 0.38) with near-perfect recovery.
Figure~\ref{fig:sim:sep} shows the monotone relationship. The practical message
is that the methods deliver their promised gains when heterogeneity is
``clumpy'', a modest number of well-separated effect levels, and should not
be expected to when the effects form a near-continuum. In the hard regimes
neither feasible estimator beats flexible TWFE on variance, and both can fall
behind it, at $\delta=3$ (variance ratios $1.46$ for $\ell_0$-PH and $1.34$ for
Bayes-PH) and at $\delta=6$ with the fine $m^{\ast}=9$ partition; the efficiency
gains are specific to the clumpy regime. Where the two estimators differ sharply
is in inference rather than variance: Section~\ref{sec:sim:cov} shows the Bayes-PH
intervals stay far better calibrated than the $\ell_0$-PH plug-in when the
partition is uncertain.

\begin{figure}[H]
  \centering
  \includegraphics[width=\linewidth]{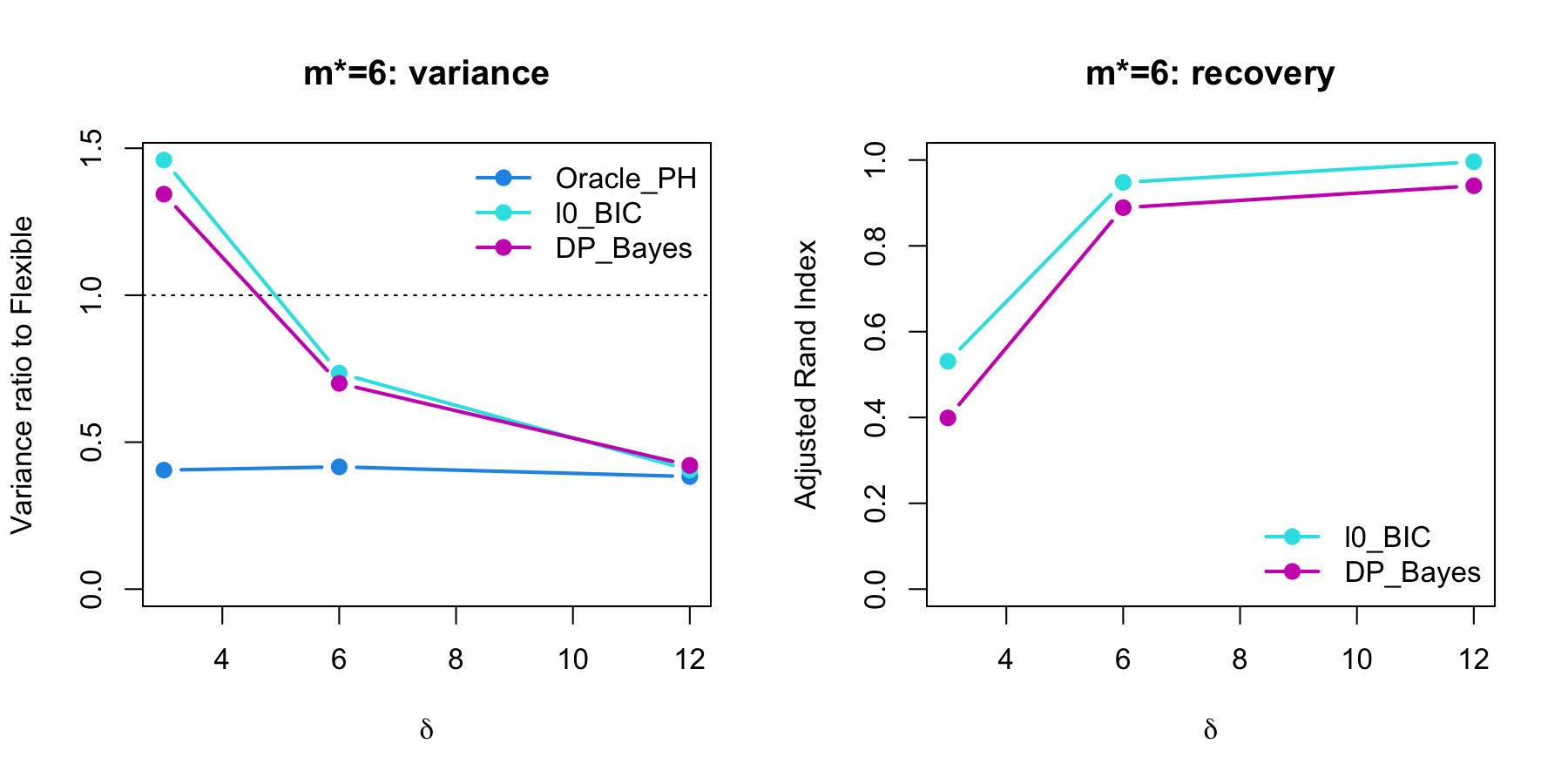}
  \caption{Precision (left) and partition recovery (right) as functions of the
  separation $\delta$, at $m^{\ast}=6$. The variance advantage over flexible TWFE
  (values below the dashed line) is realized only once the effects are separated
  enough to be recovered.}
  \label{fig:sim:sep}
\end{figure}

\subsection{Honest inference: coverage}
\label{sec:sim:cov}

Table~\ref{tab:sim:cov} evaluates the calibration of $95\%$ intervals at
$m^{\ast}=6$, $\delta=6$, reporting coverage and average length for the
cohort-time effects and for the overall ATT. For the $\ell_0$-PH estimator we
report the plug-in interval that treats the selected partition as fixed. The
Bayes-PH interval is the $2.5$--$97.5\%$ posterior credible interval.

\begin{table}[H]
  \centering
  \caption{Coverage and average length of nominal $95\%$ intervals at
  $m^{\ast}=6$, $\delta=6$; targets are the true CATTs and the true ATT. Oracle PH
  is infeasible. $500$ replications.}
  \label{tab:sim:cov}
  \small
  \begin{tabular}{lcccc}
    \toprule
    & \multicolumn{2}{c}{CATT} & \multicolumn{2}{c}{ATT} \\
    \cmidrule(lr){2-3}\cmidrule(lr){4-5}
    Method & Coverage & Length & Coverage & Length \\
    \midrule
    Pooled                 & 0.04 & 0.10 & 0.05 & 0.10 \\
    Flexible               & 0.95 & 0.27 & 0.94 & 0.12 \\
    Oracle PH              & 0.95 & 0.17 & 0.95 & 0.11 \\
    $\ell_0$-PH             & 0.90 & 0.17 & 0.92 & 0.11 \\
    Bayes-PH               & 0.93 & 0.20 & 0.92 & 0.11 \\
    \bottomrule
  \end{tabular}
\end{table}

The pooled intervals collapse (coverage $0.04$--$0.05$): they are short but
centered on a biased estimand. The flexible and oracle intervals are correctly
calibrated, the oracle being much shorter, and they bracket the feasible
estimators. The Bayes-PH credible intervals attain near-nominal coverage
($0.93$ for the CATTs, $0.92$ for the ATT) at a length between oracle and
flexible, honest and informative uncertainty. The $\ell_0$-PH interval is
mildly anti-conservative here ($0.90$ CATT, $0.92$ ATT): at this separation
the correct partition is recovered often enough that the conditional-on-selection
distortion is modest.\footnote{A natural alternative removes this distortion by
sample splitting, selecting the partition on one half of the units and computing
coefficients and standard errors on the other. It does not help, its coverage is
in fact lower ($0.75$ CATT, $0.90$ ATT). Splitting halves the sample, which
degrades recovery, and a wrongly merged partition then injects a specification
bias that a correctly sized standard error cannot undo.} Honest uncertainty under
partition selection thus comes not from adjusting the variance but from averaging
over the partition, which is what the DP posterior does.
It is also worth noting what does drive the result. The calibration hinges on
using the exact cross-cell covariance of the cohort-time estimates
(Remark~\ref{rem:sigma}), whereas the treatment of $\sigma^2$ is immaterial. The
full Bayesian sampler draws $\sigma^2$ from its inverse-gamma full conditional, and
holding it fixed at a plug-in value gives essentially the same coverage and
interval length (Appendix~\ref{app:sigma}).

The headline design is favorable, so Table~\ref{tab:sim:covgrid} extends the
coverage check across the grid, from the homogeneous case ($m^{\ast}=1$) through
partial homogeneity ($m^{\ast}=6$) to full heterogeneity ($m^{\ast}=18$), and
across separations $\delta\in\{3,6,12\}$. Two patterns matter. First, the overall
ATT is well calibrated everywhere for both feasible estimators, with Bayes-PH
coverage between $0.91$ and $0.96$ across every cell, because aggregation averages
the cell-level errors. Second, the cohort-time (CATT) intervals are where the
regime bites, and the two estimators part ways. When the effects are separable
($\delta=12$, or $\delta=6$ at $m^{\ast}=6$) both attain near-nominal CATT
coverage (the $\delta=6$, $m^{\ast}=6$ cell reproduces the headline
Table~\ref{tab:sim:cov} within Monte Carlo error). In the hard regimes, at low separation ($\delta=3$) or full
heterogeneity ($m^{\ast}=18$), Bayes-PH degrades gracefully, to $0.79$--$0.81$,
while the $\ell_0$-PH plug-in interval collapses to $0.57$--$0.62$: committing to
one wrong partition and treating it as known is far more damaging than averaging
over partitions, which is again the DP posterior's advantage.

\begin{table}[H]
  \centering
  \caption{Coverage of nominal $95\%$ intervals across the $(m^{\ast},\delta)$
  grid, for the cohort-time effects (CATT) and the overall ATT. $m^{\ast}=1$ is
  homogeneous (separation irrelevant). Flexible is the unbiased benchmark;
  $\ell_0$-PH uses the plug-in interval that treats the selected partition as
  fixed; Bayes-PH is the posterior credible interval. $500$ replications.}
  \label{tab:sim:covgrid}
  \small
  \begin{tabular}{llccccccc}
    \toprule
    & & \multicolumn{3}{c}{CATT coverage} & & \multicolumn{2}{c}{ATT coverage} \\
    \cmidrule(lr){3-5}\cmidrule(lr){7-9}
    Scenario & $\delta$ & Flexible & $\ell_0$-PH & Bayes-PH & & Flexible & $\ell_0$-PH & Bayes-PH \\
    \midrule
    $m^{\ast}=1$ (homog.)      & --- & 0.95 & 0.76 & 0.94 & & 0.94 & 0.92 & 0.93 \\
    \midrule
    \multirow{3}{*}{$m^{\ast}=6$ (partial)}
      & 3  & 0.95 & 0.62 & 0.81 & & 0.95 & 0.89 & 0.91 \\
      & 6  & 0.95 & 0.90 & 0.94 & & 0.96 & 0.94 & 0.94 \\
      & 12 & 0.95 & 0.95 & 0.94 & & 0.94 & 0.95 & 0.92 \\
    \midrule
    \multirow{3}{*}{$m^{\ast}=18$ (heterog.)}
      & 3  & 0.95 & 0.58 & 0.79 & & 0.95 & 0.91 & 0.91 \\
      & 6  & 0.95 & 0.57 & 0.80 & & 0.94 & 0.90 & 0.92 \\
      & 12 & 0.95 & 0.95 & 0.94 & & 0.96 & 0.96 & 0.94 \\
    \bottomrule
  \end{tabular}
\end{table}

\subsection{Solution paths}
\label{sec:sim:path}

Figure~\ref{fig:sim:path} traces the bias-variance trade-off as a function of
the regularization strength, for $\ell_0$-PH along the agglomeration path indexed by the
number of groups $m$ (small $m$ = strong penalty), and for DP across the
concentration $\alpha$. Two features are worth emphasizing. First, the overall
ATT is remarkably robust to over-pooling, where its bias stays small
($\lesssim 0.015$) across the entire $\ell_0$ path down to $m=3$ and only spikes
once the path collapses toward a single group ($|$bias$|=0.10$ at $m=1$). The aggregate thus tolerates a wide
range of penalties, whereas the cohort-time effects require getting the partition
approximately right, the CATT variance is minimized near the true $m=6$ and
inflates when the path is forced into wrong merges. Second, the DP path is nearly
flat in $\alpha$: both the ATT bias and variance barely move over
$\alpha\in\{1,\dots,30\}$, so the practitioner's choice of concentration has
little effect on the reported estimates. The mean BIC-selected number of groups
($\approx 6.1$) sits in the flat, low-bias region of the $\ell_0$ path, confirming
that the plateau in $\hat m$ is a serviceable tuning diagnostic.

\begin{figure}[H]
  \centering
  \includegraphics[width=\linewidth]{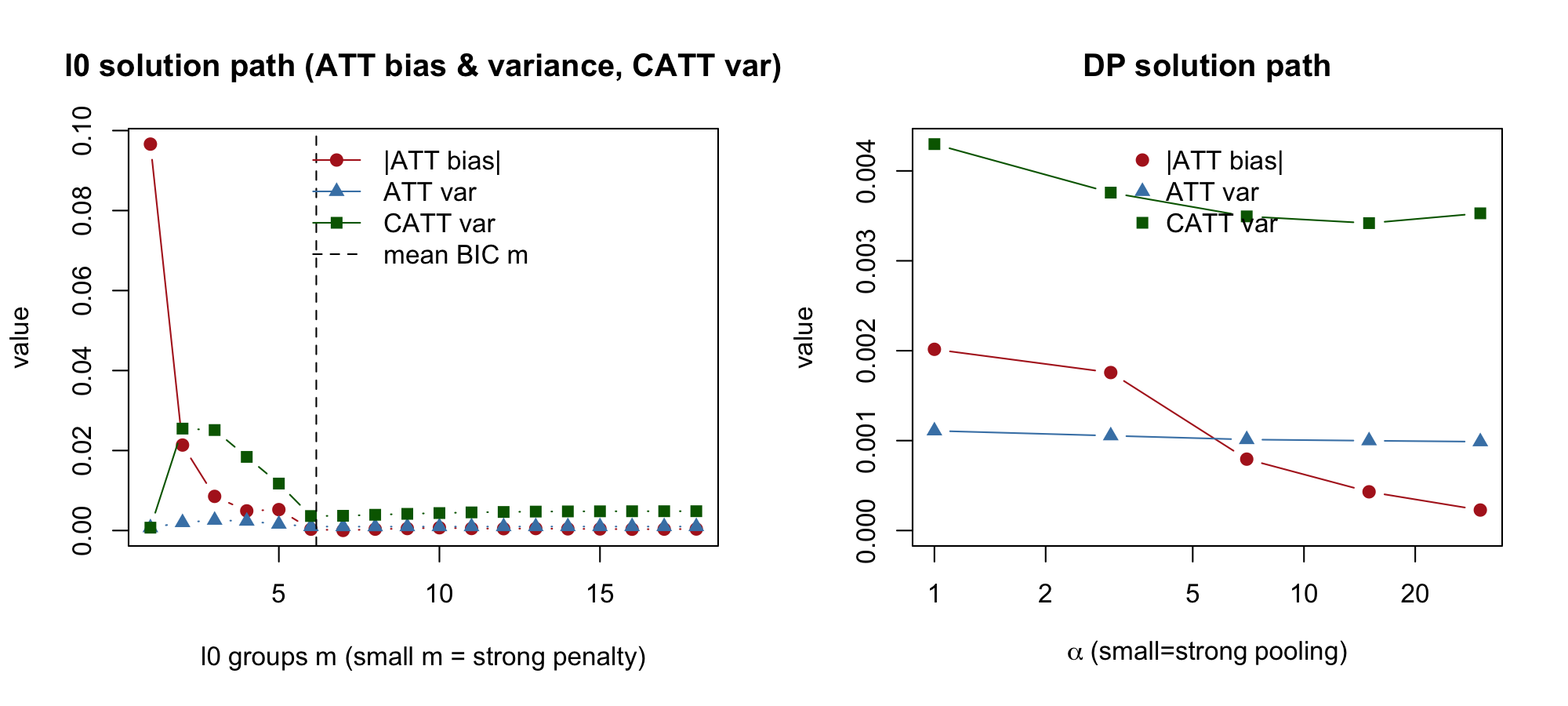}
  \caption{Solution paths at $m^{\ast}=6$, $\delta=6$. Left: $\ell_0$, indexed by
  the number of groups $m$ (dashed line = mean BIC selection). Right: DP, indexed
  by the concentration $\alpha$ (log scale). The overall ATT bias (red) stays
  near zero across a wide range, while the CATT variance (green) is non-monotone: it dips at the true
  partition ($m=6$), inflates as further pooling forces wrong merges, and drops again only at full
  pooling ($m=1$), where the bias spikes.}
  \label{fig:sim:path}
\end{figure}

\section{Empirical Applications}
\label{sec:empirical_application}

We present two applications that exhibit the two regimes the method is built for, one where the cohort-time effects carry recoverable heterogeneity and one where they are indistinguishable from a common value.

The simulations generate the micro panel directly, but the natural inputs in an application are a vector of first-stage cohort-time effects and their joint sampling covariance, produced by whichever heterogeneity-robust estimator suits the design, on which the partition model then operates. Write $\hat\tau$ for the fully flexible estimator of the $K$ cohort-time effects and $\hat\Sigma$ for its covariance, and collect a partition $\mathcal{P}=\{C_1,\dots,C_m\}$ into the $K\times m$ group-indicator matrix $R$, so that partial homogeneity is the restriction $\tau=R\phi$. The first-stage estimator is asymptotically Gaussian, which gives the two-stage likelihood
\begin{equation}
  \hat\tau \mid \phi,\mathcal{P} \;\sim\; N\!\left(R\phi,\ \hat\Sigma\right),
  \label{eq:app:twostage}
\end{equation}
the model of Section~\ref{sec:bayes:model} read in the coordinates of the first-stage estimates: the group effects keep the conjugate prior \eqref{eq:bayes:phi} and integrate out as before. Nothing downstream depends on the first stage beyond the pair $(\hat\tau,\hat\Sigma)$.

\begin{lemma}[First-stage summaries are sufficient]
\label{lem:sufficiency}
For any partition $\mathcal{P}$ with group-indicator matrix $R$, the collapsed marginal likelihood \eqref{eq:bayes:logmarg} and the group-effect posterior \eqref{eq:bayes:phipost} depend on the data only through $R'\hat\Sigma^{-1}\hat\tau$ and $R'\hat\Sigma^{-1}R$. Hence the within-transformed model of Section~\ref{sec:bayes:model} and the summary-statistic model \eqref{eq:app:twostage} induce the same posterior, and any consistent first-stage covariance $\hat\Sigma$ enters in place of $\sigma^{2}(\Dtd'\Dtd)^{-1}$.
\end{lemma}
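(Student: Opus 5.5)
The plan is to rewrite every data-dependent piece of \eqref{eq:bayes:logmarg} and \eqref{eq:bayes:phipost} in terms of the full-design cross-products, and then to identify those cross-products with the first-stage summaries. First I will note that the grouped design is a linear map of the full one, $\tilde D^{*}=\tilde D R$. Hence
\[
\tilde D^{*\prime}\tilde D^{*}=R'(\tilde D'\tilde D)R,
\qquad
\tilde D^{*\prime}\tilde Y=R'\tilde D'\tilde Y=R'(\tilde D'\tilde D)\hat\tau,
\]
where the last equality uses the normal equations of \eqref{eq:flex_var}. The centered term is $\tilde D^{*\prime}\tilde Y_0=\tilde D^{*\prime}\tilde Y-\mu_0\tilde D^{*\prime}\tilde D^{*}\mathbf 1_m$. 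It is built from the same two objects.

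Second, I will match these objects to the summaries. When $\hat\Sigma=\sigma^2(\tilde D'\tilde D)^{-1}$ we have $\hat\Sigma^{-1}=\tilde D'\tilde D/\sigma^2$. So $\tilde D^{*\prime}\tilde D^{*}/\sigma^2=R'\hat\Sigma^{-1}R$ and $\tilde D^{*\prime}\tilde Y/\sigma^2=R'\hat\Sigma^{-1}\hat\tau$. Substituting into \eqref{eq:bayes:phipost} gives
\[
\Sigma^{*}=(R'\hat\Sigma^{-1}R+I_m/\sigma_0^2)^{-1},
\qquad
\mu^{*}=\Sigma^{*}(R'\hat\Sigma^{-1}\hat\tau+\mu_0\mathbf 1_m/\sigma_0^2).
\]
This is exactly the conjugate Normal posterior from the likelihood \eqref{eq:app:twostage} combined with the prior \eqref{eq:bayes:phi}. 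For the marginal likelihood I rewrite $\kappa\,\tilde D^{*\prime}\tilde D^{*}=\sigma_0^2R'\hat\Sigma^{-1}R$. The quadratic term $\frac{\kappa}{2\sigma^2}(\cdot)'(\cdot)^{-1}(\cdot)$ then becomes
\[
\tfrac{\sigma_0^2}{2}\,b'(I_m+\sigma_0^2R'\hat\Sigma^{-1}R)^{-1}b,
\qquad
b=R'\hat\Sigma^{-1}(\hat\tau-\mu_0R\mathbf 1_m).
\]
So \eqref{eq:bayes:logmarg} depends only on the two stated statistics, up to terms constant in $\mathcal P$.

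Third, I will check that the summary model's own collapsed marginal matches this. Integrating $\phi$ out of \eqref{eq:app:twostage} gives $\hat\tau\mid\mathcal P\sim N(\mu_0R\mathbf 1_m,\hat\Sigma+\sigma_0^2RR')$. Applying the Woodbury identity and the matrix determinant lemma, as was done for \eqref{eq:bayes:marg}, should reproduce the same log-determinant and quadratic form. The left-over terms $\log|\hat\Sigma|$ and $\hat\tau'\hat\Sigma^{-1}\hat\tau$ do not depend on $\mathcal P$.

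I also need the within-model analogue of this identity, which is the orthogonal decomposition $\|\tilde Y-\tilde D R\phi\|^2=\mathrm{RSS}_{\text{flex}}+(\hat\tau-R\phi)'\tilde D'\tilde D(\hat\tau-R\phi)$. The first term is free of $(\mathcal P,\phi)$. So the two likelihoods are proportional as functions of $(\mathcal P,\phi)$. Multiplying by the common priors on $\phi$ and $\mathcal P$ then yields identical posteriors; this is the cleanest route.

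The main obstacle is the $\sigma^2$ bookkeeping in the random-variance model. $\mathrm{RSS}_{\text{flex}}$ enters the inverse-gamma conditional, so exact equivalence at fixed $\sigma^2$ is clean, but for random $\sigma^2$ I would state it conditionally on $\sigma^2$. In the summary model, $\hat\Sigma$ absorbs the scale.

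The final clause, that any consistent $\hat\Sigma$ can replace $\sigma^2(\tilde D'\tilde D)^{-1}$, I would argue as follows. The posterior is a continuous function of $(R'\hat\Sigma^{-1}R,\,R'\hat\Sigma^{-1}\hat\tau)$, so plugging in a consistent estimator gives an asymptotically equivalent posterior. I would present this as a substitution justified by asymptotic normality rather than an exact identity.
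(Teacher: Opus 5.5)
Your argument is correct, and its core step is the paper's: write $\tilde D^{*}=\tilde D R$ and use the first-stage normal equations to turn the two data moments in \eqref{eq:bayes:logmarg}--\eqref{eq:bayes:phipost} into $R'\hat\Sigma^{-1}R$ and $R'\hat\Sigma^{-1}\hat\tau$. You differ from the paper in generality and in how you reach the final clause.

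The paper never specializes to $\sigma^{2}I_r$. It works in the $\Omega^{-1}$ metric of Appendix~\ref{app:hetero}, where $\tilde D'\Omega^{-1}\tilde D=\hat\Sigma^{-1}$ and $\tilde D'\Omega^{-1}\tilde Y=\hat\Sigma^{-1}\hat\tau$ make the moments $S,u$ of \eqref{eq:het:moments} equal to the two summaries exactly, and it recovers the homoskedastic case at the end. Its ``any $\hat\Sigma$'' clause is thus an exact finite-sample identity, valid for $\hat\Sigma=(\tilde D'\Omega^{-1}\tilde D)^{-1}$ paired with the matching GLS $\hat\tau$.

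You establish exactness only at $\hat\Sigma=\sigma^{2}(\tilde D'\tilde D)^{-1}$ and treat the general clause as an asymptotic substitution. That reading has a merit: it also covers first stages that are not GLS with a model-based covariance (OLS with a sandwich or influence-function $\hat\Sigma$), where the paper's identity need not hold. But your continuity step is too quick as written, because $R'\hat\Sigma^{-1}R$ diverges at rate $N$. You need ratio-consistency, $\Sigma_N^{-1/2}\hat\Sigma\,\Sigma_N^{-1/2}\to I$ in probability with $\Sigma_N$ the true covariance of $\hat\tau$. You must then show that the induced changes are $o_p(1)$, both in the log-marginal differences across partitions and in the rescaled posterior of $\phi$.

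Your factorization $\|\tilde Y-\tilde DR\phi\|^{2}=\mathrm{RSS}_{\text{flex}}+(\hat\tau-R\phi)'\tilde D'\tilde D(\hat\tau-R\phi)$ is a cleaner proof of ``same posterior'' than matching the marginal and the conditional separately, since it makes the two likelihoods proportional in $(\mathcal P,\phi)$ at once. Rewritten with $\Omega^{-1}$ weights and the GLS $\hat\tau$, it delivers the paper's full generality in one line. Your caveat that the equivalence holds conditionally on $\sigma^{2}$ is right; the paper leaves it implicit, and its argument likewise fixes $\Omega$.
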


\begin{proof}
The flexible estimator satisfies $\Dtd'\Omega^{-1}\Dtd=\hat\Sigma^{-1}$ and $\Dtd'\Omega^{-1}\tilde{Y}=\hat\Sigma^{-1}\hat\tau$ for the first-stage error covariance $\Omega$. Under $\mathcal{P}$ the grouped design is $\Dtd^{*}=\Dtd R$, so the two moments that carry the data into \eqref{eq:bayes:logmarg} and \eqref{eq:bayes:phipost} (Appendix~\ref{app:hetero}, eq.~\eqref{eq:het:moments} with $\Omega$ the first-stage covariance) are $\Dtd^{*\prime}\Omega^{-1}\Dtd^{*}=R'\hat\Sigma^{-1}R$ and $\Dtd^{*\prime}\Omega^{-1}\tilde{Y}=R'\hat\Sigma^{-1}\hat\tau$. Taking $\hat\Sigma=\sigma^{2}(\Dtd'\Dtd)^{-1}$ returns the homoskedastic moments of Section~\ref{sec:bayes:model}.
\end{proof}

The covariance $\hat\Sigma$ is thus the only channel through which the design enters, and in each application we take it from the first-stage estimator itself, so the grouping is applied to exactly the covariance the design implies.

\subsection{Minimum Wages and Teen Employment: Recovering Partial Homogeneity}
\label{sec:cs_application}

We revisit the county-level analysis of minimum wages and teen employment from
\citet{callaway2021}, using their publicly available panel of $500$ U.S.\
counties observed over $2003$--$2007$, with cohorts first raising the minimum
wage in $2004$, $2006$, and $2007$ and a not-yet-treated control group. We
estimate the $K=7$ post-treatment cohort-time effects $\mathrm{ATT}(g,t)$ with the
\citet{callaway2021} estimator and take their exact joint covariance
$\hat\Sigma$ from the estimator's influence functions, the pair $(\hat\tau,\hat\Sigma)$
of Lemma~\ref{lem:sufficiency}. Because the cells share
counties, $\hat\Sigma$ is non-diagonal; as in the simulations, we use it exactly
in both estimators, and the DP posterior draws use the corresponding
Gauss--Markov posterior (Remark~\ref{rem:sigma}).

Unlike a homogeneous design, here there is real structure to recover: the
cross-cell dispersion of the estimates ($0.050$) is more than twice the typical
standard error ($0.022$), so the spread exceeds what sampling noise alone would
produce and, unlike the homogeneous case of Section~\ref{sec:cengiz_application},
reflects genuine heterogeneity. This descriptive signal-to-noise ratio is not the
separation $\delta$ of Section~\ref{sec:sim}, which is defined on the true
adjacent-group gaps and the full joint covariance; it signals recoverable
heterogeneity without fixing its exact position on the $\delta$ scale, which the
co-clustering results below assess directly.
Table~\ref{tab:cs} reports the seven effects, the $\ell_0$ grouping selected by
BIC, and the per-cell variance reduction.

\begin{table}[H]
    \centering
    \caption{Cohort-time effects on log teen employment, \citet{callaway2021}
    data. $\mathrm{ATT}(g,t)$ and its standard error; the BIC-selected $\ell_0$
    group and grouped effect; and the ratio of the $\ell_0$-grouped variance to
    the flexible variance for each cell (lower = more precise).}
    \label{tab:cs}
    \small
    \begin{tabular}{lccccc}
        \toprule
        cohort:year & $\mathrm{ATT}(g,t)$ & SE & $\ell_0$ group & grouped effect & var.\ ratio \\
        \midrule
        2004:2004 & $-0.019$ & 0.022 & 1 & $-0.029$ & 0.28 \\
        2006:2007 & $-0.041$ & 0.020 & 1 & $-0.029$ & 0.33 \\
        2007:2007 & $-0.026$ & 0.017 & 1 & $-0.029$ & 0.49 \\
        2004:2005 & $-0.078$ & 0.030 & 2 & $-0.087$ & 0.77 \\
        2004:2007 & $-0.101$ & 0.034 & 2 & $-0.087$ & 0.61 \\
        2004:2006 & $-0.136$ & 0.035 & 3 & $-0.140$ & 0.78 \\
        2006:2006 & $+0.005$ & 0.016 & 4 & $+0.011$ & 0.78 \\
        \bottomrule
    \end{tabular}
\end{table}

The $\ell_0$-PH estimator collapses the seven cells into four effect levels: a
near-zero group containing the $2004$ impact effect and the recent cohorts'
effects, the $2004$ cohort's larger mature effects, its peak, and the lone
positive cell. Pooling roughly halves the variance of the cells that share a
group (the near-zero group's variance ratios are $0.28$--$0.49$), and even the
singleton cells gain precision because the Gauss--Markov estimator borrows across
the correlated cells. The overall (cohort-size-weighted) average effect is
essentially unchanged and slightly more precise, $-0.040$ (SE $0.012$) under
the flexible estimator versus $-0.039$ (SE $0.012$) under $\ell_0$, so the
substantive conclusion of a roughly $4\%$ teen-employment reduction is preserved
while the estimates are summarized by a handful of interpretable effect levels.
These point estimates agree on the headline, but both condition on a single
grouping, and the Dirichlet-Process posterior that follows both gauges how firmly
that grouping is identified and, given the small number of cells, shows how much
the overall summary can move once that conditioning is relaxed.

The DP posterior sharpens the interpretation by quantifying which groupings
are certain. Figure~\ref{fig:cs} plots the effects colored by $\ell_0$ group and
the posterior co-clustering matrix, formed with the exact within-cell covariance
at the parsimonious prior $\alpha=1$.
The small-effect cells recur together with moderate-to-high probability, the
$2004$ impact effect co-clustering with the lone positive $2006{:}2006$ cell at
$0.86$ and with the $2004{:}2005$ effect at $0.77$, so a low-effect cluster is a
robust feature. The large mature $2004$ effects are more distinct, with the peak
$2004{:}2006$ co-clustering with every other cell at $0.54$ or below. The matrix
is more diffuse than any single partition, correctly signaling that with only
seven correlated cells the finer grouping is genuinely uncertain rather than a
firm partition. This is the honest reading the co-clustering is meant to deliver,
a recurring coarse structure with calibrated uncertainty about the finer detail.
Because the co-clustering probabilities are conditional on $\alpha$, we checked
their sensitivity: across $\alpha\in[0.5,14]$ the near-zero cells co-cluster more
tightly with one another (mean pairwise probability falling from $0.66$ to $0.32$
as larger $\alpha$ favors finer partitions) than the peak $2004{:}2006$ cell does
with any other cell ($0.63$ down to $0.22$). The absolute probabilities decline
with $\alpha$, but the ordering---a cohesive low-effect group and a distinct
peak---persists throughout, so the coarse structure is not an artifact of the
prior.

\begin{figure}[H]
    \centering
    \begin{subfigure}{0.54\linewidth}\includegraphics[width=\linewidth]{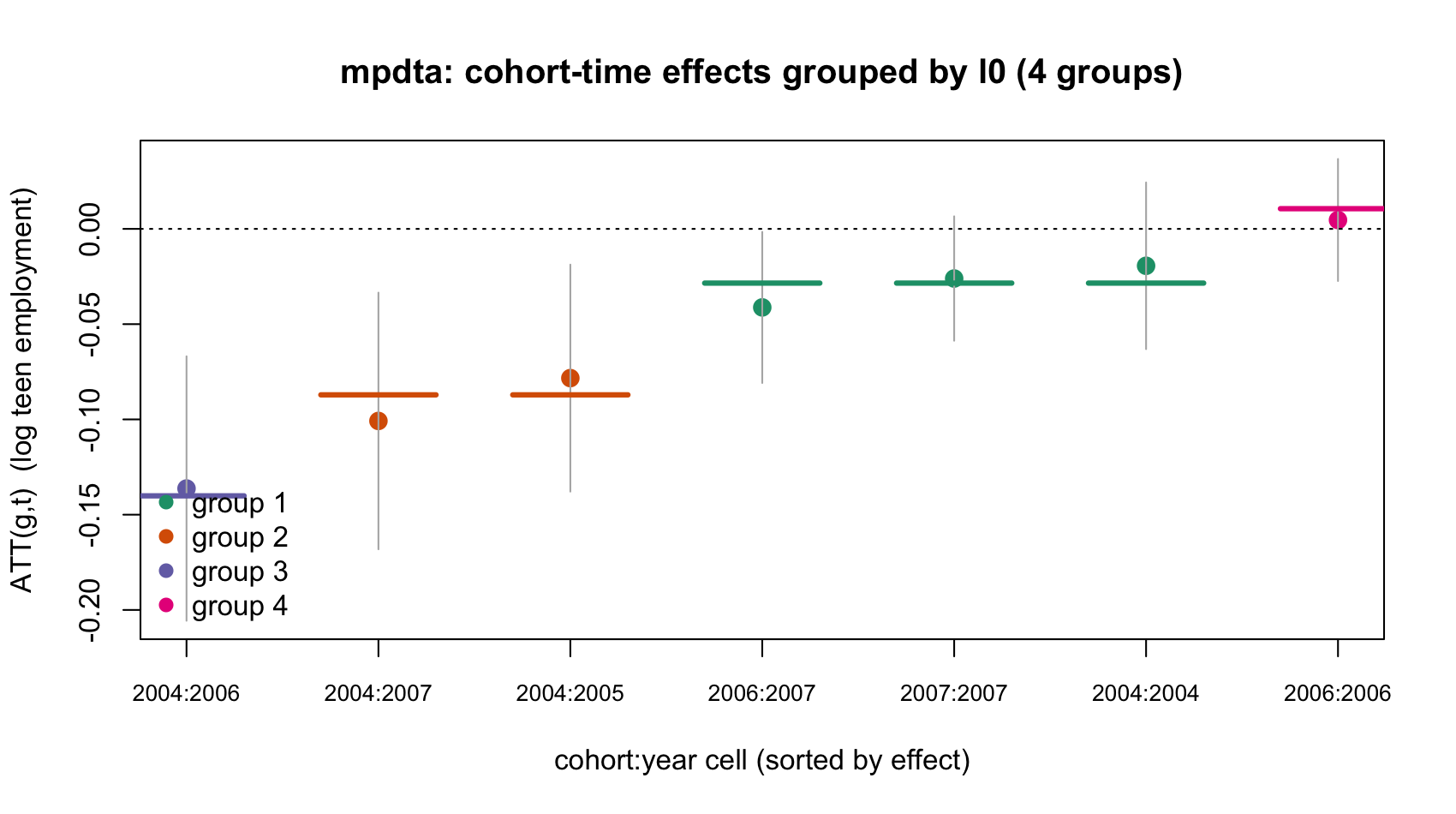}\end{subfigure}
    \hfill
    \begin{subfigure}{0.44\linewidth}\includegraphics[width=\linewidth]{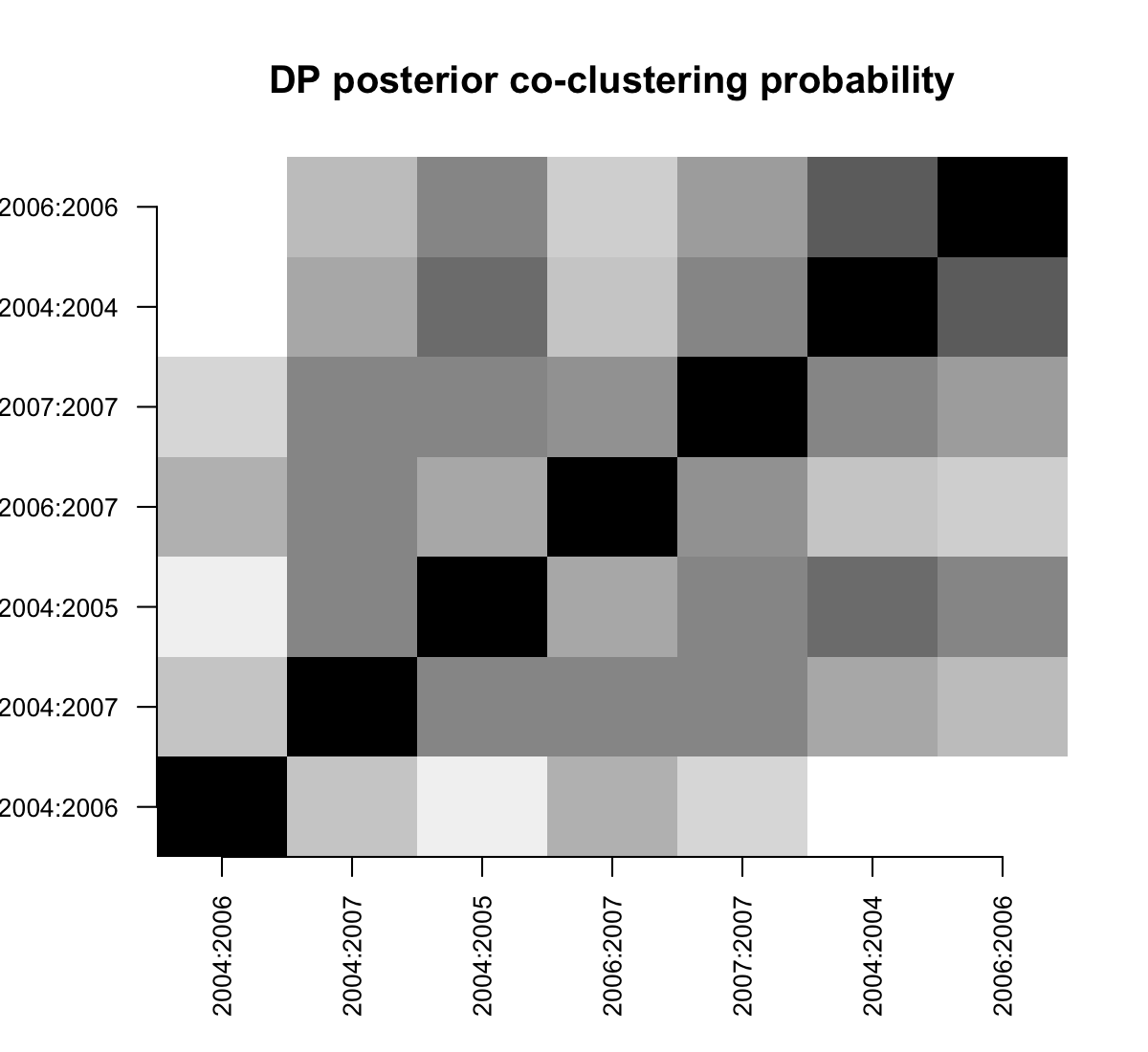}\end{subfigure}
    \caption{Left: the seven cohort-time effects (with $95\%$ intervals) colored
    by $\ell_0$ group, horizontal bars mark the grouped effects. Right: the DP
    posterior co-clustering probabilities; darker cells co-cluster more often.}
    \label{fig:cs}
\end{figure}

Because the number of cells is small, the DP overall effect, unlike the flexible and $\ell_0$-PH point estimates, is sensitive to the concentration $\alpha$. Figure~\ref{fig:cs_alpha} traces the population-weighted effect and its $95\%$ credible band across $\alpha$ from $0.1$ to $100$, against the fully pooled ($-0.010$) and flexible ($-0.040$) anchors. The effect slides monotonically from about $-0.017$ under heavy pooling to about $-0.038$ as $\alpha$ grows and the model approaches the flexible fit, and it is bounded away from zero only for $\alpha \gtrsim 1$. Rather than fix $\alpha$, we report the whole path. As a reference point, a BIC pass on the agglomeration path selects four groups, which the DP's posterior expected number of groups matches at $\alpha \approx 14$ (panel b), where the DP effect is about $-0.032$ and close to the flexible estimate. We offer this only as a suggestion. Choosing $\alpha$ from the data turns the prior into a data-dependent object, with the usual consequences of empirical Bayes (understated posterior uncertainty and a double use of the data), so we prefer to show the sensitivity rather than commit to a single value.

\begin{figure}[H]
    \centering
    \includegraphics[width=0.8\linewidth]{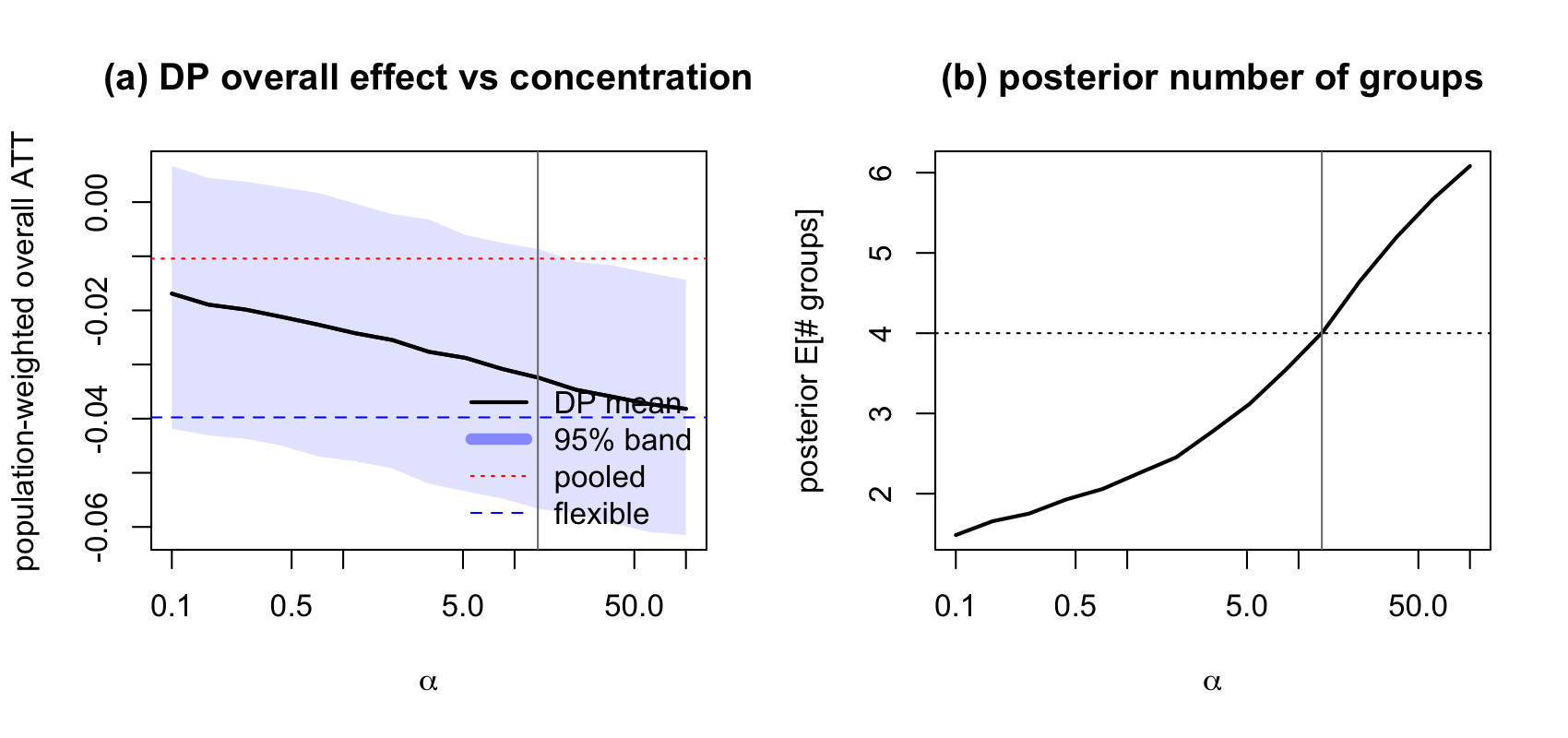}
    \caption{Sensitivity of the \citet{callaway2021} overall effect to the DP
    concentration $\alpha$. Panel (a), the population-weighted overall ATT (solid)
    with its $95\%$ credible band (shaded), against the fully pooled (red dotted,
    $-0.010$) and flexible (blue dashed, $-0.040$) benchmarks. Panel (b), the
    posterior expected number of groups. The vertical line marks $\alpha \approx 14$,
    where the posterior expected number of groups equals the four groups selected
    by BIC on the agglomeration path.}
    \label{fig:cs_alpha}
\end{figure}

The dynamic (event-study) aggregation in Figure~\ref{fig:cs_dyn} clarifies what
the estimators are grouping. The effect on teen employment is small at impact
(event time $0$, $-0.019$) and grows over the horizon to $-0.054$, $-0.136$, and
$-0.101$ at event times $1$ through $3$, the familiar pattern of a
minimum-wage effect that accumulates over several years. This accumulation is,
however, a within-cohort reading: the composition of the event-time aggregate
changes across the horizon, and event times $2$ and $3$ are identified off the
$2004$ cohort alone, so the longer-run trajectory is that cohort's and the design
cannot say whether later cohorts would follow it. The impact-year effects, by
contrast, are small and similar across cohorts. The
partial homogeneity the method recovers is, in this application, largely the
homogeneity of the short-run response across cohorts, with the accumulating
longer-run effects, carried by the $2004$ cohort's later cells, kept separate. The pre-treatment coefficients are small
(event times $-2$ and $-1$), with a modest positive value at the longest lead
that is identified off a single early cohort.

\begin{figure}[H]
    \centering
    \includegraphics[width=0.62\linewidth]{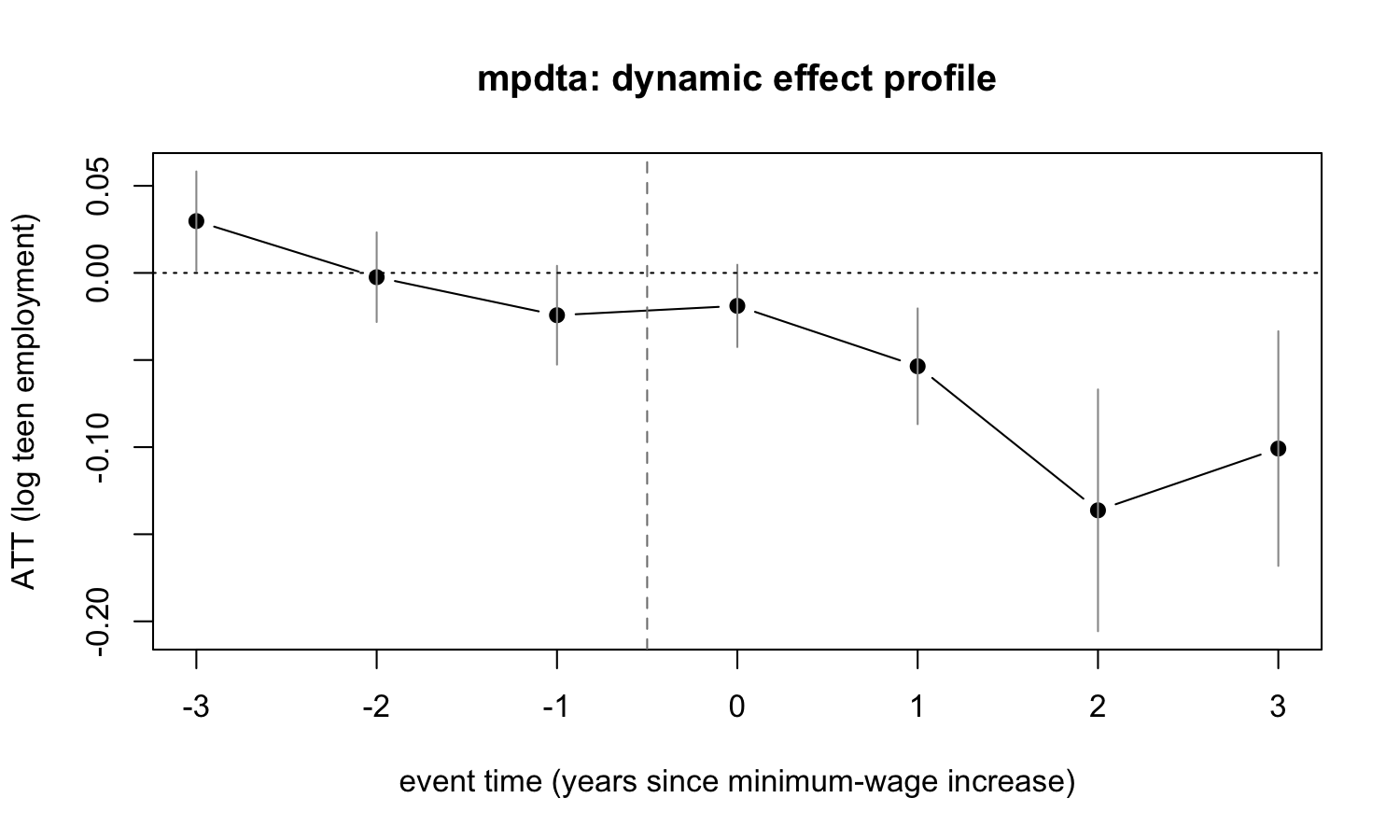}
    \caption{Dynamic (event-study) aggregation of the \citet{callaway2021}
    effects: the teen-employment effect is near zero at impact and builds over
    event time. Pre-treatment coefficients (left of the dashed line) are small.}
    \label{fig:cs_dyn}
\end{figure}

\subsection{Minimum Wages and Low-Wage Jobs: A Specification Test}
\label{sec:cengiz_application}

Our second application is the influential study of minimum wages and low-wage
jobs by \citet{cengiz2019}, which estimates the employment effect of $138$
state-level minimum wage increases between $1979$ and $2016$ using a ``stacked''
event-study design. Here the method plays the complementary role of a
specification test.

\subsubsection{Context and Stacked Event-Study Design}

In the stacked design, a separate dataset (a ``stack'' or ``event'') is created for each of the $E = 138$ minimum wage hikes. Each stack $g \in \{1, \dots, E\}$ contains the treated state and a set of clean control states observed over a 5-year window around the policy implementation. The stacks are then appended together.

The authors' baseline specification is a fully pooled TWFE regression on the stacked data, which imposes a single, homogeneous treatment effect across all 138 events:
\begin{equation} \label{eq:pooled_cengiz}
    Y_{itg} = \alpha_{ig} + \gamma_{tg} + \theta \cdot D_{itg} + \epsilon_{itg},
\end{equation}
where $Y_{itg}$ is the employment outcome for state $i$ at time $t$ in stack $g$. The terms $\alpha_{ig}$ and $\gamma_{tg}$ are state-by-stack and time-by-stack fixed effects, ensuring that treated units are only compared to control units within their specific sub-experiment. $D_{itg}$ is an indicator equal to 1 if state $i$ is treated in stack $g$ in the post-treatment period. Under this specification, the authors report a near-zero average effect on net low-wage employment.

However, as established in Section~\ref{sec:problem}, under heterogeneity this pooled estimator $\hat{\theta}$ is efficient but biased for the individual event effects $\tau_g$ and for any aggregate weighted differently from the design; it remains unbiased for its own design-weighted average, so ``bias'' here is relative to the disaggregated target rather than to that aggregate. To allow for heterogeneity, we relax Equation~\ref{eq:pooled_cengiz} into a fully flexible Interacted TWFE (ITWFE) model---the stacked-design counterpart of the flexible specification \eqref{eq:twfe_flex} of Section~\ref{sec:problem}---assigning a distinct Cohort-Average Treatment Effect on the Treated (CATT), $\tau_g$, to each event:
\begin{equation} \label{eq:flexible_cengiz}
    Y_{itg} = \alpha_{ig} + \gamma_{tg} + \sum_{e=1}^{E} \tau_e \left( D_{itg} \times \mathbb{I}\{event = e\} \right) + \epsilon_{itg}
\end{equation}

Because the design matrix of Equation~\ref{eq:flexible_cengiz} is block-diagonal, the estimation of each $\hat{\tau}_g$ is perfectly separable. Block-diagonality makes the point estimates separable but not their sampling errors, since control states recur across stacks, so the event estimates need not be independent. Our homogeneity assessment is robust to this: the within-event randomization test below holds the cross-event dependence fixed by construction, and the pre-trend noise reference enters the signal-to-noise comparison symmetrically with the post-treatment spread, so a common dependence largely cancels. The grouping and the aggregate credible interval, by contrast, use a diagonal covariance and would be sharpened by a joint state-clustered $\hat\Sigma$, which enters the estimators unchanged through Lemma~\ref{lem:sufficiency} but requires the stacked micro panel rather than the event-level summaries in the replication package; because the events carry no recoverable heterogeneity, this refinement does not alter the substantive conclusion. While this flexible model is unbiased, relying strictly on state-level variation yields highly noisy estimates for individual events.

To recover the latent structure, we apply our $\ell_0$-penalized estimator, which selects a partition of the 138 minimum wage effects by minimizing the penalized residual sum of squares:
\begin{equation} \label{eq:penalized_cengiz}
    \min_{\boldsymbol{\tau}, \boldsymbol{\alpha}, \boldsymbol{\gamma}} \left\{ \sum_{g=1}^{E} \sum_{i,t} \left( Y_{itg} - \alpha_{ig} - \gamma_{tg} - D_{itg} \tau_g \right)^2 + \lambda \sum_{j < k} \mathbb{I}(\tau_j \neq \tau_k) \right\},
\end{equation}
Exact minimization over partitions is intractable, so the greedy agglomerative search of Appendix~\ref{app:algorithm} returns the selected partition; $\lambda$ is the hyperparameter determining the threshold for model parsimony. The penalty forces states with sufficiently similar independent estimates to share a single coefficient, achieving partial homogeneity without imposing the strict global restriction of Equation~\ref{eq:pooled_cengiz}.

\subsubsection{Baseline Replication and Data Hygiene}

Before applying the penalized estimator to explore latent heterogeneity, we first replicate the fully pooled TWFE specification from \citet{cengiz2019} to establish a baseline. Table~\ref{tab:baseline_replication} demonstrates that our replication sample accurately recovers the primary pooled estimates reported by the authors. The pooled model implies a near-zero average net effect on low-wage employment, driven by the destruction of jobs below the new minimum wage ($\Delta b$) being largely offset by the ``bunching'' of new compliance jobs at or slightly above the new minimum wage ($\Delta a$).

\begin{table}[H]
    \centering
    \caption{Baseline Pooled ITWFE Replication}
    \label{tab:baseline_replication}
    \begin{tabular}{lc}
        \toprule
        \textbf{Employment Outcome} & \textbf{Pooled Estimate} \\
        \midrule
        Missing Jobs ($\Delta b$) & $-0.016^{***}$ \\
        Excess Jobs ($\Delta a$) & $0.019^{***}$ \\
        Net Employment Change & $0.003$ \\
        \bottomrule
    \end{tabular}
    \vspace{0.1cm}
    \begin{minipage}{0.7\textwidth}
        \small Note: Estimates represent the change in employment as a fraction of the affected workforce. These pooled point estimates closely mirror the benchmark findings of the original study, suggesting a negligible average disemployment effect.
    \end{minipage}
\end{table}

While the pooled model provides a stable average, estimating Equation~\ref{eq:flexible_cengiz} flexibly for all 138 independent events reveals severe localized sampling anomalies. Because the employment effects are normalized by $\bar{b}_{-1}$ (the share of the state's workforce earning below the new minimum wage in the year prior to the hike), regions with very thin survey samples can produce a denominator approaching zero. Dividing the raw job counts by a near-zero denominator causes the resulting percentage estimate to artificially explode to economically impossible magnitudes.

A handful of events yield mathematically anomalous net percentage effects (exceeding several hundred percent of the affected workforce) for exactly this reason. To sidestep the near-zero-denominator problem, our specification analysis below summarizes each event by its average post-treatment event-study coefficient on affected employment rather than by the ratio-normalized percentage effect; the event-study coefficient is well behaved (its cross-event standard deviation is $0.005$, with no explosive outliers). This substitution changes the analyzed object from the $\bar b_{-1}$-normalized percentage effect to the response on the affected-employment scale. Because $\bar b_{-1}$ varies across events, homogeneity and aggregation need not transfer between the two scales; the specification analysis below therefore concerns the affected-employment responses, the object that carries the employment signal and the natural scale on which to aggregate a total-employment effect, and complements rather than re-derives the normalized headline of \citet{cengiz2019}.

\subsubsection{Specification Analysis: Is the Null Effect Homogeneous?}
\label{sec:cengiz:spec}

We obtain the $E = 138$ event-level effects from the authors' Harvard Dataverse replication package for \citet{cengiz2019}, summarizing event $g$ by $\hat\tau_g$, the average of its post-treatment event-study coefficients on affected employment (event times $1$ through $4$), constructed symmetrically with the four pre-treatment coefficients (event times $-4$ through $-1$). Before selecting a specification, we ask whether there is any heterogeneity to recover. The design provides an internal noise gauge: under no-anticipation the pre-treatment coefficients have expectation zero, and because each event's pre- and post-treatment summaries average the same number of event-study coefficients from the same design, the cross-event dispersion of the pre-treatment averages estimates the sampling variance of the post-treatment averages under the null of a common effect. We find a pre-treatment (noise) standard deviation of $s_{\text{pre}}=0.0065$, which exceeds the post-treatment cross-event spread of $s_{\text{post}}=0.0050$. Decomposing the post-treatment cross-event variance into sampling noise and genuine heterogeneity, $\hat\sigma^2_{\text{het}}=\max\{0,\,s^2_{\text{post}}-s^2_{\text{pre}}\}$, yields $\hat\sigma^2_{\text{het}}=0$ (a $0\%$ heterogeneity share) at a signal-to-noise ratio $s_{\text{post}}/s_{\text{pre}}=0.77$, so the 138 event effects are statistically indistinguishable from a common value. A within-event randomization test makes this precise: under the null of a common effect we exchange each event's pre- and post-treatment averages, an exchange internal to each event that holds the cross-event dependence structure fixed, and compare the observed post-treatment dispersion to the resulting reference distribution. It sits below the reference median (one-sided $p=0.71$), so a common effect cannot be rejected.

This places the application squarely in the low-separation regime that Section~\ref{sec:sim} identifies as the boundary of reliable recovery, and the estimators behave accordingly. The BIC-tuned $\ell_0$-PH estimator splits the events into five effect-ordered bands (of sizes $51$, $47$, $36$, $3$, $1$), and the Bayes-PH estimator returns between $2.7$ groups (parsimonious prior, $\alpha=1$) and $7.6$ groups ($\alpha=5$). As our simulations show, at a signal-to-noise ratio below one these ``groups'' are quantiles of noise rather than genuine effect clusters, and should not be interpreted as evidence of systematic heterogeneity; indeed, the randomization test above does not reject a common effect. The disagreement between the two estimators is itself a symptom of the regime. None of this proves an exactly common effect, since a null of equality cannot be accepted; but the failure to reject, together with the zero estimated heterogeneity share, supports pooling as an adequate specification for the aggregate rather than establishing a literally homogeneous set of effects.

What is robust, and what matters for the applied conclusion, is that the overall average effect is invariant to the specification. Table~\ref{tab:cengiz:att} reports it, showing that the pooled, $\ell_0$, and Bayes-PH estimates of the population-weighted average of the affected-employment responses are all near zero and mutually indistinguishable, and the DP posterior credible interval is a tight band of small positive values, economically negligible and far from any meaningful disemployment effect; its width, which conditions on cross-event independence, is not the basis for any conclusion here. Allowing for arbitrary event-level heterogeneity in these responses creates no recoverable structure for a pooled specification to miss, so on the affected-employment scale pooling is adequate. This is stated on the coefficient scale of Table~\ref{tab:cengiz:att} rather than the $\bar b_{-1}$-normalized scale of Table~\ref{tab:baseline_replication},\footnote{The two aggregates are on different scales and are not directly comparable. Table~\ref{tab:cengiz:att} averages the affected-employment event-study coefficients $\hat\tau_g$, whereas Table~\ref{tab:baseline_replication} reports the $\bar b_{-1}$-normalized pooled effect; per event the two are related by $\tau_g^{\mathrm{pct}}=\hat\tau_g/\bar b_{-1,g}$, and the near-zero denominators in thin cells are what make the per-event normalized effects explode (here from roughly $-130\%$ to $+420\%$) and motivate the coefficient-scale summary. Neither number is a simple rescaling of the other, since the normalized benchmark is a single pooled coefficient rather than an average of the per-event ratios; both are small and positive.} and so complements, rather than re-derives, the near-zero normalized net-employment finding of \citet{cengiz2019}. Figure~\ref{fig:cengiz} displays the 138 flexible event effects with the pooled effect superimposed.

\begin{table}[H]
    \centering
    \caption{Overall employment effect under alternative specifications, Cengiz et al.\ (2019) events. Effects are average post-treatment coefficients on affected employment (population-weighted for the aggregate). The Bayes-PH interval is the $95\%$ posterior credible interval.}
    \label{tab:cengiz:att}
    \small
    \begin{tabular}{lcc}
        \toprule
        Specification & Groups & Overall effect \\
        \midrule
        Pooled (single effect)      & $1$    & $0.0015$ \\
        Flexible (per event)        & $138$  & $0.0018$ \\
        $\ell_0$-PH (BIC)              & $5$    & $0.0016$ \\
        Bayes-PH ($\alpha=1$)       & $2.7$  & $0.0014$\ \ $[0.0003,\,0.0026]$ \\
        \bottomrule
    \end{tabular}
\end{table}

\begin{figure}[H]
    \centering
    \includegraphics[width=0.8\linewidth]{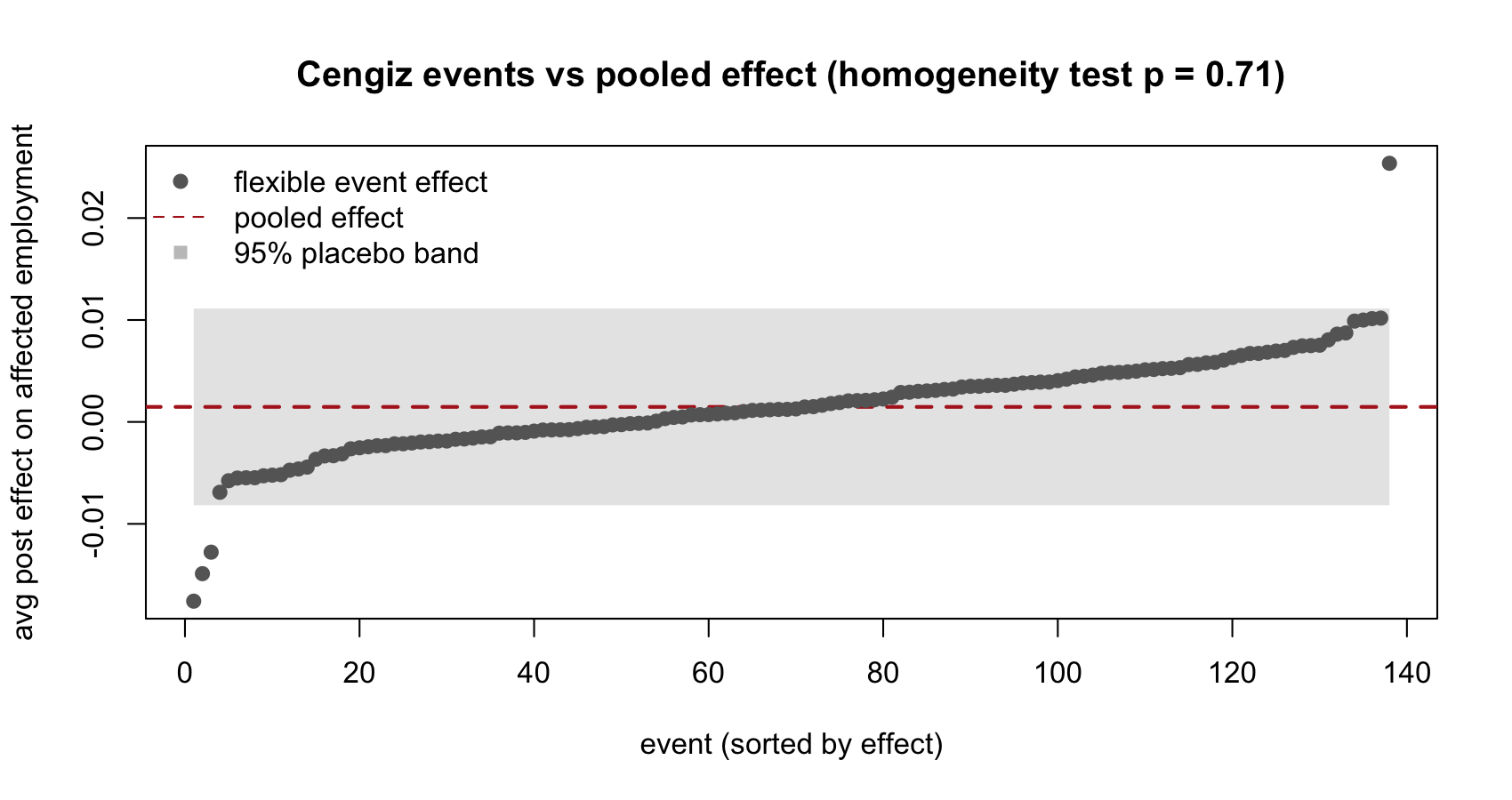}
    \caption{The 138 event-level employment effects (average post-treatment coefficient on affected employment), sorted; the dashed line is the pooled effect and the shaded region is the $95\%$ placebo band, the spread of the pre-treatment coefficients expected under a common effect. The cross-event dispersion is no larger than this noise band (within-event randomization test $p=0.71$), and the pooled effect is near zero.}
    \label{fig:cengiz}
\end{figure}

Finally, we note that \citet{cengiz2019} themselves group events into three ``Card--Krueger'' bins defined ex ante by how binding the minimum wage is. Our exercise asks, and answers in the negative, a different question, whether the employment effects, taken as estimated, cluster into distinct levels. The two are complementary, in that grouping by a policy covariate is a modeling choice about expected heterogeneity, whereas our data-driven partition tests for realized heterogeneity in the effects and, here, finds none beyond noise.

\subsubsection{Diagnosing Canonical TWFE Bias: The Goodman-Bacon Decomposition}

To motivate the stacked design, we decompose the canonical (unstacked) two-way fixed effects estimand following \citet{goodman2021}. Two adjustments are needed to apply the decomposition here, since we proxy the affected workforce by employment below a fixed \$10 threshold (the true minimum-wage threshold moves across states and years), and we define treatment by the first increase a state enacts (to obtain an absorbing treatment). Both are conservative and, if anything, understate this contamination.

Table~\ref{tab:bacon_decomp} reports the result. About $20\%$ of the estimand's weight comes from ``forbidden'' comparisons that use already-treated states as controls, carrying a negative average estimate ($-0.006$), against the positive ($0.006$) of the clean treated-versus-untreated comparisons (about $57\%$ of the weight). The pooled coefficient is the weighted average of these components, so the forbidden comparisons enter it negatively and pull it below the clean comparisons; since the clean and forbidden blocks weight different cohorts and exposure horizons, what the decomposition exposes is contamination of the pooled estimand rather than a signed bias against a fixed target. This contamination is the standard rationale for the stacked event-study design used above, and more broadly for taking the cohort-time effects, rather than a single pooled coefficient, as the objects of interest that the specification methods of this paper then group.

\begin{table}[H]
    \centering
    \caption{Goodman--Bacon decomposition of the canonical TWFE estimand (simplified absorbing-treatment panel).}
    \label{tab:bacon_decomp}
    \begin{tabular}{lcc}
        \toprule
        Comparison type & Weight & Average estimate \\
        \midrule
        Earlier vs.\ later treated & 0.236 & $\phantom{-}0.001$ \\
        Later vs.\ earlier treated & 0.195 & $-0.006$ \\
        Treated vs.\ untreated     & 0.569 & $\phantom{-}0.006$ \\
        \bottomrule
    \end{tabular}
\end{table}

\section{Conclusion}
\label{sec:conc}

This paper addresses a step in the staggered DiD workflow that has received
comparatively little attention, model specification. Once cohort-time effects are
identified as the estimands of interest, the researcher must still decide how many
distinct parameters to estimate. We frame this as a partition-selection problem
and address it with a Dirichlet Process mixture over the cohort-time effects. The
DP posterior assigns a probability to every grouping and reports full uncertainty,
including the uncertainty in the grouping itself, and is the object we recommend
for inference. Holding the error variance fixed and adopting a pairwise penalty
on the partition, the maximum a posteriori estimator coincides with an
$\ell_0$-penalized estimator, a connection to
the homogeneity-pursuit literature. As a secondary benefit, the
partial-homogeneity restriction can also supply identifying information for a cell
that lacks a clean comparison, though that identification rests entirely on the
assumed grouping and is untestable for the cell that needs it.

The simulations and applications make the trade-offs concrete. Under partial
homogeneity with well-separated effect levels, both estimators cut the sampling
variance of the cohort-time effects by $26$--$52\%$ relative to the fully
flexible estimator at no bias cost, approaching the infeasible oracle as the
separation grows, with the advantage narrowing and, once the effects are too close to separate
reliably, reversing into a precision loss as the feasible estimators fall behind
flexible TWFE, and the DP posterior
attains near-nominal coverage by marginalizing over the partition where naive
$\ell_0$-PH intervals are usable but mildly anti-conservative. The two
applications show both regimes on real data. In the minimum-wage and
teen-employment data of \citet{callaway2021} the effects are genuinely
heterogeneous and the method recovers a partially homogeneous structure,
collapsing seven effects into a few interpretable levels and roughly halving the
variance of the pooled cells while preserving the headline effect, whereas in the
low-wage-jobs study of \citet{cengiz2019} the $138$ event-level
affected-employment responses are statistically indistinguishable from a common
value and the method correctly reports that pooling is adequate on that scale,
functioning as a formal specification test. The
same procedure thus both exploits real heterogeneity and declines to manufacture
it where none exists.

Several directions merit future work. First, the agglomerative algorithm is a heuristic; developing exact or near-exact algorithms for small $K$, or theoretical guarantees on approximation quality, would be valuable. Second, honest inference after partition selection deserves a fuller theoretical treatment, since the naive $\ell_0$-PH intervals are conditional on the selected grouping (Remark~\ref{rem:postselection}), so developing valid selective-inference corrections for them (the Bayesian posterior already propagates partition uncertainty) is an open problem. Third, formal asymptotic theory establishing consistency of the partition estimator, as both $N$ and the effective sample size per CATT grow, would solidify the method's theoretical foundations, and would formalize the separation threshold that our simulations identify as the determinant of the efficiency gain. Finally, whereas covariate adjustment in the outcome equation is immediate (Section~\ref{sec:problem}, equations~\eqref{eq:twfe_cov}--\eqref{eq:resid_cov}), letting observable characteristics inform the grouping itself, through a covariate-dependent partition prior, is a more substantial extension that could improve performance in settings with richer data.

\newpage
\bibliographystyle{aer}
\bibliography{references}

@article{goodman2021,
  author  = {Goodman-Bacon, Andrew},
  title   = {Difference-in-Differences with Variation in Treatment Timing},
  journal = {Journal of Econometrics},
  year    = {2021},
  volume  = {225},
  number  = {2},
  pages   = {254--277}
}

@article{dechaisemartin2020,
  author  = {de Chaisemartin, Cl\'ement and D'Haultf{\oe}uille, Xavier},
  title   = {Two-Way Fixed Effects Estimators with Heterogeneous Treatment Effects},
  journal = {American Economic Review},
  year    = {2020},
  volume  = {110},
  number  = {9},
  pages   = {2964--2996}
}

@article{dechaisemartin2023,
  author  = {de Chaisemartin, Cl\'ement and D'Haultf{\oe}uille, Xavier},
  title   = {Two-Way Fixed Effects and Differences-in-Differences with Heterogeneous Treatment Effects: A Survey},
  journal = {The Econometrics Journal},
  year    = {2023},
  volume  = {26},
  number  = {3},
  pages   = {C1--C30}
}

@unpublished{arora2026,
  author = {Arora, Parush and Bijani, Rishabh},
  title  = {Estimating Treatment Effects under Staggered Timing and Non-Spherical Errors},
  year   = {2026},
  note   = {Working paper, SSRN Working Paper No.\ 6558759}
}

@article{callaway2021,
  author  = {Callaway, Brantly and Sant'Anna, Pedro H. C.},
  title   = {Difference-in-Differences with Multiple Time Periods},
  journal = {Journal of Econometrics},
  year    = {2021},
  volume  = {225},
  number  = {2},
  pages   = {200--230}
}

@article{sun2021,
  author  = {Sun, Liyang and Abraham, Sarah},
  title   = {Estimating Dynamic Treatment Effects in Event Studies with Heterogeneous Treatment Effects},
  journal = {Journal of Econometrics},
  year    = {2021},
  volume  = {225},
  number  = {2},
  pages   = {175--199}
}

@article{borusyak2024,
  author  = {Borusyak, Kirill and Jaravel, Xavier and Spiess, Jann},
  title   = {Revisiting Event-Study Designs: Robust and Efficient Estimation},
  journal = {Review of Economic Studies},
  year    = {2024},
  volume  = {91},
  number  = {6},
  pages   = {3253--3285}
}

@article{gardner2022two,
  author  = {Gardner, John},
  title   = {Two-Stage Differences in Differences},
  journal = {arXiv preprint arXiv:2207.05943},
  year    = {2022}
}

@article{liu2024practical,
  author  = {Liu, Licheng and Wang, Ye and Xu, Yiqing},
  title   = {A Practical Guide to Counterfactual Estimators for Causal Inference with Time-Series Cross-Sectional Data},
  journal = {American Journal of Political Science},
  year    = {2024},
  volume  = {68},
  number  = {1},
  pages   = {160--176}
}

@article{wooldridge2025,
  author  = {Wooldridge, Jeffrey M.},
  title   = {Two-Way Fixed Effects, the Two-Way Mundlak Regression, and Difference-in-Differences Estimators},
  journal = {Empirical Economics},
  year    = {2025},
  volume  = {69},
  pages   = {2545--2587}
}

@article{su2016,
  author  = {Su, Liangjun and Shi, Zhentao and Phillips, Peter C. B.},
  title   = {Identifying Latent Structures in Panel Data},
  journal = {Econometrica},
  year    = {2016},
  volume  = {84},
  number  = {6},
  pages   = {2215--2264}
}

@article{bonhomme2015,
  author  = {Bonhomme, St\'ephane and Manresa, Elena},
  title   = {Grouped Patterns of Heterogeneity in Panel Data},
  journal = {Econometrica},
  year    = {2015},
  volume  = {83},
  number  = {3},
  pages   = {1147--1184}
}

@article{ke2015,
  author  = {Ke, Yuan and Fan, Jianqing and Wu, Yichao},
  title   = {Homogeneity Pursuit},
  journal = {Journal of the American Statistical Association},
  year    = {2015},
  volume  = {110},
  number  = {509},
  pages   = {175--194}
}

@article{kwon2026,
  author  = {Kwon, Soonwoo and Sun, Liyang},
  title   = {Estimating Treatment Effects Under Bounded Heterogeneity},
  journal = {Working paper, arXiv:2510.05454},
  year    = {2026}
}

@article{armstrong2018,
  author  = {Armstrong, Timothy B. and Koles\'ar, Michal},
  title   = {Optimal Inference in a Class of Regression Models},
  journal = {Econometrica},
  year    = {2018},
  volume  = {86},
  number  = {2},
  pages   = {655--683}
}

@article{armstrong2025,
  author  = {Armstrong, Timothy B. and Kline, Patrick and Sun, Liyang},
  title   = {Adapting to Misspecification},
  journal = {Forthcoming at Econometrica},
  year    = {2025}
}

@article{ferguson1973,
  author  = {Ferguson, Thomas S.},
  title   = {A Bayesian Analysis of Some Nonparametric Problems},
  journal = {The Annals of Statistics},
  year    = {1973},
  volume  = {1},
  number  = {2},
  pages   = {209--230}
}

@article{antoniak1974,
  author  = {Antoniak, Charles E.},
  title   = {Mixtures of Dirichlet Processes with Applications to Bayesian Nonparametric Problems},
  journal = {The Annals of Statistics},
  year    = {1974},
  volume  = {2},
  number  = {6},
  pages   = {1152--1174}
}

@article{escobar1995,
  author  = {Escobar, Michael D. and West, Mike},
  title   = {Bayesian Density Estimation and Inference Using Mixtures},
  journal = {Journal of the American Statistical Association},
  year    = {1995},
  volume  = {90},
  number  = {430},
  pages   = {577--588}
}

@article{muller1996,
  author  = {M\"uller, Peter and Erkanli, Alaattin and West, Mike},
  title   = {Bayesian Curve Fitting Using Multivariate Normal Mixtures},
  journal = {Biometrika},
  year    = {1996},
  volume  = {83},
  number  = {1},
  pages   = {67--79}
}

@article{neal2000,
  author  = {Neal, Radford M.},
  title   = {Markov Chain Sampling Methods for Dirichlet Process Mixture Models},
  journal = {Journal of Computational and Graphical Statistics},
  year    = {2000},
  volume  = {9},
  number  = {2},
  pages   = {249--265}
}

@article{muller2004,
  author  = {M\"uller, Peter and Quintana, Fernando A.},
  title   = {Nonparametric Bayesian Data Analysis},
  journal = {Statistical Science},
  year    = {2004},
  volume  = {19},
  number  = {1},
  pages   = {95--110}
}

@article{miller2018,
  author  = {Miller, Jeffrey W. and Harrison, Matthew T.},
  title   = {Mixture Models with a Prior on the Number of Components},
  journal = {Journal of the American Statistical Association},
  year    = {2018},
  volume  = {113},
  number  = {521},
  pages   = {340--356}
}

@article{hartigan1990,
  author  = {Hartigan, John A.},
  title   = {Partition Models},
  journal = {Communications in Statistics - Theory and Methods},
  year    = {1990},
  volume  = {19},
  number  = {8},
  pages   = {2745--2756}
}

@article{schwarz1978,
  author  = {Schwarz, Gideon},
  title   = {Estimating the Dimension of a Model},
  journal = {The Annals of Statistics},
  year    = {1978},
  volume  = {6},
  number  = {2},
  pages   = {461--464}
}

@article{Tibshirani2005,
  author  = {Tibshirani, Robert and Saunders, Michael and Rosset, Saharon and Zhu, Ji and Knight, Keith},
  title   = {Sparsity and Smoothness via the Fused Lasso},
  journal = {Journal of the Royal Statistical Society: Series B},
  year    = {2005},
  volume  = {67},
  number  = {1},
  pages   = {91--108}
}

@article{cengiz2019,
  author  = {Cengiz, Doruk and Dube, Arindrajit and Lindner, Attila and Zipperer, Ben},
  title   = {The Effect of Minimum Wages on Low-Wage Jobs},
  journal = {The Quarterly Journal of Economics},
  year    = {2019},
  volume  = {134},
  number  = {3},
  pages   = {1405--1454}
}

@article{wade2018,
  author  = {Wade, Sara and Ghahramani, Zoubin},
  title   = {Bayesian Cluster Analysis: Point Estimation and Credible Balls (with Discussion)},
  journal = {Bayesian Analysis},
  year    = {2018}, volume = {13}, number = {2}, pages = {559--626}
}

@article{lau2007,
  author  = {Lau, John W. and Green, Peter J.},
  title   = {Bayesian Model-Based Clustering Procedures},
  journal = {Journal of Computational and Graphical Statistics},
  year    = {2007}, volume = {16}, number = {3}, pages = {526--558}
}

@article{leeb2005,
  author  = {Leeb, Hannes and P\"otscher, Benedikt M.},
  title   = {Model Selection and Inference: Facts and Fiction},
  journal = {Econometric Theory},
  year    = {2005}, volume = {21}, number = {1}, pages = {21--59}
}

@article{rinaldo2019,
  author  = {Rinaldo, Alessandro and Wasserman, Larry and G'Sell, Max},
  title   = {Bootstrapping and Sample Splitting for High-Dimensional, Assumption-Lean Inference},
  journal = {The Annals of Statistics},
  year    = {2019}, volume = {47}, number = {6}, pages = {3438--3469}
}

@article{fithian2014,
  author  = {Fithian, William and Sun, Dennis and Taylor, Jonathan},
  title   = {Optimal Inference After Model Selection},
  journal = {arXiv:1410.2597},
  year    = {2014}
}

@article{wang2018,
  author  = {Wang, Wuyi and Phillips, Peter C. B. and Su, Liangjun},
  title   = {Homogeneity Pursuit in Panel Data Models: Theory and Application},
  journal = {Journal of Applied Econometrics},
  year    = {2018}, volume = {33}, number = {6}, pages = {797--815}
}

@article{lu2017,
  author  = {Lu, Xun and Su, Liangjun},
  title   = {Determining the Number of Groups in Latent Panel Structures with an Application to Income Democracy},
  journal = {Quantitative Economics},
  year    = {2017}, volume = {8}, number = {3}, pages = {729--760}
}

@article{roth2023,
  author  = {Roth, Jonathan and Sant'Anna, Pedro H. C. and Bilinski, Alyssa and Poe, John},
  title   = {What's Trending in Difference-in-Differences? A Synthesis of the Recent Econometrics Literature},
  journal = {Journal of Econometrics},
  year    = {2023}, volume = {235}, number = {2}, pages = {2218--2244}
}

@article{rambachan2023,
  author  = {Rambachan, Ashesh and Roth, Jonathan},
  title   = {A More Credible Approach to Parallel Trends},
  journal = {The Review of Economic Studies},
  year    = {2023}, volume = {90}, number = {5}, pages = {2555--2591}
}

@article{barry1992,
  author  = {Barry, Daniel and Hartigan, John A.},
  title   = {Product Partition Models for Change Point Problems},
  journal = {The Annals of Statistics},
  year    = {1992}, volume = {20}, number = {1}, pages = {260--279}
}

@article{shen2010,
  author  = {Shen, Xiaotong and Huang, Hsin-Cheng},
  title   = {Grouping Pursuit Through a Regularization Solution Surface},
  journal = {Journal of the American Statistical Association},
  year    = {2010}, volume = {105}, number = {490}, pages = {727--739}
}

@article{hill2011,
  author  = {Hill, Jennifer L.},
  title   = {Bayesian Nonparametric Modeling for Causal Inference},
  journal = {Journal of Computational and Graphical Statistics},
  year    = {2011}, volume = {20}, number = {1}, pages = {217--240}
}

@article{fisher1958,
  author  = {Fisher, Walter D.},
  title   = {On Grouping for Maximum Homogeneity},
  journal = {Journal of the American Statistical Association},
  year    = {1958}, volume = {53}, number = {284}, pages = {789--798}
}

\newpage
\appendix

\section{Identification under Limited Overlap}
\label{app:identification}

In some staggered designs a cohort-time cell lacks a clean comparison group, for
example when every unit is eventually treated and the last-treated cells have no
not-yet-treated or never-treated controls. Such a cell $k$ then contributes no
independent identifying variation, its within-transformed column $\tilde{D}_k$ is
collinear with the others, $\Dtd'\Dtd$ is singular, and the flexible estimator
$\hat{\tau}_{k,\text{flex}}$ is undefined for that coordinate. This is the
staggered-DiD counterpart of the lack-of-overlap problem in cross-sectional
designs, where the fully interacted (``long'') regression is not well defined
\citep{kwon2026}. Partial homogeneity offers a route through it, by letting the
unidentified cell inherit its group's effect. Let
$\mathcal{U} \subseteq \{1,\ldots,K\}$ be the set of cells not identified by the
flexible model and $\mathcal{U}^c$ its complement.

\begin{proposition}[Identification via partial homogeneity]
  \label{prop:identification}
  Suppose the true partition is $\mathcal{P} = \{C_1,\ldots,C_m\}$ and each group
  contains at least one identified cell, $C_p \cap \mathcal{U}^c \neq \emptyset$
  for every $p$. Then under the restricted model \eqref{eq:within} the group
  effect $\phi_p$ is identified for every $p$, and hence $\tau_k = \phi_p$ is
  identified for every $k \in C_p$, including the unidentified cells
  $k \in C_p \cap \mathcal{U}$.
\end{proposition}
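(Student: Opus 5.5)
Identification of $\phi$ under the restricted model \eqref{eq:within} is equivalent to the grouped design $\Dtd^{*}=\Dtd R$ having full column rank $m$, where $R$ is the $K\times m$ group-indicator matrix of $\mathcal{P}$. Equivalently, $\Dtd R\,c=0$ must force $c=0$. So the plan is to show that the null space of $\Dtd R$ is trivial. Once that holds, $\phi=(\Dtd^{*\prime}\Dtd^{*})^{-1}\Dtd^{*\prime}\E[\tilde Y]$ is a unique function of the data distribution, and $\tau_k=\phi_p$ for every $k\in C_p$ follows immediately from \eqref{eq:ph_partition}.

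The first step is to make precise what ``identified cell'' means. Say cell $k$ is identified by the flexible model if $\tau_k$ is estimable, i.e.\ $e_k$ lies in the row space of $\Dtd$. Equivalently, every $v\in\ker(\Dtd)$ has $v_k=0$. Thus $\mathcal{U}^c=\{k:\ v_k=0 \text{ for all } v\in\ker\Dtd\}$. This is the standard estimability characterization, and it matches the paper's description that an unidentified cell's column is collinear with the others.

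The second step is the core argument. Take $c\in\mathbb{R}^m$ with $\Dtd R c=0$ and set $v=Rc$, so that $v_k=c_p$ for $k\in C_p$. Then $v\in\ker\Dtd$. For each $p$, pick an identified cell $k_p\in C_p\cap\mathcal{U}^c$, which exists by hypothesis. Since $v\in\ker\Dtd$ and $k_p$ is identified, $v_{k_p}=0$, and hence $c_p=0$. This holds for every $p$, so $c=0$ and $\Dtd^{*\prime}\Dtd^{*}$ is invertible.

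I expect the main obstacle to be the definitional step. The proposition only informally calls a cell unidentified when it ``contributes no independent identifying variation.'' The argument needs the kernel characterization, not merely that $\tilde D_k$ is a nonzero column. I would state this definition explicitly and argue that it is the right notion. An identified cell's coordinate is pinned down by any kernel direction, so the group constraint transfers that pinning to the whole group, because $R$ ties each unidentified coordinate to an identified one.

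Finally, I would remark that this identification rests entirely on the assumed partition. The data carry no information on whether $k\in C_p\cap\mathcal{U}$ truly shares $\phi_p$. This is consistent with the paper's caveat that the grouping is untestable for that cell.
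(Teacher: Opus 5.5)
Your proposal is correct and follows the paper's proof essentially step for step. The paper also reduces identification to full column rank of the grouped design, lifts a null vector of $\Dtd^{*}$ to a vector in $\ker\Dtd$ that is constant within groups, and uses the characterization of identified cells as coordinates on which every null-space vector vanishes to force each group coefficient to zero. Your row-space (estimability) definition of an identified cell is equivalent to the paper's ``column not in the span of the others'' formulation.
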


\begin{proof}
The group effect $\phi$ is identified in \eqref{eq:within} exactly when the
grouped design $\Dtd^* = [\Dtd_1,\ldots,\Dtd_m]$, with
$\Dtd_p = \sum_{k\in C_p}\tilde{D}_k$, has full column rank. Suppose
$\sum_p a_p \Dtd_p = 0$ and set $b_k = a_{p(k)}$, the coefficient of the group
containing $k$; then $\sum_k b_k \tilde{D}_k = 0$, so $b$ lies in the null space
of the flexible design $\tilde{D}$. A cell $k$ is identified
($k\in\mathcal{U}^c$) precisely when its column is not in the span of the others,
equivalently when every null-space vector of $\tilde{D}$ vanishes in coordinate
$k$. Each group $C_p$ contains such a cell $k_p^{*}\in C_p\cap\mathcal{U}^c$, so
$b_{k_p^{*}} = a_p = 0$; since this holds for every $p$, $a = 0$ and the group
columns are linearly independent. Hence $\Dtd^{*\prime}\Dtd^*$ is invertible,
each $\phi_p$ is estimable from \eqref{eq:oracle_var}, and assigning
$\tau_k = \phi_p$ identifies every cell in the group, including the unidentified
$k\in C_p\cap\mathcal{U}$.
\end{proof}

Two caveats bound the result's practical reach. First, the identifying content
comes entirely from the restriction, not the data. An unidentified cell carries
no independent variation, so which group it belongs to cannot be learned from the
outcomes, and Proposition~\ref{prop:identification} is an identification result
conditional on the grouping whose recovered effect is only as credible as the
(untestable) assignment. Second, the condition that each group contains an identified cell is exactly what
the result needs, and it is not tied to orthogonality: the argument uses only that
each group holds a cell outside the null space of $\tilde{D}$, so it covers
arbitrary collinear designs, and it is sharp, since a group made up entirely of
unidentified cells leaves its effect $\phi_p$ unidentified. Orthogonality only
simplifies the picture, an unidentified cell is then a zero column that contributes
nothing to its group, whereas under general collinearity the cell folds its
dependent variation into the group regressor and $\phi_p$ is identified but defined
to include it. The Bayesian estimator's co-clustering probabilities
describe this assignment, but for a genuinely unidentified cell they are governed
by the prior rather than by evidence, so they should be read as prior belief
rather than as data.

\section{Computing the $\ell_0$-PH Estimator}
\label{app:algorithm}

The $\ell_0$-PH estimator is computed in two steps, an agglomerative search (a greedy sequence of pairwise merges) for the partition and an ordinary least squares re-estimation of the grouped model under it. This appendix gives both and shows why the re-estimation is necessary.

  \paragraph{Step 1: Partition recovery.} Exact minimization of $Q(\mathcal{P})$ over all partitions would require searching the
  partitions of $\{1,\dots,K\}$, whose number is the Bell number $B_K$ and grows faster than exponentially, so exhaustive
  evaluation is infeasible beyond small $K$; we instead use an agglomerative algorithm that runs in $O(K^3)$ time, in the spirit
  of classical optimal-grouping procedures \citep{fisher1958}. Initialise with $K$ singleton groups, each assigned the flexible
  CATT estimate $\hat{\tau}_{k,\mathrm{flex}}$ and effective sample size $n_k = \|\tilde{D}_k\|^2$. At each iteration, compute
  for every pair of current groups $(A, B)$
  \begin{equation}
    \Delta\mathrm{Obj}(A, B) \;=\; \frac{n_A\, n_B}{n_A + n_B}\bigl(\hat{\tau}_A - \hat{\tau}_B\bigr)^2 \;-\; \lambda \cdot |A|
  \cdot |B|,
    \label{eq:deltaobj}
  \end{equation}
  where $n_A$, $n_B$ and $\hat{\tau}_A$, $\hat{\tau}_B$ are the current effective sample sizes and pooled estimates for groups
  $A$ and $B$ respectively. If $\min_{A \neq B}\,\Delta\mathrm{Obj}(A,B) \geq 0$, terminate and return the current partition
  $\hat{\mathcal{P}}$. Otherwise merge the minimizing pair, updating
  \begin{equation}
    \hat{\tau}_{A \cup B} \;=\; \frac{n_A\,\hat{\tau}_A + n_B\,\hat{\tau}_B}{n_A + n_B}, \qquad n_{A \cup B} \;=\; n_A + n_B.
    \label{eq:pooled}
  \end{equation}
  The algorithm terminates after at most $K - 1$ merges.

  \paragraph{Step 2: Re-estimation under the discovered partition.} Let $\hat{\mathcal{P}} = \{\hat{C}_1, \ldots,
  \hat{C}_{\hat{m}}\}$ denote the partition returned by Step~1. For each group $\hat{C}_p$, define the group regressor
  $\tilde{D}_p = \sum_{k \in \hat{C}_p} \tilde{D}_k$ and collect the $\hat{m}$ group regressors into the restricted
  within-transformed design matrix $\tilde{D}^* = [\tilde{D}_1, \ldots, \tilde{D}_{\hat{m}}]$. Estimate the restricted model
  \begin{equation}
    \tilde{Y} \;=\; \tilde{D}^*\phi + \tilde{\varepsilon}, \qquad \tilde{\varepsilon} \sim \bigl(0,\, \sigma^2 I_r\bigr),
    \label{eq:step2}
  \end{equation}
  by OLS to obtain $\hat{\phi} = (\tilde{D}^{*\prime}\tilde{D}^*)^{-1}\tilde{D}^{*\prime}\tilde{Y}$. The $\ell_0$-PH CATT estimate
  for cell $k$ is $\hat{\tau}_k^{\ell_0} = \hat{\phi}_p$ for all $k \in \hat{C}_p$. Equivalently, \eqref{eq:step2} corresponds to
   the TWFE regression
  \begin{equation}
    Y_{igt} \;=\; \alpha_i + \gamma_t + \sum_{p=1}^{\hat{m}} \phi_p \sum_{k \in \hat{C}_p} D_{k,igt} + \varepsilon_{it},
    \label{eq:step2twfe}
  \end{equation}
  estimated on the full panel. We refer to the complete two-step procedure as the $\ell_0$-PH estimator.

  \paragraph{Why re-estimation is essential.} A natural one-step alternative uses the greedy-pooled values \eqref{eq:pooled}
  directly as CATT estimates. We show that this approach is generically inefficient and that the OLS re-estimation in Step~2 is a
   necessary correction, not a refinement.

  The greedy update $\hat{\tau}_{A \cup B}$ in \eqref{eq:pooled} is the weighted mean of the flexible estimates with weights $n_k
   = \|\tilde{D}_k\|^2$. Under the orthogonality condition $\tilde{D}_j'\tilde{D}_k = 0$ for $j \neq k$, which holds approximately in
  balanced panels in which each unit belongs to exactly one cohort, $\tilde{D}^{*\prime}\tilde{D}^*$ is diagonal with $p$-th
  diagonal entry $\|\tilde{D}_p\|^2 = \sum_{k \in \hat{C}_p} n_k$. In this case, the OLS estimate from Step~2 reduces to
  \begin{equation}
    \hat{\phi}_p \;=\; \frac{\tilde{D}_p'\tilde{Y}}{\|\tilde{D}_p\|^2} \;=\; \frac{\displaystyle\sum_{k \in \hat{C}_p}
  n_k\,\hat{\tau}_{k,\mathrm{flex}}}{\displaystyle\sum_{k \in \hat{C}_p} n_k},
    \label{eq:phorthog}
  \end{equation}
  which coincides with the greedy-pooled value \eqref{eq:pooled}. By Proposition~\ref{prop:oracle_dom}, $\hat{\phi}_p$ is therefore the BLUE of the
  group treatment effect, and Steps~1 and 2 deliver the same answer when the design is orthogonal.

  In general panels, however, cells belonging to the same treatment cohort share unit fixed effects, and cells falling on the
  same calendar period share time fixed effects. Both sources of co-occurrence produce non-zero off-diagonal elements in
  $\tilde{D}^{*\prime}\tilde{D}^*$, so the OLS estimate from Step~2,
  \begin{equation}
    \hat{\phi} \;=\; \bigl(\tilde{D}^{*\prime}\tilde{D}^*\bigr)^{-1}\tilde{D}^{*\prime}\tilde{Y},
    \label{eq:phgeneral}
  \end{equation}
  no longer reduces to the weighted mean \eqref{eq:pooled}. Three consequences follow. First, the greedy-pooled estimator does
  not satisfy the normal equations of \eqref{eq:step2} and therefore fails to be BLUE. Second, the $\Delta\mathrm{Obj}$ criterion
   \eqref{eq:deltaobj} is an approximation, in that it treats $\tilde{D}^{*\prime}\tilde{D}^*$ as diagonal when computing the RSS cost of
   each candidate merge, so partition selection in Step~1 rests on an approximate objective. Third, in a non-orthogonal design the greedy-pooled mean is
  inefficient rather than inconsistent: under a correct merge $\tau_A^* = \tau_B^* = \phi_p$, so its probability limit
  $(n_A\tau_A^* + n_B\tau_B^*)/(n_A + n_B)$ equals $\phi_p$ whatever the design, but it is not the cross-product-weighted
  combination that \eqref{eq:step2} solves for and so is not efficient. Inconsistency for the cell effects arises only when cells
  with distinct true effects are merged, a Step-1 selection error, not from non-orthogonality.

  Step~2 corrects the two estimation shortcomings exactly and for any design. Given the partition $\hat{\mathcal{P}}$ returned by
  Step~1, $\hat{\phi}$ in \eqref{eq:phgeneral} solves the exact grouped normal equations, so it is the best linear unbiased
  estimator of the group effects under $\hat{\mathcal{P}}$ whether or not the design is orthogonal, with variance
  $\sigma^2[(\tilde{D}^{*\prime}\tilde{D}^*)^{-1}]_{pp}$; under orthogonality this is $\sigma^2/\sum_{k \in \hat{C}_p} n_k$,
  strictly below $\sigma^2/n_k$ for every $k \in \hat{C}_p$ with $|\hat{C}_p| \geq 2$, the ratio $n_k/\sum_{j \in \hat{C}_p} n_j <
  1$ of \eqref{eq:var_ratio}, and the variance reduction grows with the group size. The one-step greedy-pooled estimator attains
  this fit only under exact orthogonality; Step~2 attains it for any design. The remaining shortcoming is of a different kind: it
  concerns which merges Step~1 selects, and Step~2 re-estimates within $\hat{\mathcal{P}}$ rather than reopening the approximate
  search, so it takes the partition as given. The unbiased efficiency dominance of Proposition~\ref{prop:oracle_dom} is therefore
  inherited only when $\hat{\mathcal{P}}$ is the true partition; a false merge trades the variance gain for a specification bias
  that re-estimation cannot remove (Remark~\ref{rem:postselection}), the low-separation behavior documented in
  Section~\ref{sec:sim}.

  The Bayesian estimator underscores the same point. Given a partition, its posterior mean of $\phi$ in \eqref{eq:bayes:phipost} solves the same grouped normal equations as \eqref{eq:step2} and, in the diffuse limit $\sigma_0^2 \to \infty$, equals $\hat{\phi}$ in \eqref{eq:phgeneral}. Both target the PH estimator under the discovered partition, whereas the one-step greedy-pooled mean does not.

\section{Heterogeneous Errors}
\label{app:hetero}

The model of Section~\ref{sec:bayes} assumes a scalar error variance, $\tilde\varepsilon \mid \tilde D \sim N(0,\sigma^2 I_r)$ in \eqref{eq:bayes:lik}. This appendix relaxes that assumption to a heterogeneous error covariance and records the few quantities that change, together with the augmented Gibbs sampler. Neither the partition prior \eqref{eq:bayes:partprior} nor the sampler's cluster-assignment logic is affected. The only changes are the design moments that enter the marginal likelihood and the group-effect posterior, and one additional Gibbs block that updates the variance parameters.

\paragraph{The modified model.} Replace the scalar covariance in \eqref{eq:bayes:lik} with a structured positive-definite matrix on the transformed model,
\begin{equation}
  \tilde\varepsilon \mid \tilde D \sim N(0,\Omega), \qquad \Omega = \bigoplus_{j=1}^{J}\sigma_j^2 I_{n_j},
  \label{eq:het:cov}
\end{equation}
where the within-transformed observations are partitioned into $J$ variance blocks $\mathcal B_1,\dots,\mathcal B_J$ of sizes $n_1,\dots,n_J$, indexed by a known label $v(i)\in\{1,\dots,J\}$. The blocks collect observations expected to share a noise level, for instance by cohort, by treated versus comparison status, or by an observed cluster. The homoskedastic model is the special case $J=1$. A fully general non-diagonal $\Omega$, such as the clustered influence-function covariance already used in the empirical applications (Remark~\ref{rem:sigma}), enters the formulas below in the same way through $\Omega^{-1}$; a covariance that is block-diagonal in the original observations residualizes to such a dense $\Omega$, so the per-block update \eqref{eq:het:sigma} is specific to the block-diagonal working parameterization, while the marginal likelihood and group-effect posterior below hold for any $\Omega$. We complete the model with an independent conjugate prior on each variance level,
\begin{equation}
  \sigma_j^2 \sim \mathrm{Inv\text{-}Gamma}(a_0,b_0), \qquad j=1,\dots,J,
  \label{eq:het:prior}
\end{equation}
the natural generalization of \eqref{eq:bayes:sigma}.

\paragraph{What changes in the theory.} Every appearance of the precision $\sigma^{-2}I_r$ becomes $\Omega^{-1}$. Writing the two design moments in the $\Omega^{-1}$ metric,
\begin{equation}
  S \;=\; \tilde D^{*\prime}\Omega^{-1}\tilde D^{*}, \qquad u \;=\; \tilde D^{*\prime}\Omega^{-1}\tilde Y,
  \label{eq:het:moments}
\end{equation}
the group effects still integrate out in closed form. Conditional on $\Omega$, the marginal likelihood \eqref{eq:bayes:marg} becomes $\tilde y\mid\mathcal P \sim N(\mu_0\tilde D^{*}\mathbf 1_m,\ \Omega+\sigma_0^2\tilde D^{*}\tilde D^{*\prime})$, and the Woodbury reduction \eqref{eq:bayes:logmarg}, dropping terms constant in $\mathcal P$, reads
\begin{equation}
  \log p(\tilde y\mid\mathcal P,\Omega) \;\propto\; -\tfrac12\log\bigl|\,I_m+\sigma_0^2 S\,\bigr| \;+\; \tfrac{\sigma_0^2}{2}\,u_0'\bigl(I_m+\sigma_0^2 S\bigr)^{-1}u_0, \qquad u_0 \;=\; u - \mu_0\, S\mathbf 1_m.
  \label{eq:het:logmarg}
\end{equation}
The group-effect posterior \eqref{eq:bayes:phipost} becomes $\phi\mid\mathcal P,\Omega \sim N(\mu^*,\Sigma^*)$ with
\begin{equation}
  \Sigma^* \;=\; \bigl(S + I_m/\sigma_0^2\bigr)^{-1}, \qquad \mu^* \;=\; \Sigma^*\bigl(u + \mu_0\mathbf 1_m/\sigma_0^2\bigr),
  \label{eq:het:phipost}
\end{equation}
which is the ridge-regularized GLS estimator of the group effects. Setting $\Omega=\sigma^2 I_r$ gives $S=\tilde D^{*\prime}\tilde D^{*}/\sigma^2$ and $u=\tilde D^{*\prime}\tilde Y/\sigma^2$ and recovers \eqref{eq:bayes:logmarg} and \eqref{eq:bayes:phipost} exactly. The computational cost is unchanged, the $m\times m$ matrix $S$ still governs the calculation, and forming it costs one weighting of the design by $\Omega^{-1}$, which is $O(NT)$ for the diagonal case \eqref{eq:het:cov}.

The $\ell_0$ correspondence of Section~\ref{sec:bayes:sigmafixed} survives as long as the covariance is held fixed. With $\Omega$ known, the profile over the group effects replaces the ordinary residual sum of squares by its $\Omega^{-1}$-weighted (GLS) counterpart $(\tilde Y-\Dtd^*\hat\phi)'\Omega^{-1}(\tilde Y-\Dtd^*\hat\phi)$, while $\log|\Omega|$ is constant across partitions and drops out, so the MAP partition still minimizes a residual sum of squares plus a fixed penalty of $2\lambda_0$ per cross-group pair, now with the GLS fit in place of the least-squares fit. This is the relevant form when a plug-in covariance is used, as in the applications, where $\Omega$ is the fixed first-stage $\hat\Sigma$ of Lemma~\ref{lem:sufficiency}. What does not carry over is the case of an \emph{unknown} $\Omega$: once the variance levels are estimated, profiled, or integrated out, they enter through $\log|\Omega|$ and through their dependence on the fitted residuals, and the objective is no longer a fixed penalty added to a residual sum of squares. Heterogeneous errors with unknown variances are therefore handled inside the sampler, which averages over the variance levels along with the partition.

\paragraph{The augmented Gibbs sampler.} The collapsed sampler of Section~\ref{sec:bayes:gibbs} gains one block. Given a current $\Omega$, one sweep runs the following three steps.
\begin{enumerate}
\item For each CATT $k$, remove $k$ from its cluster, delete the cluster if it is empty, and draw $z_k$ from the conditional probabilities \eqref{eq:bayes:condprob} with the marginal likelihoods now evaluated from \eqref{eq:het:logmarg}.
\item Draw the group effects $\phi \sim N(\mu^*,\Sigma^*)$ from \eqref{eq:het:phipost}.
\item Form the within-transformed residuals $\hat{\tilde\varepsilon} = \tilde Y - \tilde D^{*}\phi$ under the current partition, and for each variance block draw
\begin{equation}
  \sigma_j^2 \mid \phi,\mathcal P,\tilde Y \;\sim\; \mathrm{Inv\text{-}Gamma}\!\Bigl(a_0+\tfrac{n_j}{2},\ b_0+\tfrac12\!\!\sum_{i\,:\,v(i)=j}\!\!\hat{\tilde\varepsilon}_i^{\,2}\Bigr),
  \label{eq:het:sigma}
\end{equation}
then assemble $\Omega=\bigoplus_j\sigma_j^2 I_{n_j}$ for the next sweep.
\end{enumerate}
Reinstating $\phi$ in Step~2 before the variance draw is what makes Step~3 conjugate, exactly as the original sampler reinstates $\phi$ to report the group effects. All posterior summaries of Section~\ref{sec:bayes:gibbs}, the CATT posterior mean \eqref{eq:bayes:tauhat}, the co-clustering matrix \eqref{eq:bayes:cocluster}, and the aggregated estimands, are formed from the augmented draws $\{z^{(s)},\phi^{(s)},\Omega^{(s)}\}$ without further change, so the reported credible intervals now propagate uncertainty in the error variances alongside the partition. When a plug-in covariance is preferred, fixing $\Omega$ at an estimate $\hat\Omega$ and omitting Step~3 recovers a single GLS-posterior sampler, the direct analog of the fixed-$\hat\sigma^2$ special case of Section~\ref{sec:bayes:sigmafixed}.

\section{Fixed versus Random $\sigma^{2}$}
\label{app:sigma}

The main text draws $\sigma^{2}$ from its inverse-gamma full conditional each sweep, the full Bayesian model of Section~\ref{sec:bayes:model}. Holding $\sigma^{2}$ fixed at the plug-in $\hat\sigma^{2}$ is the special case that yields the exact $\ell_0$ correspondence (Section~\ref{sec:bayes:sigmafixed}). This appendix shows the two give the same inference. We rerun the coverage experiment of Section~\ref{sec:sim:cov} at $m^{\ast}=6$ and $\delta=6$ over $500$ replications, running two separate collapsed chains on each replication, one that draws $\sigma^{2}$ from its full conditional each sweep and one that holds it at the plug-in $\hat\sigma^{2}$. Each evaluates the cluster-assignment marginal likelihood \eqref{eq:bayes:logmarg} at its own $\sigma^{2}$, so the two partition posteriors are targeted separately, and the credible intervals of Table~\ref{tab:sigma} come from the matching chain.

\begin{table}[H]
  \centering
  \caption{Coverage and average length of nominal $95\%$ DP credible intervals
  under a random (full Bayesian) and a fixed (plug-in) error variance,
  $m^{\ast}=6$, $\delta=6$, $500$ replications.}
  \label{tab:sigma}
  \small
  \begin{tabular}{lcccc}
    \toprule
    & \multicolumn{2}{c}{CATT} & \multicolumn{2}{c}{ATT} \\
    \cmidrule(lr){2-3}\cmidrule(lr){4-5}
    $\sigma^{2}$ treatment & Coverage & Length & Coverage & Length \\
    \midrule
    Random (full Bayesian) & 0.93 & 0.20 & 0.92 & 0.11 \\
    Fixed (plug-in)        & 0.93 & 0.20 & 0.93 & 0.11 \\
    \bottomrule
  \end{tabular}
\end{table}

Table~\ref{tab:sigma} reports the result. The two samplers deliver essentially the same coverage and interval length, for the cohort-time effects and for the overall ATT. The reason is the one given in Section~\ref{sec:bayes:sigmafixed}, the residual degrees of freedom $NT - N - T + 1 - K$ are large, so $\sigma^{2}$ is pinned down and the estimation error a prior on it would propagate is negligible. The choice is therefore immaterial for the reported results, and we use the fixed-$\sigma^{2}$ case in the main text only to make the $\ell_0$ connection concrete.

\section{MCMC Diagnostics}
\label{app:mcmc}

This appendix documents that the reported credible intervals and co-clustering
probabilities reflect the stationary posterior rather than transient short-chain
behavior. We validate the collapsed Gibbs sampler in two ways: against an exact
enumeration of the posterior in the small-$K$ application, and through standard
multi-chain convergence diagnostics in the simulation.

\paragraph{Exact-enumeration benchmark ($K=7$).} In the \citet{callaway2021}
application the seven cohort-time cells admit only $B_7 = 877$ partitions, so the
exact partition posterior can be computed in closed form: for each partition we
evaluate the collapsed marginal likelihood under the exact within-cell covariance
and the CRP prior, and normalize over all $877$. Table~\ref{tab:mcmc} compares the
sampler (one chain of $20{,}000$ sweeps after $2{,}000$ burn-in) against this exact
posterior. The two agree to within Monte Carlo error: the posterior expected
number of groups matches ($2.20$), the distribution of the group count agrees to
within $0.004$ at every value, the overall-ATT posterior mean and $95\%$ interval
coincide, and the largest discrepancy over all $21$ pairwise co-clustering
probabilities is $0.012$. The exact within-cell covariance is what makes this
agreement possible: a diagonal (independence) approximation to the cluster-assignment
moves reproduces the same aggregates but misstates individual co-clustering
probabilities by up to $0.39$ in this correlated design, so we use the exact
covariance throughout (Remark~\ref{rem:sigma}). Figure~\ref{fig:mcmc} (left) plots
the sampler's co-clustering probabilities against the exact values.

\begin{table}[H]
  \centering
  \caption{Exact-enumeration benchmark for the \citet{callaway2021} application
  ($K=7$, $B_7=877$ partitions). The collapsed Gibbs sampler against the exact
  partition posterior computed in closed form.}
  \label{tab:mcmc}
  \small
  \begin{tabular}{lcc}
    \toprule
    Quantity & Gibbs sampler & Exact enumeration \\
    \midrule
    Posterior E[\# groups]                 & $2.20$ & $2.20$ \\
    Overall-ATT posterior mean             & $-0.024$ & $-0.024$ \\
    Overall-ATT $95\%$ interval            & $[-0.047,\,-0.000]$ & $[-0.048,\,-0.000]$ \\
    Max pairwise co-clustering gap vs.\ exact & $0.012$ & --- \\
    \bottomrule
  \end{tabular}
\end{table}

\paragraph{Multi-chain diagnostics ($K=18$).} In the simulation, where $K=18$
makes enumeration infeasible, we run four chains from dispersed initializations,
all singletons, one pooled group, and two random partitions into nine and three
groups, at $m^{\ast}=6$, $\delta=6$. The Gelman--Rubin statistic is $\hat R = 1.00$
for both the overall ATT and the number of groups, and the effective sample size
across the four chains is about $11{,}400$ for the ATT and $8{,}600$ for the group
count. Figure~\ref{fig:mcmc} (right) shows the cluster-count traces converging to a
common region within the burn-in from all four starts. As a check on chain length,
the $100$-draw setting used in the main experiments returns an overall-ATT interval
close to that of a $3{,}000$-draw chain, with a stable center and an upper endpoint
that moves only from about $0.078$ to $0.085$, confirming that the reported
quantities are not artifacts of chain length.

\begin{figure}[H]
  \centering
  \begin{subfigure}{0.49\linewidth}\includegraphics[width=\linewidth]{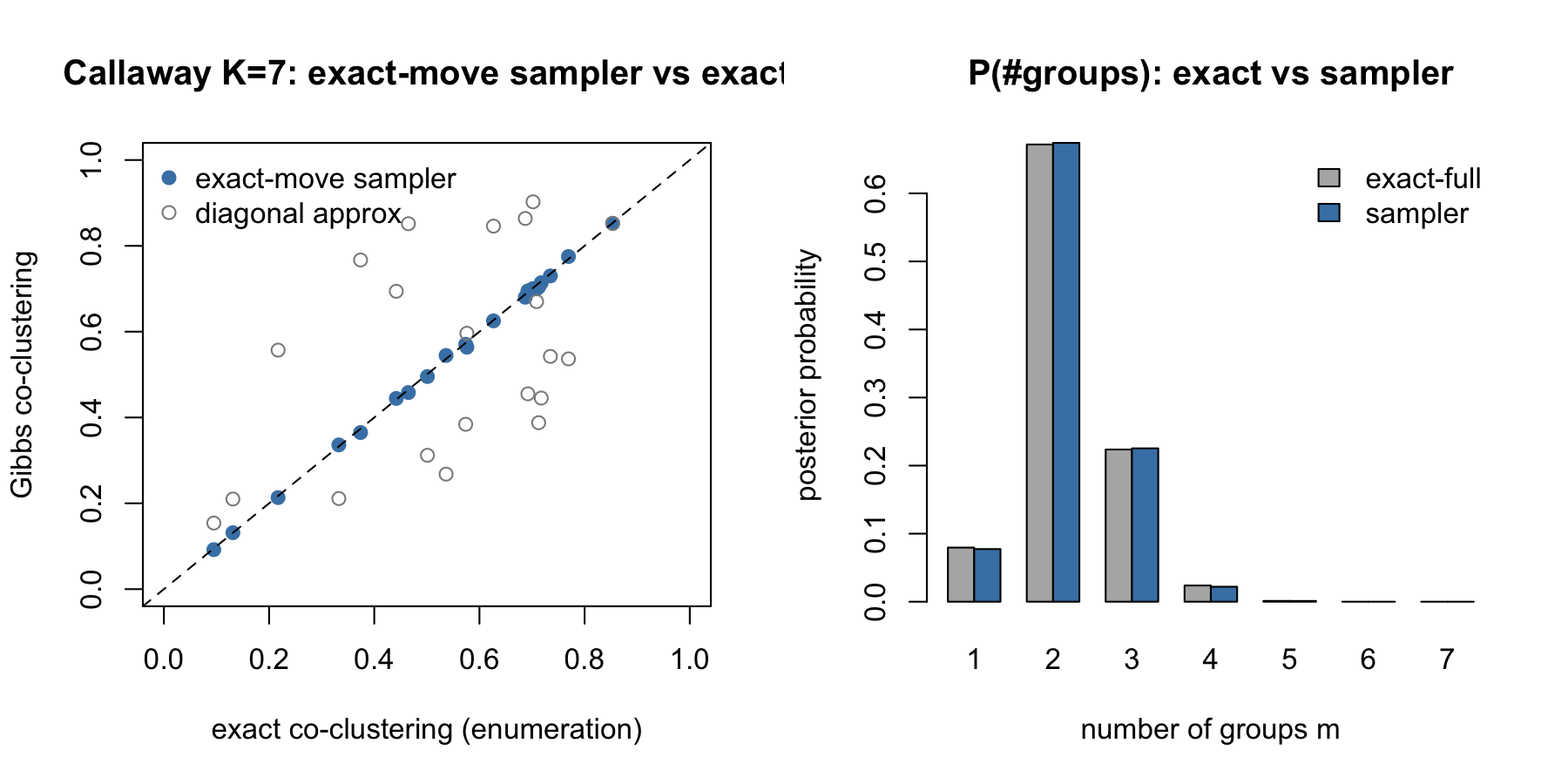}\end{subfigure}
  \hfill
  \begin{subfigure}{0.49\linewidth}\includegraphics[width=\linewidth]{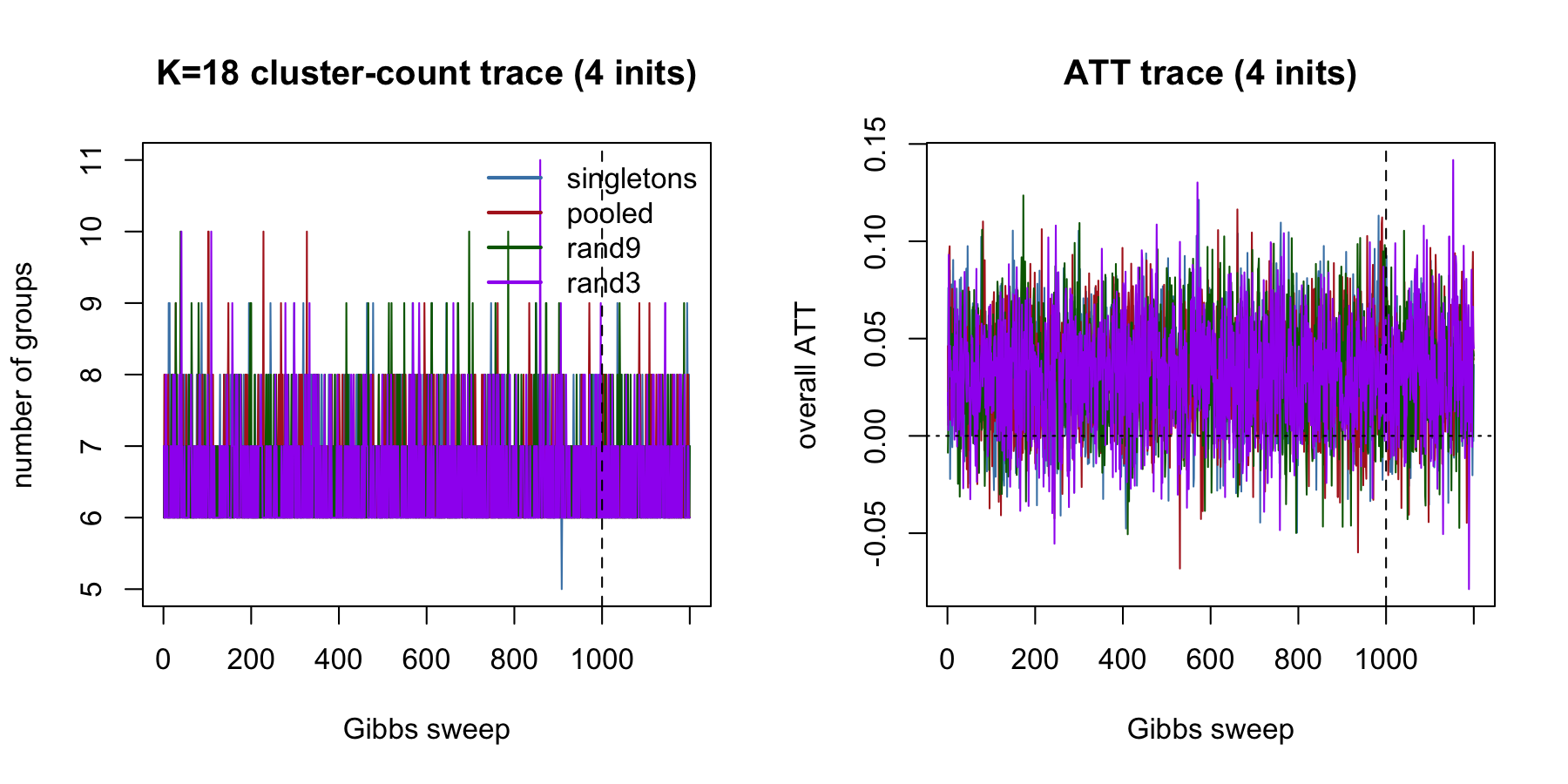}\end{subfigure}
  \caption{Left: Callaway application ($K=7$), Gibbs co-clustering probabilities
  against the exact enumeration (filled points on the $45^{\circ}$ line); the
  diagonal approximation (open points) is shown for contrast. Right: simulation
  ($K=18$), cluster-count traces from four dispersed initializations converging
  within the burn-in (dashed line).}
  \label{fig:mcmc}
\end{figure}

\end{document}